\documentclass[11pt]{article}

\usepackage[margin=1in]{geometry}
\usepackage[T1]{fontenc}
\usepackage[utf8]{inputenc}
\usepackage{lmodern}
\usepackage{microtype}
\usepackage{amsmath,amssymb,amsthm}
\usepackage{booktabs}
\usepackage{graphicx}
\usepackage{array}
\usepackage{tabularx}
\usepackage{enumitem}
\usepackage{xcolor}
\usepackage[hidelinks]{hyperref}
\hypersetup{
  pdftitle={SynthGuard-ReleaseBench: Locked-Audit Evidence for Synthetic Tabular Data Releases},
  pdfauthor={Jeffery Opoku; David Banahene},
  pdfsubject={Synthetic tabular data release auditing, use-specific certificates, and release governance},
  pdfkeywords={synthetic data; data release; audit design; differential privacy; membership inference; reproducibility; statistical benchmarking},
  pdfcreator={pdfLaTeX}
}
\usepackage[nameinlink,noabbrev]{cleveref}
\usepackage{placeins}

\newtheorem{proposition}{Proposition}
\newtheorem{corollary}[proposition]{Corollary}
\newtheorem{remark}{Remark}
\newcommand{\E}{\mathbb{E}}

\newcommand{\sg}{\textsc{SynthGuard}}
\newcommand{\eps}{\varepsilon}
\newcommand{\code}[1]{\texttt{\detokenize{#1}}}
\newcolumntype{Y}{>{\raggedright\arraybackslash}X}

\title{\textbf{SynthGuard-ReleaseBench: Locked-Audit Evidence for Synthetic Tabular Data Releases}}
\author{
Jeffery Opoku \qquad David Banahene\\[0.65em]
\small School of Mathematical and Statistical Science, University of Texas Rio Grande Valley\\
\small \href{mailto:opokujeffery5@gmail.com}{opokujeffery5@gmail.com}\\[0.35em]
\small Robert Stempel College of Public Health and Social Work, Florida International University\\
\small \href{mailto:abanahene54@gmail.com}{abanahene54@gmail.com}
}
\date{Preprint, July 2026}

\begin{document}
\maketitle

\begin{abstract}
Synthetic tabular data are often judged by realism, privacy, or downstream-task scores. Those scores do not answer whether a proposed release is supported for a named use, population, and threat model. We introduce \sg-ReleaseBench, an audit framework that locks the use, candidate panel, tolerances, and audit schedule before evaluation. It compares real-trained and synthetic-trained workflows on protected data, gives simultaneous finite-sample bounds for bounded loss gaps, requires controls, and keeps utility, empirical privacy risk, mechanism claims, and human release authority separate.

Across four American Community Survey studies, five non-ACS records, two chronological diagnostics, and a sealed prototype, the benchmark retains favorable, unfavorable, and excluded outcomes. Transparent baselines pass some locked audits; compact learned models fail under the declared budgets; a health-table case is excluded because its negative control passes. A post-audit scaling arm, repeated across three generation seeds, shows the same locked criterion admitting those learned models once they are fit on enough data while still rejecting a dependence-destroying control at every size, so the criterion discriminates rather than merely rejects; the same repetition withdraws a finer single-seed ordering.

The theory adds a pre-audit sample-size rule, variance-adaptive and anytime-valid certificates that tighten the bound two to ten times on the same locked evidence, a temporal certificate for time-ordered audits, and two lower bounds: ordinary bounded queries reconstruct a protected audit once the query budget reaches its size, and the panel-size correction is necessary rather than conservative. The contribution is a reproducible workflow for use-specific release evidence, not a claim that any generator is private, safe, or deployment-ready.
\end{abstract}

\noindent\textbf{Keywords:} synthetic data; data release; audit design; differential privacy; membership inference; reproducibility; statistical benchmarking

\section{Introduction}

Synthetic tabular data are now used for prototyping, testing, model development, data sharing, and controlled analysis. Modern generators can be sophisticated, including conditional adversarial models \cite{xu2019ctgan}, classical partially synthetic-data methods \cite{reiter2005cart}, and differentially private mechanisms \cite{dwork2006noise,mckenna2021nist}. Yet a release decision cannot be reduced to the question ``which generator gets the best average score?'' The established distinction between general and analysis-specific utility already points in this direction \cite{snoke2018utility}. A data user may need a reliable calibration curve, a subgroup-specific model assessment, or a fixed descriptive estimate. A data steward may also need evidence about a concrete attack scenario. These are different questions and they require different measurements.

This paper introduces \sg-ReleaseBench, an audit protocol and open implementation for evidence about a \emph{named synthetic-data use}. The protocol does not propose a new universal synthetic-data score or a new generator. Its contribution is to connect four pieces that are usually reported separately: (i) a declared analytical use and tolerance; (ii) a protected, pre-specified audit; (iii) simultaneous uncertainty accounting across a panel of checks; and (iv) a distinct, threat-model-specific privacy-risk report. A separate release gate then asks whether the audit had the required integrity conditions. A passing metric grid is deliberately insufficient for an automatic release decision.

The distinction matters. An observed low membership-inference score does not establish general privacy: attacks depend on the adversary, information, and access pattern \cite{shokri2017membership,stadler2022groundhog}. Conversely, a differential privacy claim is a formal property of a specified mechanism and adjacency relation, not a statement that every intended analysis has small error \cite{dwork2006noise}. Existing synthetic-data tools and benchmarks provide valuable generation and evaluation infrastructure, including NIST's SDNist tools and the SDGym benchmark environment \cite{nist2022sdnist,sdgym2024}. \sg{} complements rather than replaces them by requiring a locked, use-level evidence packet.

The contributions are as follows.
\begin{enumerate}[leftmargin=1.35em]
  \item We formalize a use-specific audit target and simultaneous finite-sample certificates for a fixed panel of bounded, paired loss gaps, together with a sufficient audit-size rule for a declared certification margin. We add variance-adaptive and anytime-valid versions of the certificate, which tighten the reported half-width by factors of two to ten on our own locked evidence at no cost in validity, and a buffered-block certificate that extends the framework to time-ordered audits under a declared dependence horizon. Two lower bounds delimit the framework: ordinary bounded queries reconstruct a protected audit exactly once the query budget reaches the audit size, so unrestricted reuse fails operationally and not merely in principle; and the panel-size correction is necessary, since a width that ignores it loses simultaneous coverage as the panel grows.
  \item We specify a benchmark protocol with separate development, selection, and audit roles; negative and positive controls; pre-registered candidate and analysis panels; full-grid reporting; and an explicit distinction between public measurement and sealed local auditing.
  \item We provide four completed public ACS studies, five non-ACS row-level technical records, two deliberately non-certifying chronological tracks, and a sealed-audit prototype. The full inventory and claim boundaries are moved to the supplement.
  \item We provide an installable command-line package, hash-checked evidence registries, a container, continuous integration, a documented comparison with established evaluation packages, and an outcome-free industrial challenge for an unrelated external team.
\end{enumerate}

\Cref{fig:decisionflow} summarizes the workflow. The protected audit produces statistical and threat-model-specific evidence; it does not inherit the organization's authority to release data.

\begin{figure}[t]
\centering
\includegraphics[width=\linewidth]{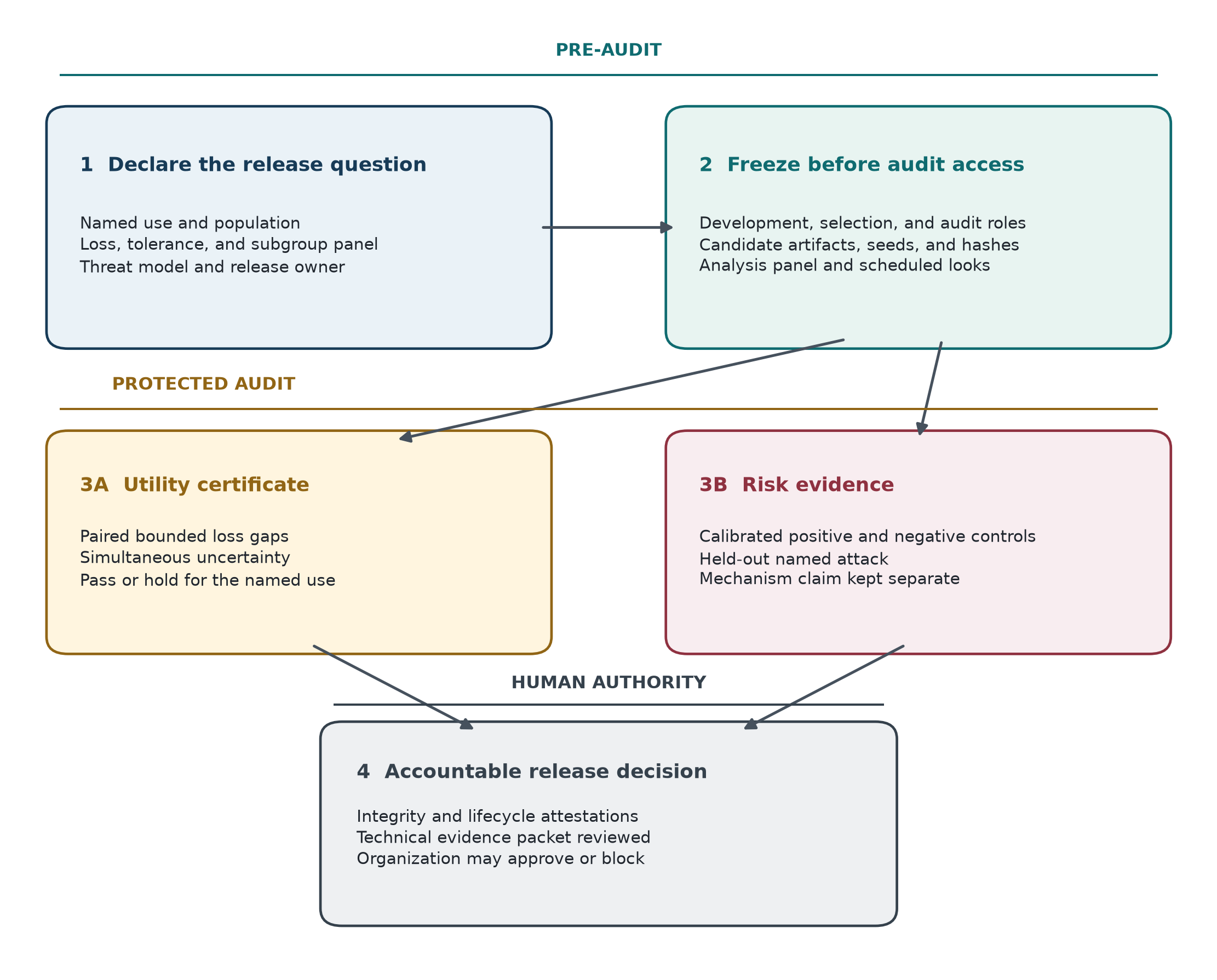}
\caption{\sg-ReleaseBench decision flow. The declared use, tolerance, candidate artifacts, and analyses are frozen before protected-audit access. Utility certification and empirical risk evidence remain separate, and neither can bypass the accountable release decision.}
\label{fig:decisionflow}
\end{figure}

\paragraph{Scope.} The public studies in this paper use ACS Public Use Microdata Sample files. The Census Bureau documents their variable coding, sample design, and weights \cite{census2024pums}. Our public results use a row-level independent-and-identically-distributed (iid) working assumption to demonstrate the technical protocol. They are not design-based survey inference, are not evidence about household clustering, and are not a substitute for a live confidential audit. The paper makes no universal claim that any method is safe, private, or ready for deployment.

\section{Related Work and Positioning}

Partially synthetic data have a long history in disclosure control and official statistics \cite{reiter2005cart}. Practical sequential-regression synthesis is available in widely used tools such as \textsf{synthpop} \cite{nowok2016synthpop}. More recent tabular generators include deep conditional models such as CTGAN \cite{xu2019ctgan}. Differential privacy provides a formal framework in which a mechanism's output distribution has a bounded change under a stated neighboring-dataset relation \cite{dwork2006noise}; private Bayesian-network and private-GAN approaches illustrate the diversity of possible generators \cite{zhang2017privbayes,jordon2019pategan}. None of these approaches makes the analytical evaluation problem disappear: the usefulness of a synthetic release remains use-dependent.

Benchmarking libraries are essential for comparing generators. SDGym provides a general benchmarking environment, while SDNist provides data and metrics developed for structured tabular synthetic-data evaluation \cite{sdgym2024,nist2022sdnist}. SDMetrics reports validity, structure, marginal, pairwise, and task-oriented measurements \cite{sdmetrics2026}; SynthEval combines configurable fidelity, utility, and privacy metrics \cite{lautrup2024syntheval}; and Anonymeter provides attack-based singling-out, linkability, and inference-risk evaluations \cite{giomi2022anonymeter}. The closest prior work in spirit is Houssiau et al.'s framework for auditable synthetic data generation \cite{houssiau2022auditable}. There a generator is restricted to a small set of well-chosen statistics and the release carries a \emph{generator card} declaring which information was used; a companion auditing procedure, reduced to a hypothesis test in a regression problem, lets a data holder verify that the card is accurate and that only approved statistics were consumed. The same group's TAPAS toolbox supplies adversarial privacy auditing for synthetic data \cite{houssiau2022tapas}. Both share our conviction that a release should arrive with checkable evidence, not assurances.

The audits answer different questions, and the two are complementary rather than competing. Houssiau et al. audit the \emph{generator}: given a declared card, did the synthesis use only the approved inputs? \sg{} audits the \emph{proposed release for a declared downstream use}: on a protected evaluation that was locked before it was opened, does a synthetic-trained workflow match its real-trained reference within a pre-committed tolerance, simultaneously across a whole candidate-by-analysis panel, with a declared negative control required to fail and a governance gate that can withhold a technically passing release? A generator card certifies provenance of information; a use-specific certificate bounds downstream equivalence with finite-sample uncertainty. Neither implies the other: a generator can consume only approved statistics and still be useless for the named analysis, and a release can pass our equivalence audit while its card is silent about what it consumed. An organization plausibly wants both.

Systematic assessments of tabular synthesis also exist: Du and Li evaluate eight synthesizer families on twelve real data sets and propose improved fidelity, privacy, and utility metrics \cite{du2025systematic}. That work and ours ask different questions. A systematic assessment compares synthesizers under a shared measurement suite; \sg{} instead fixes a single named use and asks whether one proposed release has evidence for it. The difference is procedural, not a matter of which metric is used: we require the candidate panel, analysis panel, tolerances, and audit schedule to be committed and hashed before the protected audit is opened, we report a simultaneous finite-sample bound over the whole candidate-by-analysis panel rather than per-metric point estimates, and we require a declared negative control to fail before any result is admitted. These systems are valuable and have different aims. NIST's public challenge experience also illustrates the value of explicit evaluation design when comparing privacy-preserving synthesis methods \cite{ridgeway2021challenge}. Earlier statistical work framed utility evaluation as a comparison of inferences from protected and original data, often alongside disclosure risk \cite{karr2006utility}. Privacy evaluation is an active and necessary area. Membership inference asks whether a record was in training data \cite{shokri2017membership}; broad empirical studies have shown that synthetic-data privacy and utility can be difficult to predict across models and data sets \cite{stadler2022groundhog}. \sg{} is designed around that uncertainty. It does not turn one attack result into a global privacy score.

\Cref{tab:positioning} states the narrower position of this work. The comparison describes the roles played by common artifact types, not an exhaustive claim about every existing package or paper. In particular, \sg{} does not compete with a generator or a formal DP proof. It makes a release-evidence workflow auditable.

\begin{table}[!ht]
\centering
\small
\caption{Scope of common synthetic-data artifacts and \sg-ReleaseBench. ``Usually'' describes the usual role of the artifact type, not every implementation.}
\label{tab:positioning}
\begin{tabularx}{\linewidth}{p{0.27\linewidth}YYYY}
\toprule
Artifact type & Named analytical use & Locked held-out audit & Simultaneous panel decision & Threat-model-specific risk report \\
\midrule
Generator paper or package & Sometimes & Rarely & Rarely & Sometimes \\
Generic benchmark or leaderboard & Sometimes & Usually public & Rarely & Sometimes \\
Formal DP mechanism & No utility target implied & Not required & No & Mechanism-level guarantee \\
\sg-ReleaseBench & Yes & Yes, for a live sealed audit & Yes & Yes, reported separately \\
\bottomrule
\end{tabularx}
\end{table}

\begin{table}[t]
\centering
\scriptsize
\caption{Documented-scope comparison with established evaluation packages. ``Outside default scope'' is not a defect; it means that a surrounding workflow must supply the named release control. The machine-readable scenario matrix is included in the release.}
\label{tab:priorart}
\begin{tabularx}{\linewidth}{p{0.19\linewidth}YYYYY}
\toprule
Tool & Fidelity and utility metrics & Dedicated privacy attacks & Detects reuse of a protected audit & Requires negative-control rejection & Governance gate and artifact hashes \\
\midrule
SDMetrics \cite{sdmetrics2026} & Yes & Metric dependent & Outside default scope & Outside default scope & Outside default scope \\
SynthEval \cite{lautrup2024syntheval} & Yes & Yes, configurable & Outside default scope & Outside default scope & Outside default scope \\
Anonymeter \cite{giomi2022anonymeter} & Limited & Yes, core role & Outside default scope & Outside default scope & Outside default scope \\
\sg-ReleaseBench & Declared use panel & Named attack record & Blocks invalid audit reuse & Required & Required together \\
\bottomrule
\end{tabularx}
\end{table}

The point of \cref{tab:priorart} is not that the existing tools are incomplete implementations. They serve different purposes. The accompanying five-scenario audit makes the distinction executable: audit reuse, a passing negative control, unregistered post-hoc metrics, and a missing attestation trigger a \sg{} hold, while a matching Anonymeter attack remains the more specialized privacy-risk instrument. \sg{} requires these outputs and controls to travel together in one reviewable record, and it keeps a technical certificate, empirical attack evidence, and human release authority separate.

\paragraph{Methodological novelty.} The new object is not another realism score or generator leaderboard; it is a fail-closed statistical decision record that couples a predeclared use, a protected audit, simultaneous uncertainty, calibrated controls, threat-model-specific risk evidence, and accountable release authority without allowing any one component to stand in for the others.

\section{Problem Setup}

Let $D_{\mathrm{dev}}$ be a development sample, $D_{\mathrm{sel}}$ a generator-selection sample, and $D_{\mathrm{audit}}=\{Z_i\}_{i=1}^n$ an audit sample. A candidate generator $g\in\mathcal{G}$ is fit using only the data allowed by its declared protocol. A synthetic data set $S_g$ is produced before audit evaluation. For a declared analysis $a\in\mathcal{A}$, let $f_{R,a}$ be the workflow trained or calibrated on its permitted real-data reference and let $f_{S,g,a}$ be the corresponding workflow trained or calibrated on $S_g$. The exact construction of each workflow is part of the locked record.

The primary estimand is a use-level expected loss gap,
\begin{equation}
 \Delta_{g,a}=\E\left[\ell_a(f_{S,g,a},Z)-\ell_a(f_{R,a},Z)\right],
 \label{eq:gap}
\end{equation}
where the loss is transformed in advance to a declared bounded scale. The audit estimator is
\begin{equation}
 \widehat{\Delta}_{g,a}=\frac{1}{n_a}\sum_{i\in I_a}\left\{\ell_a(f_{S,g,a},Z_i)-\ell_a(f_{R,a},Z_i)\right\}.
 \label{eq:gapestimate}
\end{equation}
The set $I_a$ may be a locked subgroup. Thus a subgroup is not a post-hoc plot: it is a separate analysis with its own sample size, tolerance, and uncertainty bound.

The statistical target is equivalence within a declared tolerance, not superiority of a generator and not rejection of a point null. The logic is confidence-set inclusion, as in classical equivalence testing \cite{schuirmann1987}. A candidate is technically certified for analysis $a$ only when the simultaneous upper bound satisfies $|\widehat{\Delta}_{g,a}|+h_a\leq\tau_a$. Otherwise the outcome is a \emph{hold}: it may reflect a material gap, insufficient audit information, or both. Unless a separate lower bound excludes $\tau_a$, a hold is not evidence that $|\Delta_{g,a}|>\tau_a$. Throughout the results, ``fails a check'' means ``fails to certify under this rule.''

The public protocol locks the data source, splits, generator versions, generation seeds, candidate grid, analysis panel, tolerances, and scheduled audit looks before headline evaluation. The sealed version additionally keeps audit records unavailable to generator developers. This separation is essential. Once an audit is repeatedly inspected to choose models or analyses, the simple finite-sample statement below no longer describes the full adaptive process; holdout reuse and adaptivity require their own validity analysis \cite{dwork2015holdout}.

\section{A Simultaneous Use-Specific Certificate}

This section states the primary certificate, then delimits it. \Cref{sec:primarycertificate} gives the certificate and the planning rules that follow from it. \Cref{sec:lowerbounds} establishes what no version of it can support: an audit opened to unrestricted reuse is reconstructible, and the panel-size correction cannot be dropped. \Cref{sec:varianceadaptive} sharpens the reported width without weakening the guarantee, and \cref{sec:auditunits} relaxes the row-level independence assumption for clustered and time-ordered records.

\subsection{The primary certificate and its planning rules}
\label{sec:primarycertificate}

We state the primary certificate in a simple form so that its guarantees and its limits are visible. It uses standard concentration, not a new probability inequality.

\begin{proposition}[Scheduled simultaneous bounded-loss audit]
\label{prop:certificate}
Fix a finite candidate panel $\mathcal{G}$, analysis panel $\mathcal{A}$, and a pre-specified sequence of audit looks $l\in\{0,\ldots,L-1\}$. Conditional on all information fixed before the audit, suppose that for every candidate, analysis, and scheduled look, the paired losses
\[
 X_{i,g,a}=\ell_a(f_{S,g,a},Z_i)-\ell_a(f_{R,a},Z_i)
\]
are iid and lie in $[-B_a,B_a]$. Let $q_l>0$ be fixed spending weights with $\sum_l q_l\leq 1$, and let $n_{a,l}$ be the scheduled audit size for analysis $a$ at look $l$. Define
\begin{equation}
 h_{a,l}=B_a\sqrt{\frac{2\log\{2|\mathcal{G}||\mathcal{A}|/(\alpha q_l)\}}{n_{a,l}}}.
 \label{eq:hoeffding}
\end{equation}
Then, with probability at least $1-\alpha$, all declared candidates, analyses, and scheduled looks satisfy
\[
 \left|\widehat{\Delta}_{g,a,l}-\Delta_{g,a}\right|\leq h_{a,l}.
\]
Consequently, the rule $|\widehat{\Delta}_{g,a,l}|+h_{a,l}\leq\tau_a$ implies $|\Delta_{g,a}|\leq\tau_a$ on that event.
\end{proposition}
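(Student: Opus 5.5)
The plan is a Hoeffding bound for each cell, followed by a union bound over candidates, analyses, and looks; the statement only needs to be checked against the spending weights. Condition throughout on the information fixed before the audit, so that the fitted workflows $f_{S,g,a}$ and $f_{R,a}$, the locked index sets $I_{a}$ at each look, and the sizes $n_{a,l}$ are deterministic. Under this conditioning, for a fixed triple $(g,a,l)$ the paired losses $X_{i,g,a}$, $i$ in the look-$l$ index set, are iid with mean $\Delta_{g,a}$ and lie in $[-B_a,B_a]$, an interval of length $2B_a$. Hoeffding's inequality then gives, for any $t>0$,
\[
 \Pr\left(\left|\widehat{\Delta}_{g,a,l}-\Delta_{g,a}\right|>t\right)\leq 2\exp\left(-\frac{2n_{a,l}t^2}{(2B_a)^2}\right)=2\exp\left(-\frac{n_{a,l}t^2}{2B_a^2}\right).
\]

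Next I substitute $t=h_{a,l}$ from \eqref{eq:hoeffding}. Then $n_{a,l}h_{a,l}^2/(2B_a^2)=\log\{2|\mathcal{G}||\mathcal{A}|/(\alpha q_l)\}$, so the failure probability of cell $(g,a,l)$ is at most $\alpha q_l/(|\mathcal{G}||\mathcal{A}|)$. A union bound over the at most $|\mathcal{G}||\mathcal{A}|L$ cells gives total failure probability
\[
 \sum_{l}\sum_{g,a}\frac{\alpha q_l}{|\mathcal{G}||\mathcal{A}|}=\alpha\sum_l q_l\leq\alpha.
\]
No independence across cells or looks is needed. This matters because nested looks reuse the same rows, and the union bound tolerates that dependence. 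The conditional probability bound holds for every value of the pre-audit information, so it also holds unconditionally after averaging.

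The consequence follows from the triangle inequality on the good event: $|\Delta_{g,a}|\leq|\widehat{\Delta}_{g,a,l}|+|\widehat{\Delta}_{g,a,l}-\Delta_{g,a}|\leq|\widehat{\Delta}_{g,a,l}|+h_{a,l}\leq\tau_a$.

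The main obstacle is not computational. It is making precise that nothing inside a cell is chosen after the audit data are seen. The look schedule, the sizes $n_{a,l}$, the subgroups, and the panels must all be measurable with respect to the pre-audit information, or the per-cell Hoeffding step fails. I would state this explicitly as the role of the locking hypothesis and note that data-dependent stopping is covered only because the bound holds simultaneously over all scheduled looks. It does not cover looks that were never scheduled.
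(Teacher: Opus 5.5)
Your proposal is correct and matches the paper's proof. Both apply Hoeffding's inequality per cell, calibrated so each cell fails with probability at most $\alpha q_l/(|\mathcal{G}||\mathcal{A}|)$, then take a union bound over candidates, analyses, and looks, which totals $\alpha\sum_l q_l\leq\alpha$, and finish with the triangle inequality. Your explicit check that the range $2B_a$ gives exactly that per-cell level, and your note that the union bound tolerates dependence across nested looks, fill in details the paper leaves implicit.
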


\begin{proof}
For a fixed $(g,a,l)$, Hoeffding's inequality for variables in $[-B_a,B_a]$ gives
\[
 \Pr\left(\left|\widehat{\Delta}_{g,a,l}-\Delta_{g,a}\right|>h_{a,l}\right)
 \leq \frac{\alpha q_l}{|\mathcal{G}||\mathcal{A}|}.
\]
A union bound over candidates, analyses, and scheduled looks gives a total error probability at most $\alpha\sum_lq_l\leq\alpha$. The final statement follows from the triangle inequality.
\end{proof}

No independence is required across candidates or analyses: the same audit records may induce strong dependence among their paired losses. The union bound only requires the stated iid condition across audit units for each fixed bounded loss sequence.

\begin{corollary}[Selection before audit]
\label{cor:selection}
Under the conditions of \cref{prop:certificate}, let $\widehat g$ be selected by any rule measurable with respect to development data, selection data, pre-audit randomness, and the frozen candidate artifacts, but not the protected audit outcomes. Then the certificate in \cref{prop:certificate} holds for every declared analysis of $\widehat g$ with probability at least $1-\alpha$.
\end{corollary}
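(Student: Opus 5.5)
The plan is to show that selecting $\widehat g$ without looking at the audit cannot hurt, because the event on which the certificate fails is defined through the whole panel $\mathcal{G}$ and does not depend on which candidate is chosen. Let $\mathcal{F}_0$ be the $\sigma$-field generated by the development data, the selection data, pre-audit randomness, and the frozen candidate artifacts $\{S_g, f_{S,g,a}, f_{R,a}\}$. Define the simultaneous good event
\[
 E=\bigcap_{g\in\mathcal{G}}\bigcap_{a\in\mathcal{A}}\bigcap_{l=0}^{L-1}\left\{\left|\widehat{\Delta}_{g,a,l}-\Delta_{g,a}\right|\leq h_{a,l}\right\}.
\]
\Cref{prop:certificate} gives $\Pr(E\mid\mathcal{F}_0)\geq 1-\alpha$. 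On $E$, every candidate in the panel satisfies the bound for every analysis and look. Since $\widehat g$ takes values in $\mathcal{G}$, on $E$ we have $|\widehat{\Delta}_{\widehat g,a,l}-\Delta_{\widehat g,a}|\leq h_{a,l}$ for all $a$ and $l$. The decision-rule implication then follows by the same triangle inequality used in the proposition. Integrating over $\mathcal{F}_0$ gives the unconditional statement $\Pr(E)\geq1-\alpha$.

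Strictly speaking this inclusion argument does not need $\widehat g$ to avoid the audit data: it holds for any $\mathcal{G}$-valued $\widehat g$. I will still use measurability with respect to $\mathcal{F}_0$, for two reasons.

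\begin{itemize}
\item \textbf{The estimand is well defined.} The target $\Delta_{\widehat g,a}$ must be a fixed quantity once we condition on pre-audit information. Because $\widehat g$ is $\mathcal{F}_0$-measurable, conditional on $\mathcal{F}_0$ the selected candidate is a constant $g_0$.
\item \textbf{The per-candidate bound can be tightened.} Conditional on $\mathcal{F}_0$, the paired losses $X_{i,g_0,a}$ are iid and bounded exactly as in the hypothesis of \cref{prop:certificate}. If one wanted the sharper single-candidate width, with $|\mathcal{G}|$ replaced by $1$, one could apply Hoeffding to $g_0$ directly. I will remark on this but state the corollary with the panel width, as written.
\end{itemize}

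The only step needing care is the conditioning. I must check that the iid, bounded hypothesis in \cref{prop:certificate} is stated conditionally on all pre-audit information. It is, so the audit records $Z_i$ remain iid given $\mathcal{F}_0$, and selection does not distort their law. If selection were allowed to depend on audit outcomes, the conditioning argument, and hence the tightened single-candidate width, would fail. The union-bound version would survive only because it already covers every $g$.
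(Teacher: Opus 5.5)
Your argument is correct and matches the paper's proof: conditional on pre-audit information, $\widehat g$ is just one member of the fixed panel $\mathcal G$, so the simultaneous event of \cref{prop:certificate} already covers it. Your side remarks go beyond the paper and are sound: the panel-width inclusion argument needs no measurability at all, and measurability is what would justify the sharper width with $|\mathcal G|$ replaced by $1$ for the selected candidate.
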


\begin{proof}
Conditional on all pre-audit information, $\widehat g$ is one member of the fixed panel $\mathcal G$. The simultaneous event in \cref{prop:certificate} holds for every member of that panel, so it holds for the selected member.
\end{proof}

\begin{proposition}[Sufficient audit size for a certification margin]
\label{prop:power}
Under the conditions of \cref{prop:certificate}, fix a candidate, analysis, and scheduled look. Suppose the true gap lies inside its tolerance by a known planning margin $0<\eta_a\leq\tau_a$,
\[
 |\Delta_{g,a}|\leq \tau_a-\eta_a.
\]
If
\begin{equation}
 n_{a,l}\geq
 \frac{8B_a^2}{\eta_a^2}
 \log\!\left\{\frac{2|\mathcal G||\mathcal A|}{\alpha q_l}\right\},
 \label{eq:powerplanning}
\end{equation}
then the candidate passes the declared check on the joint $1-\alpha$ event from \cref{prop:certificate}.
\end{proposition}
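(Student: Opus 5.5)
The plan is to work entirely on the joint event $E$ of \cref{prop:certificate}, which has probability at least $1-\alpha$. On $E$ we have $|\widehat{\Delta}_{g,a,l}-\Delta_{g,a}|\leq h_{a,l}$ for every declared triple, and in particular for the fixed candidate, analysis, and look in the statement. The certification rule asks for $|\widehat{\Delta}_{g,a,l}|+h_{a,l}\leq\tau_a$. By the triangle inequality, on $E$,
\[
 |\widehat{\Delta}_{g,a,l}|+h_{a,l}\leq |\Delta_{g,a}|+2h_{a,l}\leq \tau_a-\eta_a+2h_{a,l}.
\]
It therefore suffices to show that $2h_{a,l}\leq\eta_a$, that is, $h_{a,l}\leq\eta_a/2$.

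This reduces the statement to a deterministic inequality on the half-width in \cref{eq:hoeffding}. Squaring, $h_{a,l}\leq\eta_a/2$ is equivalent to
\[
 \frac{2B_a^2\log\{2|\mathcal G||\mathcal A|/(\alpha q_l)\}}{n_{a,l}}\leq\frac{\eta_a^2}{4},
\]
which rearranges exactly to the planning bound \cref{eq:powerplanning}. The rearrangement requires that the logarithm is positive. This holds because $q_l\leq 1$ and $\alpha<1$, so its argument exceeds $2$. Hence the assumed lower bound on $n_{a,l}$ gives $h_{a,l}\leq\eta_a/2$, and on $E$ the check passes.

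There is no real obstacle; the only points needing care are bookkeeping. First, the factor $8$ in the bound comes from the factor $2$ losing to the two sides of the triangle inequality: one $h$ is the estimation error and one is the reported width. Second, the conclusion holds on the same event $E$ used for coverage, not on a separate event, so no further union bound is needed. Third, $\eta_a\leq\tau_a$ only ensures the hypothesis $|\Delta_{g,a}|\leq\tau_a-\eta_a$ is nonvacuous and plays no other role.
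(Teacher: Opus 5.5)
Your proof is correct and follows the paper's argument: the planning bound \eqref{eq:powerplanning} forces $h_{a,l}\leq\eta_a/2$. Then, on the joint event of \cref{prop:certificate}, the triangle inequality gives $|\widehat\Delta_{g,a,l}|+h_{a,l}\leq|\Delta_{g,a}|+2h_{a,l}\leq\tau_a-\eta_a+\eta_a=\tau_a$.
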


\begin{proof}
Condition \eqref{eq:powerplanning} makes $h_{a,l}\leq\eta_a/2$. On the joint concentration event,
\[
 |\widehat\Delta_{g,a,l}|+h_{a,l}
 \leq |\Delta_{g,a}|+2h_{a,l}
 \leq \tau_a-\eta_a+\eta_a=\tau_a.
\]
\end{proof}

The bound is a pre-audit planning result, not observed power and not a promise that an unknown candidate will pass. It makes the cost of a larger candidate panel, a larger analysis panel, or more scheduled looks explicit. The command-line tool implements the equal-spending version of \eqref{eq:powerplanning}.

\begin{proposition}[Valid correction for controlled audit adaptation]
\label{prop:controlledadaptation}
Suppose that, before audit access, the protocol enumerates a finite menu $\mathcal T$ of at most $M$ permitted adaptation transcripts. Each transcript fixes a bounded gap function for every declared candidate, analysis, and look, but the final transcript may be selected after inspecting the audit. Replacing \eqref{eq:hoeffding} by
\[
 h^{(M)}_{a,l}=B_a\sqrt{\frac{2\log\{2M|\mathcal G||\mathcal A|/(\alpha q_l)\}}{n_{a,l}}}
\]
gives simultaneous coverage for every transcript and therefore for the audit-selected transcript.
\end{proposition}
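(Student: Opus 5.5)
The plan is to reduce the claim to \cref{prop:certificate} applied to an enlarged but still pre-audit panel. Index the permitted transcripts by $t\in\mathcal T$ with $|\mathcal T|\le M$. Because $\mathcal T$ is enumerated before audit access, each transcript $t$ fixes, conditional on all pre-audit information, a bounded gap function for every $(g,a,l)$. Write its paired losses as $X^{(t)}_{i,g,a,l}\in[-B_a,B_a]$, with mean $\Delta^{(t)}_{g,a}$ and empirical average $\widehat\Delta^{(t)}_{g,a,l}$. The key observation is that these functions do not depend on the audit records themselves. Hence, for each fixed index $(t,g,a,l)$, the sequence $\{X^{(t)}_{i,g,a,l}\}_i$ is iid across audit units, exactly as in the hypothesis of \cref{prop:certificate}. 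I would state this iid requirement explicitly as part of what a permitted transcript means.

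Next I would run the same two steps as in the proof of \cref{prop:certificate}, now over the index set $\mathcal T\times\mathcal G\times\mathcal A\times\{0,\dots,L-1\}$. For a fixed $(t,g,a,l)$, Hoeffding's inequality with half-width $h^{(M)}_{a,l}$ gives
\[
\Pr\Bigl(\bigl|\widehat\Delta^{(t)}_{g,a,l}-\Delta^{(t)}_{g,a}\bigr|>h^{(M)}_{a,l}\Bigr)\le 2\exp\Bigl\{-\tfrac{n_{a,l}(h^{(M)}_{a,l})^2}{2B_a^2}\Bigr\}=\frac{\alpha q_l}{M|\mathcal G||\mathcal A|}.
\]
A union bound over at most $M$ transcripts, all candidates and analyses, and all looks then gives total failure probability at most $\alpha\sum_l q_l\le\alpha$. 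No independence is needed across transcripts, candidates, or analyses. Call the complement event $E$; it has probability at least $1-\alpha$ and on it every transcript is simultaneously covered.

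Finally, let $\widehat t$ be the transcript chosen after inspecting the audit. The selection may be an arbitrary measurable function of the audit data, so $\widehat\Delta^{(\widehat t)}$ is not itself an average of iid terms, and Hoeffding cannot be applied to it directly. On $E$, however, the coverage inequality holds for every $t\in\mathcal T$. Since $\widehat t\in\mathcal T$ pointwise on the sample space, it holds for $\widehat t$ as well. The certification implication $|\widehat\Delta^{(\widehat t)}_{g,a,l}|+h^{(M)}_{a,l}\le\tau_a\Rightarrow|\Delta^{(\widehat t)}_{g,a}|\le\tau_a$ then follows by the triangle inequality, as in \cref{prop:certificate}. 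This is the same logic as \cref{cor:selection}, except that selection is now allowed to use audit data, and it is paid for by the factor $M$ inside the logarithm.

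The main obstacle is the first step: making precise that each transcript's gap functions are fixed before the audit. A transcript whose workflow is refit on audit data would break the iid property. I would handle this by defining the menu $\mathcal T$ as a set of frozen, audit-independent loss functions, with only the choice among them adaptive. Once that definition is in place, the remaining steps are routine.
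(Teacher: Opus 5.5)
Your proof is correct and follows the paper's argument: Hoeffding at level $\alpha q_l/(M|\mathcal G||\mathcal A|)$ for each transcript--candidate--analysis--look cell, a union bound giving total error at most $\alpha\sum_l q_l\le\alpha$, and the observation that an event holding for every $t\in\mathcal T$ also holds for the audit-selected transcript. Your explicit requirement that each transcript's gap functions be frozen and audit-independent, so that the per-transcript paired losses are iid as in \cref{prop:certificate}, is the hypothesis the paper's proof uses implicitly.
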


\begin{proof}
Apply Hoeffding's inequality to every transcript, candidate, analysis, and scheduled look, allocating $\alpha q_l/(M|\mathcal G||\mathcal A|)$ to each event. A union bound gives total error probability at most $\alpha\sum_lq_l\leq\alpha$. Because the event holds for all $t\in\mathcal T$, it also holds for the transcript chosen after audit inspection.
\end{proof}

\Cref{prop:controlledadaptation} provides a narrow correction, not permission for open-ended tuning. The adaptation branches must be concrete enough to enumerate and reproduce before audit access. \Cref{fig:theoryplanning} shows the resulting planning cost for 35 base checks.

\begin{figure}[t]
\centering
\includegraphics[width=0.94\linewidth]{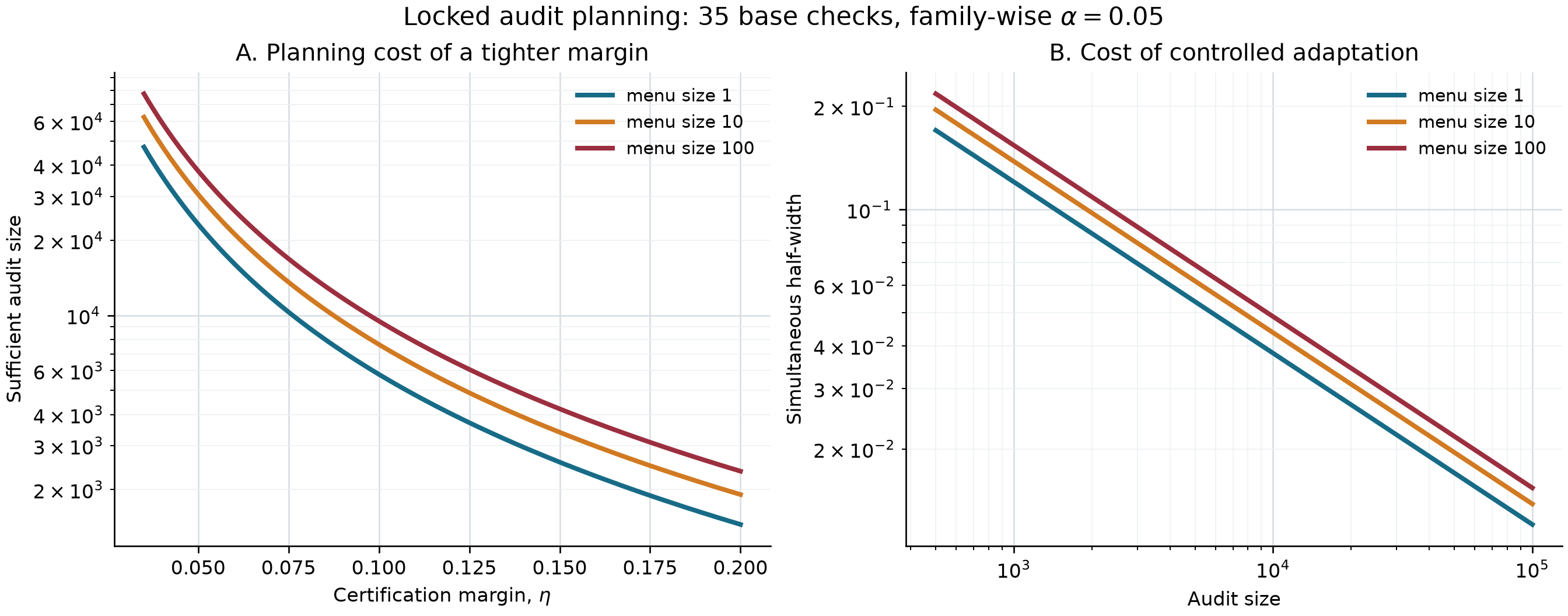}
\caption{Pre-audit planning under bounded paired losses. Panel A gives the sufficient audit size in \cref{prop:power} for 35 base checks as the certification margin changes. Panel B shows the simultaneous half-width when the fixed adaptation menu in \cref{prop:controlledadaptation} has size 1, 10, or 100. The logarithmic adaptation cost is much smaller than the quadratic cost of halving the margin, but it is not zero.}
\label{fig:theoryplanning}
\end{figure}

For the completed public ACS studies, losses were scaled so $B_a=1$. We used the pre-specified spending schedule $q_l=6/\{\pi^2(l+1)^2\}$, which sums to one over infinitely many planned looks. The public runs report the whole candidate grid. Selecting a candidate on $D_{\mathrm{sel}}$ is permitted by the protocol; selecting it after observing $D_{\mathrm{audit}}$ is not.

\begin{remark}[What the certificate means]
\label{rem:meaning}
The certificate is a statement about the declared loss, tolerance, population represented by the iid audit assumption, and frozen workflow. It is not a confidence statement about arbitrary downstream analyses. It does not establish that the synthetic rows are private, non-identifying, legally releasable, survey-design valid, or appropriate for an untested subgroup. A failure can arise from real utility loss, an overly strict tolerance, a small audit sample, or a conservative bound. In all of these cases, the correct outcome is \emph{hold or uncertified}, not a forced pass/fail narrative about a generator.
\end{remark}

\subsection{Two lower bounds on what an audit can support}
\label{sec:lowerbounds}

\begin{proposition}[Sharp failure under unrestricted audit reuse]
\label{prop:reuseimpossibility}
Let $P$ be a non-atomic audit distribution on a standard Borel space and let $S=(Z_1,\ldots,Z_n)$ be the observed audit sample. If an analyst may choose an arbitrary bounded gap function after seeing $S$, there is a measurable choice $X_S:\mathcal Z\rightarrow[0,1]$ with empirical audit mean zero and population mean one almost surely. Hence, for any $h\leq\tau<1$, a rule that accepts when $|\widehat\Delta|+h\leq\tau$ can accept while the true gap exceeds tolerance.
\end{proposition}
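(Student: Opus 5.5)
The construction is the indicator of the complement of the observed sample: set $X_S(z)=1-\mathbf 1\{z\in\{Z_1,\ldots,Z_n\}\}$. The plan has four steps: show this is a legitimate measurable gap function, compute its empirical mean, compute its population mean, and then read off the acceptance-rule consequence.

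\textbf{Measurability.} Since $\mathcal Z$ is standard Borel, singletons are Borel sets. A finite union of singletons is therefore Borel, so for each fixed realization of $S$ the map $X_S$ is a Borel function from $\mathcal Z$ to $[0,1]$. For later use I also note that $(s,z)\mapsto\mathbf 1\{z\in\{s_1,\ldots,s_n\}\}$ is jointly measurable, because the diagonal $\{(x,y):x=y\}$ is Borel in $\mathcal Z\times\mathcal Z$ in a standard Borel (countably separated) space. This joint measurability is what makes the choice a measurable selection depending on $S$, as the statement requires.

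\textbf{Empirical mean.} By construction $X_S(Z_i)=0$ for every $i$. Hence $\widehat\Delta=n^{-1}\sum_i X_S(Z_i)=0$ exactly, for every realization of $S$.

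\textbf{Population mean.} The relevant quantity is $\Delta_S=\E_{Z\sim P}[X_S(Z)\mid S]$, where $Z$ is a fresh draw independent of $S$. This equals $1-P(\{Z_1,\ldots,Z_n\})$. Because $P$ is non-atomic, every singleton has $P$-mass zero. A finite union of null sets is null, so $\Delta_S=1$ for every realization of $S$, and in particular almost surely. I will state explicitly that the population mean is taken conditionally on $S$, with the gap function frozen after selection. This matches how $\Delta_{g,a}$ is defined once the workflow is fixed.

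\textbf{Consequence for the acceptance rule.} With $\widehat\Delta=0$, the condition $|\widehat\Delta|+h\leq\tau$ reduces to $h\leq\tau$, which holds by hypothesis. So the rule accepts, while the true gap equals $1>\tau$. To tie this back to the paper's setting, I would remark that $X_S$ can be realized as a paired loss difference $\ell(f_S,z)-\ell(f_R,z)$. For example, take $\ell(f_R,\cdot)\equiv 0$ and let $f_S$ be a workflow whose $[0,1]$-valued loss is $X_S$. This shows the failure is achievable by an admissible bounded gap function once it is allowed to depend on the audit. It also contrasts with \cref{prop:controlledadaptation}, where the menu of gap functions is finite and fixed in advance, whereas here $X_S$ ranges over an uncountable family indexed by $S$.

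\textbf{Main obstacle.} The only delicate point is the measure-theoretic justification that the data-dependent function is a bona fide measurable object, so that $\Delta_S$ is well defined and the ``almost surely'' claim makes sense. Everything else is immediate. I would handle it by invoking joint Borel measurability of the diagonal in a standard Borel space and Fubini for the conditional expectation.
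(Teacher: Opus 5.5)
Your construction is the same as the paper's: the complement indicator of the observed sample has empirical mean zero and population mean one, because a finite set is $P$-null under a non-atomic $P$, and the acceptance consequence follows immediately. Your joint-measurability remark via the Borel diagonal adds rigor the paper leaves implicit, but it does not change the argument.
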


\begin{proof}
After observing $S$, define $X_S(z)=0$ when $z\in\{Z_1,\ldots,Z_n\}$ and $X_S(z)=1$ otherwise. Its empirical mean on $S$ is zero. Because $P$ is non-atomic, the finite observed set has probability zero, so $\E_P[X_S(Z)]=1$ almost surely. Therefore $|\widehat\Delta|+h=h\leq\tau$, while $|\Delta|=1>\tau$.
\end{proof}

\Cref{prop:reuseimpossibility} is an elementary but sharp release-audit lower bound. It does not say that all adaptive analysis is impossible. It says that unrestricted, unrecorded reuse cannot inherit the locked-audit guarantee. A finite, pre-declared adaptation menu can be handled by including every permitted branch in the simultaneous panel; richer reuse needs a reusable-holdout or adaptive-inference design \cite{dwork2015holdout}.

One objection to \cref{prop:reuseimpossibility} is that its witness function is pathological: a real analyst does not write down the indicator of the observed audit set. The next result removes that objection. It shows the same failure arises from ordinary bounded queries under a realistic query budget, so the danger is operational, not measure-theoretic.

\begin{proposition}[Reconstruction from bounded audit queries]
\label{prop:reconstruction}
Assume the analyst knows which records constitute the protected audit, that is their identifiers, but not their hidden binary labels $b\in\{0,1\}^{n}$, and may submit bounded gap functions and read back only their reported empirical audit means to full precision. For $j=1,\ldots,m$, let $s_j\in\{-1,1\}^{n}$ be a publicly seeded pseudorandom sign vector indexed by identifier, and let query $j$ report the bounded statistic
\[
 y_j=\frac1n\sum_{i=1}^{n}s_{ji}\,(2b_i-1)\in[-1,1].
\]
Write $S\in\{-1,1\}^{m\times n}$ for the stacked signs. If $m\geq n$, then $S$ has full column rank with probability at least $1-(1/2+o(1))^{n}$, and on that event $b$ is determined exactly by $y=S(2b-\mathbf 1)/n$.
\end{proposition}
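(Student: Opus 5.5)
The argument splits into a linear-algebra part and a probability part. For the linear algebra, write $v=2b-\mathbf 1\in\{-1,1\}^n$, so the reported vector is exactly $y=Sv/n$. Once $S$ has full column rank $n$, the map $v\mapsto Sv/n$ is injective on $\mathbb R^n$. The analyst knows $S$, because the signs are publicly seeded by identifier, and reads $y$ to full precision. It can therefore recover $v=n(S^\top S)^{-1}S^\top y$ and then $b=(v+\mathbf 1)/2$.

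Each query is a legitimate bounded gap function. It takes the value $s_{ji}(2b_i-1)\in[-1,1]$ on record $i$, and the analyst can encode the sign $s_{ji}$ through the identifier while the label enters through the record. So the only remaining content is the rank claim.

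For the rank claim, it suffices to take the first $n$ rows, since adding rows cannot lower the rank. This reduces the problem to showing that a random $n\times n$ matrix with iid uniform $\pm1$ entries is singular with probability at most $(1/2+o(1))^n$. That is precisely the Tikhomirov bound (Tikhomirov, 2020), which sharpens Kahn--Koml\'os--Szemer\'edi and Bourgain--Vu--Wood. I would invoke it as a black box. Here the pseudorandom signs are modelled as independent fair coins, an idealization I would state explicitly. For $m>n$ the bound only improves, and one can also cite that the failure probability decays in $m-n$.

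The main obstacle is exactly this singularity estimate. A self-contained route gives only weaker bounds: the elementary argument that row $k$ avoids the span of the earlier rows yields Koml\'os-type $O(n^{-1/2})$ rates, and Kahn--Koml\'os--Szemer\'edi gives $0.999^n$. Matching the stated constant $1/2+o(1)$ therefore essentially requires citing Tikhomirov. The constant is sharp, since two equal or opposite columns already occur with probability about $n^2 2^{-n}$.

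If a citation were undesirable, I would weaken the statement to $m\geq Cn$. Then a union bound shows that every nonzero $w\in\{-2,0,2\}^n$, the set of possible differences of two candidate $v$'s, satisfies $Sw\neq 0$ with high probability. For each such $w$ with support of size $k$, the Littlewood--Offord bound gives $\Pr(\langle s,w\rangle=0)\le \binom{k}{\lfloor k/2\rfloor}2^{-k}=O(k^{-1/2})$. Combining this with independence over the $m$ rows and a count over supports gives injectivity on the hypercube. That injectivity is all that exact recovery of $b$ needs, even without full column rank.
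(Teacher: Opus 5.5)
Your argument is correct and matches the paper's proof: injectivity of the affine map $b\mapsto S(2b-\mathbf 1)/n$ under full column rank, reduction to the leading $n\times n$ block, and Tikhomirov's $(1/2+o(1))^n$ singularity bound for square Rademacher matrices. Two of your points go beyond the paper: the explicit idealization of the pseudorandom signs as fair coins, and the citation-free fallback. That fallback does not even need Littlewood--Offord, because $\Pr(\langle s,w\rangle=0)\le 1/2$ for every nonzero $w\in\{-2,0,2\}^n$, so a union bound over the fewer than $3^n$ such $w$ gives failure probability at most $(3/4)^n$ once $m\geq 2n$.
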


\begin{proof}
The map $b\mapsto S(2b-\mathbf 1)/n$ is affine and injective whenever $S$ has full column rank, in which case inverting recovers $2b-\mathbf 1$ and hence $b$. For $m\geq n$ it suffices that the leading $n\times n$ block be non-singular, and the probability that a square Rademacher matrix is singular is at most $(1/2+o(1))^{n}$ \cite{tikhomirov2020singularity}.
\end{proof}

The identifier assumption is what makes the attack an ordinary one, and it is the standard setting of database reconstruction \cite{dinur2003revealing}: audit membership is often administratively known, for example a published list of sampled record keys, while the sensitive labels are exactly what the audit protects. Note also that full column rank is a high-probability event, not a certain one; a Rademacher design is singular with positive probability, so the statement cannot be strengthened to hold almost surely.

The consequence for a release audit is a certificate that accepts a memorising candidate. Let the declared analysis be zero-one prediction loss, let $f_R$ be the real-trained reference with audit loss $\widehat L(f_R)$, and let $f_S$ return the recovered label on any audit identifier and a fixed class elsewhere. On the reconstruction event $f_S$ has audit loss zero, so its empirical gap is $\widehat\Delta=-\widehat L(f_R)$ and the rule $|\widehat\Delta|+h\leq\tau$ accepts whenever $\widehat L(f_R)\leq\tau-h$, which is the ordinary situation of a competent reference. Its population gap is $\Delta=L(\mathrm{const})-L(f_R)$, the excess loss of the constant rule, which is strictly positive whenever the reference beats that rule and exceeds $\tau$ for any reference worth auditing. The audit therefore certifies a lookup table.

\Cref{prop:reconstruction} is the release-audit specialization of classical database reconstruction \cite{dinur2003revealing} and of the adaptive data analysis lower bounds \cite{hardt2014preventing,steinke2015interactive}. \Cref{fig:reconstruction} reports the empirical threshold. With $n=400$ audit rows, mean recovery is $0.64$ at a budget of $0.1n$ queries and $0.84$ at $0.5n$, with no exact recoveries; exact recovery occurs in half of replicates at $0.9n$ and in every replicate once the budget reaches $n$. The transition sits where the theory places it, at a query budget comparable to the audit size. Two operational consequences follow. Query budgets against a protected audit must be counted and capped, not left implicit; and an audit that has answered on the order of $n$ adaptive queries should be treated as spent, exactly as a locked panel is spent once inspected.

\begin{figure}[t]
\centering
\includegraphics[width=0.86\linewidth]{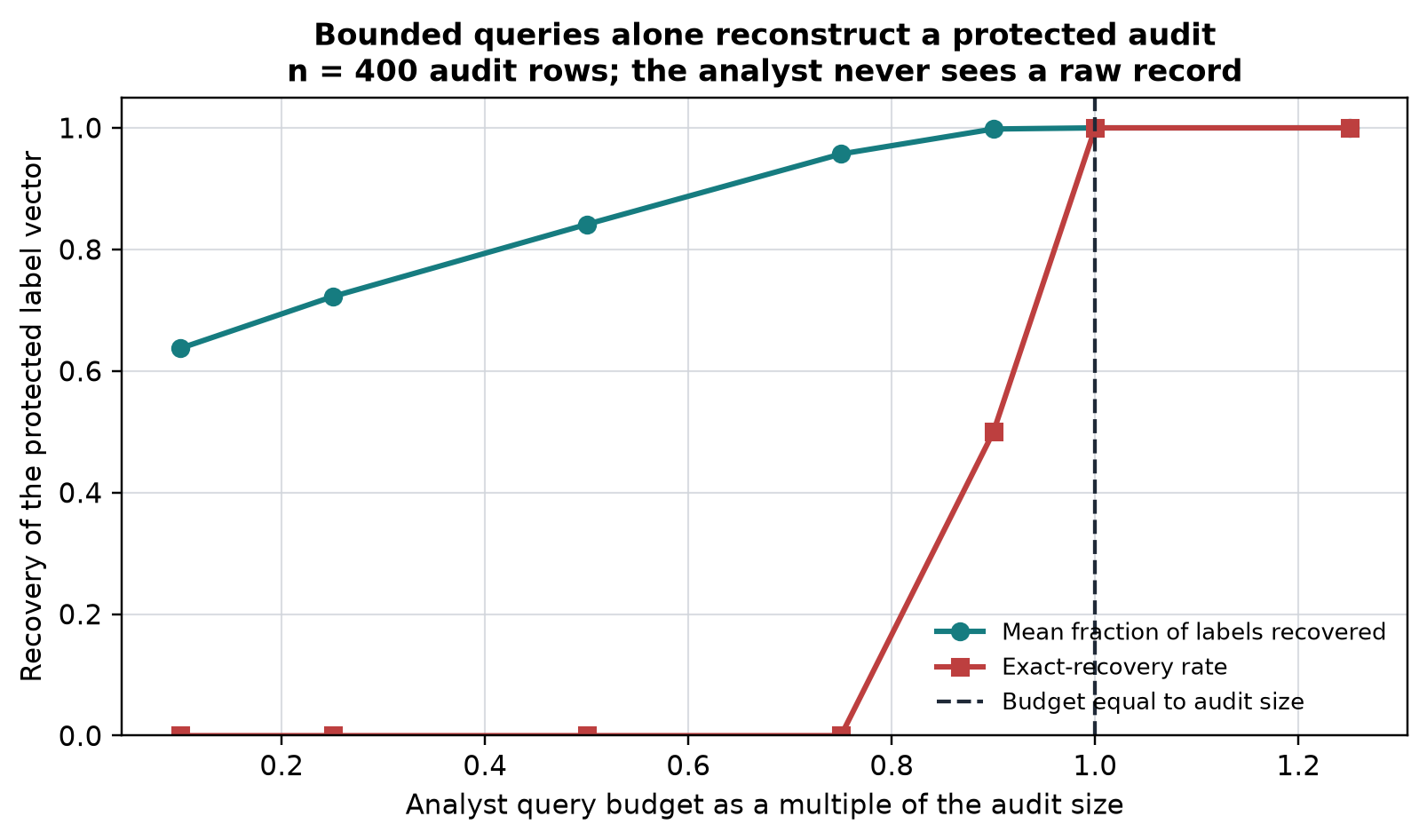}
\caption{Operational audit-reuse attack. The analyst sees only reported bounded query means, never a raw audit row. Recovery of the protected label vector is partial while the query budget is below the audit size and becomes exact at a budget comparable to it, matching \cref{prop:reconstruction}. The dashed line marks a budget equal to the audit size.}
\label{fig:reconstruction}
\end{figure}
\FloatBarrier

A companion question is whether the multiplicity correction in \cref{prop:certificate} is an artifact of the union bound. It is not.

\begin{proposition}[Necessity of the panel correction]
\label{prop:multiplicitynecessary}
Fix $n\geq1$ and $\alpha\in(0,1/2)$. Let the panel consist of $K$ candidates whose paired gaps are iid Rademacher with true gap zero, independent across candidates. Any procedure that reports a common half-width $h$ for every candidate and attains simultaneous coverage $1-\alpha$ must satisfy
\[
 h\ \geq\ c_\alpha\,\min\Bigl\{1,\ \sqrt{\tfrac{\log K}{n}}\Bigr\},
\]
where $c_\alpha>0$ depends only on $\alpha$. In particular a half-width that ignores $K$ cannot retain simultaneous coverage as $K$ grows.
\end{proposition}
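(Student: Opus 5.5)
The plan is to interpret ``simultaneous coverage'' as the event $\max_{k\le K}|\widehat\Delta_k|\le h$. Here $\widehat\Delta_k=n^{-1}\sum_{i=1}^n X_{ik}$ is the empirical mean of candidate $k$, and the $X_{ik}$ are iid Rademacher. Since every true gap is zero, coverage at level $1-\alpha$ requires
\[
\Pr\Bigl(\max_k|\widehat\Delta_k|\le h\Bigr)=\Pr\bigl(|\widehat\Delta_1|\le h\bigr)^K\ \ge\ 1-\alpha,
\]
where the equality uses independence across candidates. This forces the per-candidate tail $p(h):=\Pr(|\widehat\Delta_1|>h)$ to satisfy $(1-p(h))^K\ge 1-\alpha$. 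Using $1-p\le e^{-p}$, this gives $p(h)\le \log\{1/(1-\alpha)\}/K\le 2\alpha/K$, where the last inequality holds for $\alpha<1/2$. So the whole argument reduces to a reverse (anti-concentration) tail bound for a normalized Rademacher sum: I need to show that $p(h)\le 2\alpha/K$ forces $h\gtrsim\min\{1,\sqrt{\log K/n}\}$. If $h$ is not a common constant but is allowed to be data-dependent, I would first restrict to deterministic $h$, as the statement's ``reports a common half-width'' suggests. If needed, I would note that a reported random $h$ must exceed this bound with probability bounded below by the same argument applied to its quantile.

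The key step is a lower bound on the binomial tail. Write $\widehat\Delta_1=(2T-n)/n$ with $T\sim\mathrm{Bin}(n,1/2)$. For $t=hn$ an integer-ish level, I would use the standard estimate
\[
\Pr(2T-n\ge t)\ \ge\ \tfrac{c}{\sqrt n}\ \text{-free form:}\quad \Pr\bigl(\widehat\Delta_1\ge h\bigr)\ \ge\ c_1\exp(-c_2 n h^2)\qquad(0\le h\le 1),
\]
which follows either from Stirling's formula applied to the single term $\binom{n}{(n+t)/2}2^{-n}$ together with its neighbors, or from the Paley--Zygmund/Slud-type lower bound for binomial tails. This step is the main obstacle: the constants must be uniform in $n\ge1$, and the estimate must hold all the way to $h$ close to $1$. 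My first attempt would use the elementary bound $\binom{n}{k}\ge \frac{1}{n+1}2^{nH(k/n)}$ together with $1-H(1/2+x)\le c x^2$. This gives a tail of at least $(n+1)^{-1}e^{-c n h^2}$. The polynomial factor $(n+1)^{-1}$ must then be absorbed, which is fine whenever $nh^2\gtrsim\log n$. In the small-$h$ regime $h\le 1/\sqrt n$, I would handle the tail instead via the central limit theorem or Berry--Esseen, or via a crude direct bound giving $\Pr(|\widehat\Delta_1|>h)\ge c_0$ for $h<1/(2\sqrt n)$.

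Finally, I would combine the two cases. If $h\ge 1$ the claim is trivial, since $c_\alpha\le1$. Otherwise $2\alpha/K\ge p(h)\ge c_1 e^{-c_2 nh^2}$, which gives $nh^2\ge c_2^{-1}\log\{c_1K/(2\alpha)\}$. This lower bound is at least $c'\log K$ once $K\ge K_0(\alpha)$. For $K<K_0(\alpha)$, it suffices to show $h\ge c_\alpha\sqrt{\log K_0/n}$ capped by $1$. This follows from the small-$h$ case: since $\alpha<1/2$ and $K\ge1$, coverage already forces $p(h)\le 2\alpha<1$, and by the central limit theorem this forces $h\ge c''/\sqrt n$ for a constant depending on $\alpha$. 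When $n$ is very small, the discreteness of $\widehat\Delta_1$ means the only way to get small tail probability is $h\ge1$. This yields the $\min\{1,\cdot\}$ form. Taking $c_\alpha$ as the minimum of the constants from the two regimes completes the proof. The closing remark, that a $K$-free width loses coverage, follows by contraposition: any $h=o(\sqrt{\log K/n})$ violates the displayed inequality for large $K$.
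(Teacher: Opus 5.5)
Your proposal is essentially the paper's proof: independence across candidates gives $(1-p)^K\ge1-\alpha$, hence $p\le\log\{1/(1-\alpha)\}/K$, which is played against an anti-concentration bound $\Pr(|\bar X_1|>h)\ge c_1e^{-c_2nh^2}$, with large $h$ trivial and the finitely many small $K$ absorbed into $c_\alpha$ (the paper cuts at $h=1/2$ rather than $h=1$, so it never needs the tail estimate near $h=1$). The one point to tighten is inside that lemma: your entropy route absorbs the $(n+1)^{-1}$ factor only once $nh^2$ exceeds a (small, after enlarging $c_2$) multiple of $\log n$, so the Berry--Esseen step must cover more than $h\le1/\sqrt n$, say $nh^2\le\tfrac12\log n$, where the Gaussian tail still dominates the $O(n^{-1/2})$ error---that window is exactly what is needed when $1\ll\log K\lesssim\log n$, and the Stirling-plus-neighbours estimate you also mention closes it equally well.
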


\begin{proof}
Write $\bar X_k=n^{-1}\sum_{i}\varepsilon_{ik}$ for the $k$th candidate mean and $p=\Pr(|\bar X_1|>h)$. The $K$ means are independent, so simultaneous coverage requires $(1-p)^{K}\geq1-\alpha$, hence $p\leq1-(1-\alpha)^{1/K}\leq\log\{1/(1-\alpha)\}/K$.

A Rademacher average obeys a matching anti-concentration bound: there are absolute constants $c_1,c_2>0$ with $\Pr(|\bar X_1|>t)\geq c_1\exp(-c_2nt^2)$ for all $0\leq t\leq1/2$, which follows from the central limit theorem with a Berry--Esseen correction, or from a direct binomial tail estimate. Suppose first that $h\leq1/2$. Combining the two displays gives $c_1\exp(-c_2nh^2)\leq\log\{1/(1-\alpha)\}/K$, so
\[
 h^{2}\ \geq\ \frac{1}{c_2n}\log\frac{c_1K}{\log\{1/(1-\alpha)\}},
\]
which is of order $\log(K)/n$ once $K$ exceeds a constant depending on $\alpha$. If instead $h>1/2$, the claimed bound holds trivially because its right-hand side is at most $c_\alpha$. Taking $c_\alpha$ small enough to cover both cases and the finitely many small $K$ gives the statement.
\end{proof}

The truncation at one matters. the gaps lie in $[-1,1]$, so $h=1$ always covers and no lower bound may exceed it. The bound therefore bites in the regime that matters, $\log K\lesssim n$, and says that within it the logarithmic panel cost is unavoidable, not an artifact of the union bound.

\Cref{fig:multiplicity} confirms this numerically at $n=1{,}000$: a half-width computed for a single candidate retains $0.995$ simultaneous coverage at $K=1$, but falls to $0.926$ at $K=10$, $0.494$ at $K=100$, and $0.001$ at $K=1{,}000$, while the panel-corrected width from \cref{prop:certificate} stays at or above $0.995$ throughout. The logarithmic cost of a large declared panel is therefore real, and it is the price of being allowed to lock many candidates at once rather than a conservatism that a sharper analysis could remove.

\begin{figure}[t]
\centering
\includegraphics[width=0.86\linewidth]{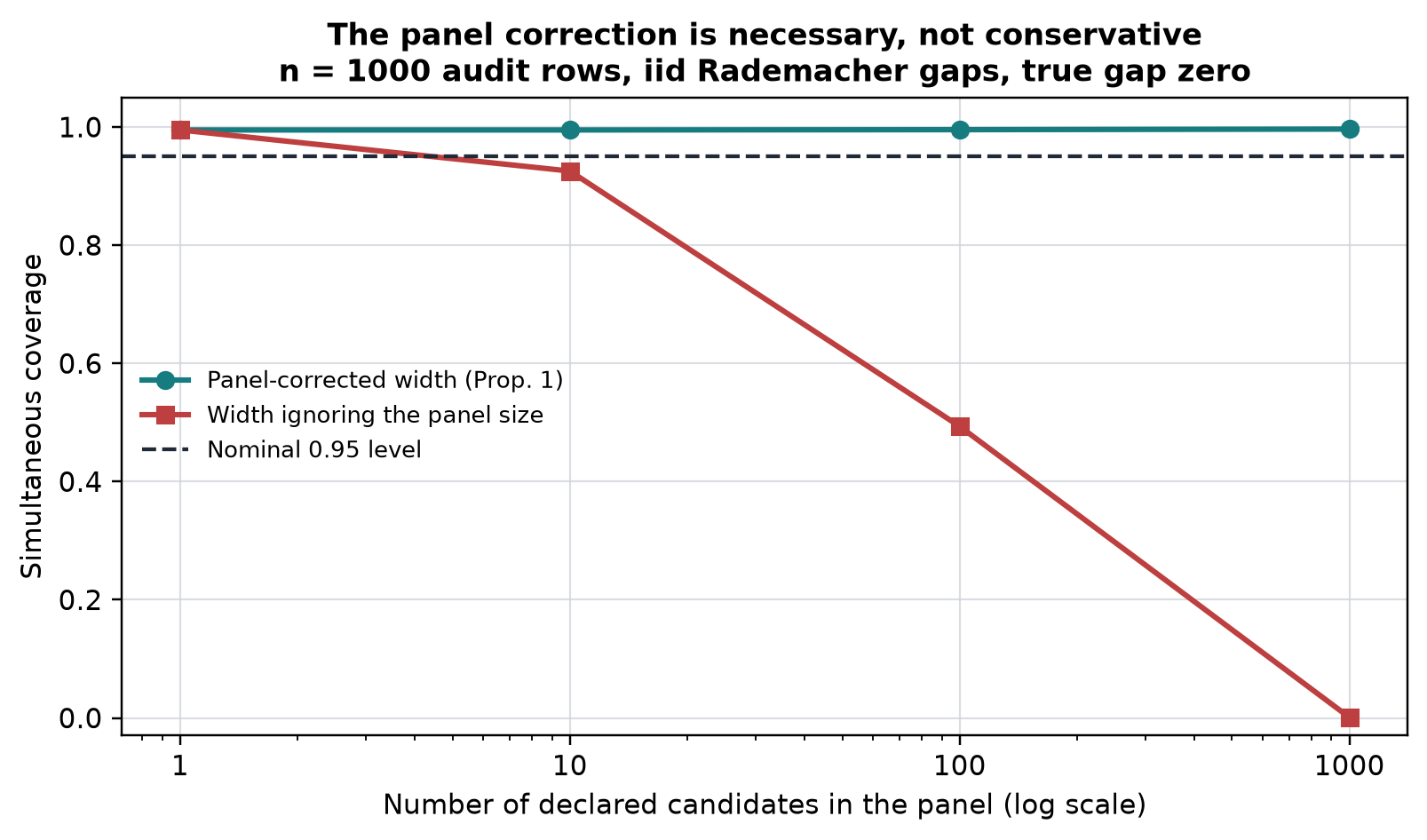}
\caption{Necessity of the panel correction at $n=1{,}000$ with iid Rademacher gaps and true gap zero. Ignoring the panel size loses simultaneous coverage as the panel grows; the corrected width in \cref{prop:certificate} holds its nominal level.}
\label{fig:multiplicity}
\end{figure}
\FloatBarrier

\subsection{Variance-adaptive instruments for the same audit}
\label{sec:varianceadaptive}

\Cref{prop:certificate} pays for the full declared range $[-B_a,B_a]$ whatever the data look like. Paired loss gaps in practice are concentrated near zero, because the real-trained and synthetic-trained workflows agree on most records and disagree on a minority. The range is then a poor proxy for the uncertainty, and the certificate is wider than the evidence requires. Two standard instruments fix this without weakening the guarantee.

\begin{proposition}[Variance-adaptive scheduled certificate]
\label{prop:eb}
Under the conditions of \cref{prop:certificate}, let $\widehat V_{g,a,l}$ be the sample variance of the paired gaps in the cell $(g,a,l)$ and set $\delta_{a,l}=\alpha q_l/(|\mathcal G||\mathcal A|)$. Replacing \eqref{eq:hoeffding} by the empirical Bernstein half-width
\begin{equation}
 h^{\mathrm{EB}}_{g,a,l}=\sqrt{\frac{2\widehat V_{g,a,l}\log(4/\delta_{a,l})}{n_{a,l}}}+\frac{14B_a\log(4/\delta_{a,l})}{3(n_{a,l}-1)}
 \label{eq:eb}
\end{equation}
preserves the conclusion of \cref{prop:certificate} with the same $1-\alpha$ simultaneous guarantee.
\end{proposition}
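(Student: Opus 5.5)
The plan is to follow the proof of \cref{prop:certificate}, swapping Hoeffding's inequality for the empirical Bernstein inequality of Maurer and Pontil and then taking the same union bound. The key input is the following. For iid variables $Y_i\in[0,1]$ with $n\geq2$ samples and sample variance $\widehat V_Y$, each one-sided deviation $\E Y-\bar Y$ exceeds
\[
\sqrt{\frac{2\widehat V_Y\log(2/\delta)}{n}}+\frac{7\log(2/\delta)}{3(n-1)}
\]
with probability at most $\delta$.

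First I will put the gaps on the unit scale, since the paired gaps $X_{i,g,a}$ lie in $[-B_a,B_a]$. Set $Y_i=(X_{i,g,a}+B_a)/(2B_a)\in[0,1]$. Then $\widehat V_Y=\widehat V_{g,a,l}/(4B_a^2)$, and deviations of $\bar Y$ multiply by $2B_a$ when mapped back to the gap scale.

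Second, I will apply the inequality to both tails, spending $\delta_{a,l}/2$ on each. Using $\log(2/(\delta/2))=\log(4/\delta)$ and rescaling, the variance term becomes
\[
2B_a\sqrt{\frac{2\widehat V_{g,a,l}\log(4/\delta_{a,l})}{4B_a^2 n_{a,l}}}=\sqrt{\frac{2\widehat V_{g,a,l}\log(4/\delta_{a,l})}{n_{a,l}}},
\]
and the range term becomes $2B_a\cdot 7\log(4/\delta_{a,l})/\{3(n_{a,l}-1)\}=14B_a\log(4/\delta_{a,l})/\{3(n_{a,l}-1)\}$. This is exactly \eqref{eq:eb}. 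So for a fixed cell $(g,a,l)$ we get $\Pr(|\widehat\Delta_{g,a,l}-\Delta_{g,a}|>h^{\mathrm{EB}}_{g,a,l})\leq\delta_{a,l}$. As in \cref{prop:certificate}, this holds conditionally on pre-audit information and needs only iid sampling within the cell.

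Third, I will take a union bound over $g\in\mathcal G$, $a\in\mathcal A$ and the scheduled looks. The total failure probability is at most
\[
\sum_l|\mathcal G||\mathcal A|\,\alpha q_l/(|\mathcal G||\mathcal A|)=\alpha\sum_l q_l\leq\alpha.
\]
The triangle inequality then gives the acceptance implication exactly as before. No independence across cells is used.

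I expect the main obstacle to be bookkeeping rather than substance. Three details need checking: that the cited constant $7/3$ is for the one-sided bound with $\log(2/\delta)$, which it is in Maurer--Pontil Theorem 4; that the sample variance is normalized by $n-1$ consistently with that theorem; and that $n_{a,l}\geq2$ so the half-width is defined. A further point is that $h^{\mathrm{EB}}$ is data-dependent. This causes no problem, because the Maurer--Pontil bound is already a statement about the random width, and the union bound needs only the per-cell event probabilities.
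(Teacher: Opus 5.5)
Your proposal is correct and follows the paper's proof essentially step for step. You rescale the gaps by the range $2B_a$ onto a unit interval, apply Maurer and Pontil's one-sided inequality to each tail at level $\delta_{a,l}/2$ so that the variance term is unchanged and the range term picks up the factor $2B_a$ (hence $14=7\times 2$), and then take the same alpha-spending union bound as in \cref{prop:certificate}. Your added checks, the $n-1$ normalization of the sample variance and the requirement $n_{a,l}\geq 2$, are left implicit in the paper but are genuinely needed for the cited theorem.
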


\begin{proof}
Maurer and Pontil's inequality is stated for observations in an interval of unit length \cite{maurer2009empirical}. Rescale the gaps by their range $R_a=2B_a$, so that $X/R_a$ lies in an interval of unit length with sample variance $\widehat V/R_a^2$. Applying the one-sided inequality to each tail at level $\delta_{a,l}/2$ and multiplying through by $R_a$ leaves the variance term unchanged, because $R_a\sqrt{\widehat V/R_a^{2}}=\sqrt{\widehat V}$, and multiplies the second term by $R_a=2B_a$, which is the source of the factor $14=7\times 2$ in \eqref{eq:eb}. Union bounding over candidates, analyses, and looks exactly as in \cref{prop:certificate}, the spending weights again give total error at most $\alpha\sum_lq_l\leq\alpha$.
\end{proof}

The range factor matters in practice as well as in principle: with $B_a=1$ the second term is $28\log(4/\delta)/\{3(n-1)\}$, roughly $0.0025$ at the final Adult look, so dropping the factor of two would understate the reported half-width by about a fifth. The implementation and every number below use the range $R_a=2$.

A second instrument removes the look schedule altogether. A predictable plug-in empirical Bernstein confidence sequence \cite{waudbysmith2024betting} is valid simultaneously at every sample size, so it may be evaluated at any stopping time, including a data-dependent decision to stop the audit. Multiplicity across the panel is still paid by dividing $\alpha$, but no alpha spending across looks is needed, because time-uniformity is built into the construction.

Both instruments must be pre-registered like any other element of the locked panel. Neither may be applied after the fact to revise a recorded outcome: choosing the bound family after seeing which one passes is precisely the adaptation that \cref{prop:reuseimpossibility} forbids. We therefore report them as a measurement on the already-locked Adult evidence, with every recorded pass and fail left untouched, and we recommend them for pre-registration in future panels.

\Cref{fig:boundfamily} reports that measurement. The gap vectors are recomputed from the byte-reproducible locked artifacts, and the recomputed Hoeffding half-widths reproduce the locked certificate to within $10^{-9}$, which is the gate that makes this a comparison of instruments on identical evidence, not a new experiment. At the final $16{,}000$-record look the Hoeffding half-width is $0.0316$ for every candidate. The scheduled empirical Bernstein half-widths range from $0.0067$ for the Gaussian copula to $0.0138$ for TVAE, a tightening of $2.3$ to $4.7$ times; the anytime confidence sequence ranges from $0.0032$ to $0.0082$, a tightening of $3.9$ to $9.8$ times. The ordering is informative: the copula tightens most because its gaps are the most concentrated, and TVAE least because its gaps are the most dispersed, so the variance-adaptive width reports something about each candidate that the range-based width cannot.

The practical consequence is a much larger effective audit. Under \cref{prop:power} the audit size needed for a target margin scales as the squared half-width, so a fourfold tightening corresponds to roughly a sixteenfold reduction in required audit rows at fixed margin, or a proportionally finer certifiable tolerance at fixed sample size. For the locked Adult panel the Hoeffding width consumed slightly more than half of the $0.060$ tolerance before any evidence was considered; the empirical Bernstein width consumes about a ninth of it.

\begin{figure}[t]
\centering
\includegraphics[width=0.94\linewidth]{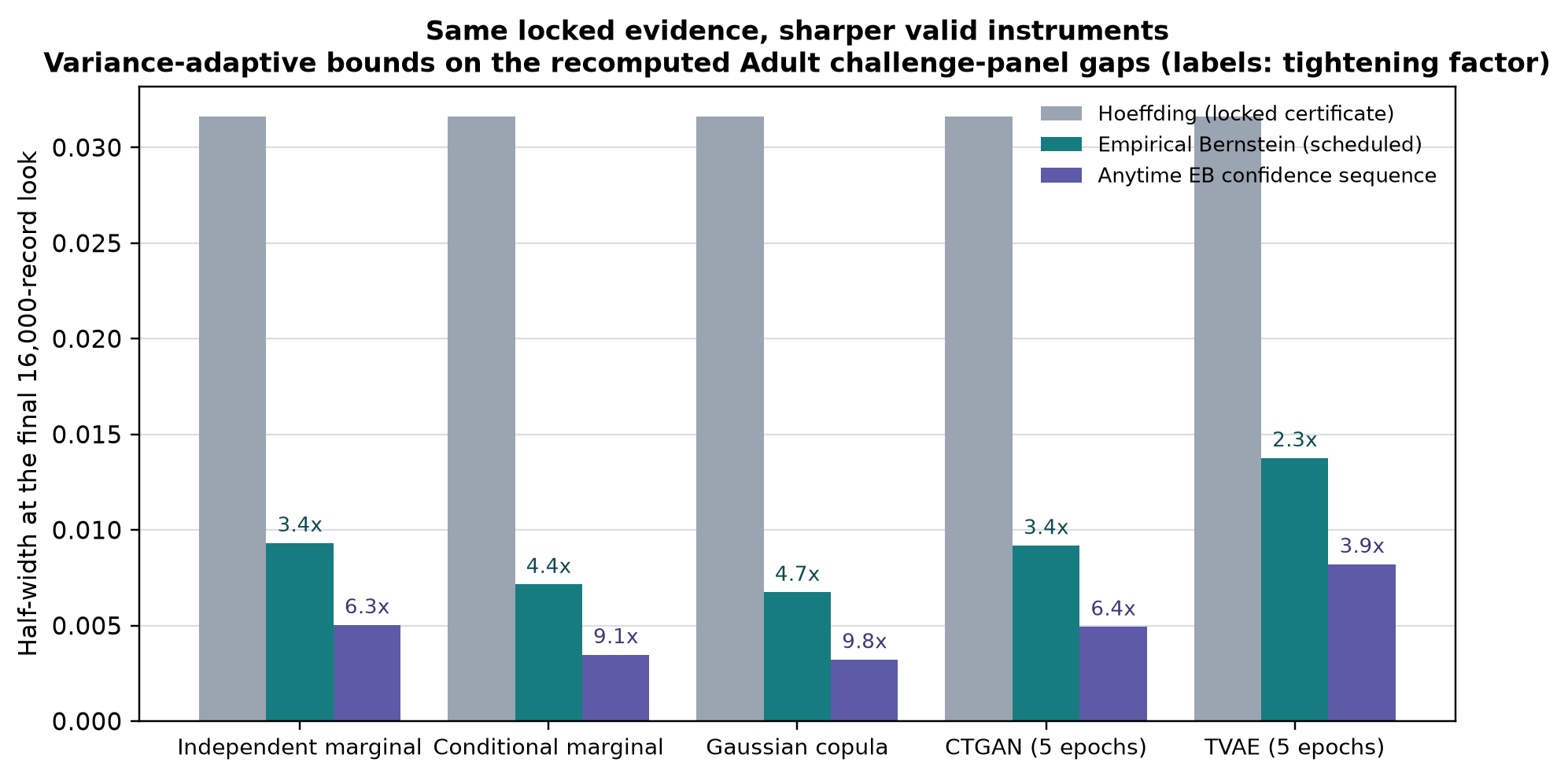}
\caption{Same locked evidence, sharper valid instruments. Half-widths at the final $16{,}000$-record look of the Adult challenge panel, recomputed from the locked artifacts. Labels give the tightening factor relative to the Hoeffding certificate. The comparison revises no recorded outcome; it quantifies what a pre-registered variance-adaptive instrument would have delivered.}
\label{fig:boundfamily}
\end{figure}
\FloatBarrier

\subsection{Audit units beyond the row}
\label{sec:auditunits}

The certificate so far assumes the audit rows are exchangeable. Two common designs break that assumption in different ways, and each needs its own audit unit rather than a correction factor.

\begin{proposition}[Scheduled simultaneous independent-block audit]
\label{prop:blockcertificate}
Fix the same candidate, analysis, and scheduled-look panels. For each analysis $a$, suppose that its pre-specified audit blocks $b=1,\ldots,m_{a,l}$ are independent and identically distributed conditional on the locked development process, and that the within-block mean paired loss gap $\bar X_{b,g,a}$ lies in $[-B_a,B_a]$. Replacing $n_{a,l}$ in \cref{eq:hoeffding} by $m_{a,l}$ gives a simultaneous bound for the equally block-weighted estimand $\Delta^{\mathrm{block}}_{g,a}=\E[\bar X_{b,g,a}]$. Thus the rule $|\widehat\Delta^{\mathrm{block}}_{g,a,l}|+h^{\mathrm{block}}_{a,l}\leq\tau_a$ is valid on the same $1-\alpha$ event.
\end{proposition}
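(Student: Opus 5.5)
The plan is to reduce the statement to \cref{prop:certificate} by treating each block as a single audit unit. For fixed $(g,a,l)$, the observations fed to the bound will be the block means $Y_{b,g,a}=\bar X_{b,g,a}$, $b=1,\ldots,m_{a,l}$, rather than the row-level gaps. By hypothesis, conditional on the locked development process (which fixes $f_{S,g,a}$, $f_{R,a}$, and the block definitions), these $Y_b$ are iid and lie in $[-B_a,B_a]$. Their common mean is exactly $\Delta^{\mathrm{block}}_{g,a}$. The estimator $\widehat\Delta^{\mathrm{block}}_{g,a,l}=m_{a,l}^{-1}\sum_b Y_{b,g,a}$ is their sample mean. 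Within-block dependence among rows plays no role, because only the block mean enters the analysis.

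The key steps, in order, are as follows.
\begin{enumerate}
\item Apply Hoeffding's inequality to the $m_{a,l}$ block means. With $h^{\mathrm{block}}_{a,l}$ equal to \eqref{eq:hoeffding} with $n_{a,l}$ replaced by $m_{a,l}$, this gives a tail probability of at most $\alpha q_l/(|\mathcal G||\mathcal A|)$, exactly as in the proof of \cref{prop:certificate}.
\item Union bound over candidates, analyses, and scheduled looks, for a total error of at most $\alpha\sum_l q_l\le\alpha$. As before, no independence across candidates or analyses is needed, since they share blocks.
\item Use the triangle inequality on the resulting event. The rule $|\widehat\Delta^{\mathrm{block}}|+h^{\mathrm{block}}\le\tau_a$ then implies $|\Delta^{\mathrm{block}}_{g,a}|\le\tau_a$.
\end{enumerate}

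The main point needing care is the estimand and the look structure, not the concentration. First, I will note that $\Delta^{\mathrm{block}}$ weights blocks equally. It therefore coincides with the row-level $\Delta_{g,a}$ only when block sizes are equal or unrelated to the within-block gap. The statement is accordingly phrased for the block-weighted target, and I will not claim more. Second, at scheduled looks the block counts $m_{a,l}$ must be fixed in advance, and each look must use a complete set of pre-specified blocks. The iid condition is then per look exactly as in \cref{prop:certificate}, and a block must never be split across looks. Third, the boundedness of $\bar X_b$ follows automatically if each row gap lies in $[-B_a,B_a]$, since an average of bounded terms is bounded. This is worth remarking but is not required beyond the hypothesis.
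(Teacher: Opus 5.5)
Your proposal is correct and follows the paper's own proof. Both treat the $m_{a,l}$ iid bounded block means as the audit units and apply Hoeffding with $n_{a,l}$ replaced by $m_{a,l}$, which gives per-cell error $\alpha q_l/(|\mathcal G||\mathcal A|)$, then reuse the alpha-spending union bound and triangle inequality from \cref{prop:certificate}. Your side remarks on equal block weighting and on pre-specified blocks at each look are consistent with the paper's Block scope remark, but the argument does not need them.
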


\begin{proof}
For each fixed candidate, analysis, and look, the independent bounded block means satisfy Hoeffding's inequality. Applying the same alpha-spending allocation and union bound as in \cref{prop:certificate} proves the claim.
\end{proof}

\begin{remark}[Block scope]
The block certificate changes the estimand: every independent block has equal weight, not every row. It requires the same named blocks for every candidate in an analysis. The implementation records both the number and sizes of blocks, but no algorithm can infer independence from a CSV file. In particular, the public BRFSS and Chicago temporal tracks below are not retroactively certified by this proposition.
\end{remark}

\paragraph{Stress test of the audit unit.} We ran a fixed-seed 2,000-replicate diagnostic with 50 independent blocks of 200 perfectly dependent rows each. Within a block, all rows shared one Rademacher loss gap, so the true equally block-weighted gap was zero. Treating all 10,000 rows as iid gave empirical coverage of only 0.0975 (Monte Carlo standard error 0.0066) for the nominal 0.95 zero-gap target. Applying \cref{prop:blockcertificate} to the 50 independent block means gave coverage 0.9965 (Monte Carlo standard error 0.0013). \Cref{fig:blockstress} is deliberately stylized: it validates the audit-unit distinction in a known dependence setting, not the dependence structure of any public data set.

\paragraph{Time-ordered audits.}
\label{sec:temporalcertificate}
\Cref{prop:blockcertificate} needs blocks whose independence follows from the sampling design. Time-ordered audit records rarely offer that: consecutive rows are dependent, and the dependence decays rather than vanishing at a known boundary. This left the BRFSS and Chicago tracks outside the certificate, which was an honest gap but a gap nonetheless. The following fills it under a declared, and clearly falsifiable, dependence horizon.

\begin{proposition}[Scheduled buffered-block temporal certificate]
\label{prop:temporal}
Fix the same candidate, analysis, and scheduled-look panels. Suppose the audit rows for analysis $a$ arrive in a fixed time order and the paired gap process is $m_a$-dependent, meaning gaps separated by more than $m_a$ positions are independent, with identically distributed block means across the window. Partition the window into non-overlapping blocks of length $L_a$ separated by buffers of length at least $m_a$, and let $\bar X_{b,g,a}$ denote the retained block means, $K_{a,l}$ their number at look $l$. Then the retained block means are iid and bounded, so applying \eqref{eq:hoeffding} or \eqref{eq:eb} with $n_{a,l}$ replaced by $K_{a,l}$ yields a simultaneous $1-\alpha$ bound for the equally block-weighted estimand $\Delta^{\mathrm{tmp}}_{g,a}=\E[\bar X_{b,g,a}]$.
\end{proposition}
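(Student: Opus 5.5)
The plan is to reduce \cref{prop:temporal} to \cref{prop:blockcertificate}, or more directly to the proof of \cref{prop:certificate}. The one new ingredient is that the buffers turn $m_a$-dependence into exact independence of the retained block means. Fix a candidate $g$, analysis $a$ and look $l$, and let the retained blocks occupy index sets $J_1,\ldots,J_{K_{a,l}}$ in time order. Consecutive retained blocks are separated by a buffer of length at least $m_a$. Here one must fix an indexing convention: if \emph{separated by more than $m_a$ positions} means index difference strictly greater than $m_a$, then a buffer of exactly $m_a$ omitted rows gives index difference $m_a+1$, so any $i\in J_b$ and $j\in J_{b'}$ with $b\neq b'$ satisfy $|i-j|>m_a$. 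I would state this explicitly, and if the convention is off by one I would take buffers of length $m_a+1$.

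\textbf{Independence.} I would upgrade pairwise separation to mutual independence of the blocks. Pairwise separation of individual gaps is not enough on its own. The right notion of $m_a$-dependence is the standard one: the $\sigma$-fields generated by $\{X_i: i\le t\}$ and $\{X_i: i> t+m_a\}$ are independent for every $t$. Under this reading I would induct on the number of blocks. The vector $(\bar X_{1},\ldots,\bar X_{b})$ is measurable with respect to the past up to the end of $J_b$, and $(\bar X_{b+1},\ldots)$ is measurable with respect to the future starting after the buffer. Hence the last block is independent of the collection of earlier ones, and iterating gives joint independence of $\bar X_{1,g,a},\ldots,\bar X_{K,g,a}$. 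This is the step I expect to be the main obstacle. The statement's informal definition reads as pairwise and must be interpreted as the $\sigma$-field (block) version for the claim to hold, so I would make that interpretation explicit.

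\textbf{Identical distribution, boundedness and the deviation bound.} Identical distribution of the block means is assumed directly in the statement. Each $\bar X_{b,g,a}$ is an average of values in $[-B_a,B_a]$, so it lies in the same interval. The retained block means for the cell $(g,a,l)$ are therefore iid bounded variables with mean $\Delta^{\mathrm{tmp}}_{g,a}$. This is exactly the hypothesis of \cref{prop:blockcertificate}, with blocks determined by the pre-declared $L_a$, $m_a$ and the time order rather than by a sampling design. Hoeffding's inequality with $n_{a,l}$ replaced by $K_{a,l}$ then bounds the per-cell failure probability by $\alpha q_l/(|\mathcal G||\mathcal A|)$. For the variance-adaptive version I would instead invoke the per-cell Maurer--Pontil step from the proof of \cref{prop:eb}, applied to the $K_{a,l}$ block means with their sample variance.

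\textbf{Union bound.} The final step is the union bound over candidates, analyses and looks, exactly as in \cref{prop:certificate}, giving total error at most $\alpha\sum_l q_l\le\alpha$. As there, no independence across candidates or analyses is needed. One point requires care. $K_{a,l}$ and the block partition must be fixed before the audit, which holds because $L_a$ and $m_a$ are declared. The nested looks reuse earlier blocks, but the union bound handles this without any independence across looks. The certification rule then follows on this event by the triangle inequality, as before.
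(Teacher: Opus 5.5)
Your proposal is correct and follows the paper's proof. Buffers of length at least $m_a$ make the retained block means independent, identical distribution is assumed, each block mean lies in $[-B_a,B_a]$, and the Hoeffding or empirical Bernstein bound with $K_{a,l}$ summands is combined with the same alpha-spending union bound as in \cref{prop:certificate}. Your reading of $m_a$-dependence as independence of past and future $\sigma$-fields, with the induction to mutual independence, makes explicit what the paper's one-line step (``their gap vectors are independent; hence the retained block means are independent'') leaves implicit.
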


\begin{proof}
Two retained blocks are separated by at least $m_a$ positions by construction, so under $m_a$-dependence their gap vectors are independent; hence the retained block means are independent, and identically distributed by assumption. Each is an average of gaps in $[-B_a,B_a]$ and so lies in the same interval. The cited inequalities apply to these $K_{a,l}$ bounded iid summands, and the alpha-spending union bound of \cref{prop:certificate} carries over unchanged.
\end{proof}

Three limitations are structural, not incidental. The estimand changes: retained blocks are weighted equally and rows inside buffers are discarded, so the implementation reports both the number of retained blocks and the discarded row count. The horizon $m_a$ is declared, never inferred; no algorithm can read a dependence horizon off a CSV file, exactly as none can verify exchangeability for the row certificate. And a mis-declared horizon invalidates the certificate, so the honest posture is to declare $m_a$ generously.

A declared horizon is the bluntest available response to structured dependence, and we chose it on purpose. it buys exact independence between retained blocks at the cost of discarding buffer rows and of resting on an assumption no data can confirm. Finer instruments exist for the related problem of calibrated prediction under seasons, recurring regimes, and other structured non-exchangeability, for example spectral weighting of calibration residuals with an online correction to the miscoverage target \cite{opoku2026spectral}. Adapting that machinery to a release audit is not immediate, because the audit must commit to its uncertainty statement before the protected data are opened, whereas an online correction learns from the stream it is calibrating on. We therefore keep the blunt instrument here and flag the adaptive route as the natural next step for a temporal certificate.

\paragraph{Temporal coverage stress test.} We validated the audit unit on a fixed-seed 2,000-replicate design with known dependence: 10,000 time-ordered rows in regimes of 200 consecutive rows sharing a single Rademacher gap, so the true row-mean and block-mean gap are both exactly zero. Regime boundaries are unknown to the auditor and blocks are deliberately misaligned with them, so independence comes only from the buffer exceeding the horizon. Treating all rows as iid gave empirical coverage $0.1105$ (Monte Carlo standard error $0.0070$) against the nominal $0.95$. \Cref{prop:temporal} with blocks and buffers of 250 rows gave coverage $0.9995$ under the Hoeffding form and $1.0000$ under the empirical Bernstein form. The temporal instrument is conservative here by construction, since the regime design makes block means maximally dispersed; the point is that the row-level calculation is not merely conservative but wrong, missing its nominal level by a factor of eight.

\begin{figure}[t]
\centering
\includegraphics[width=0.80\linewidth]{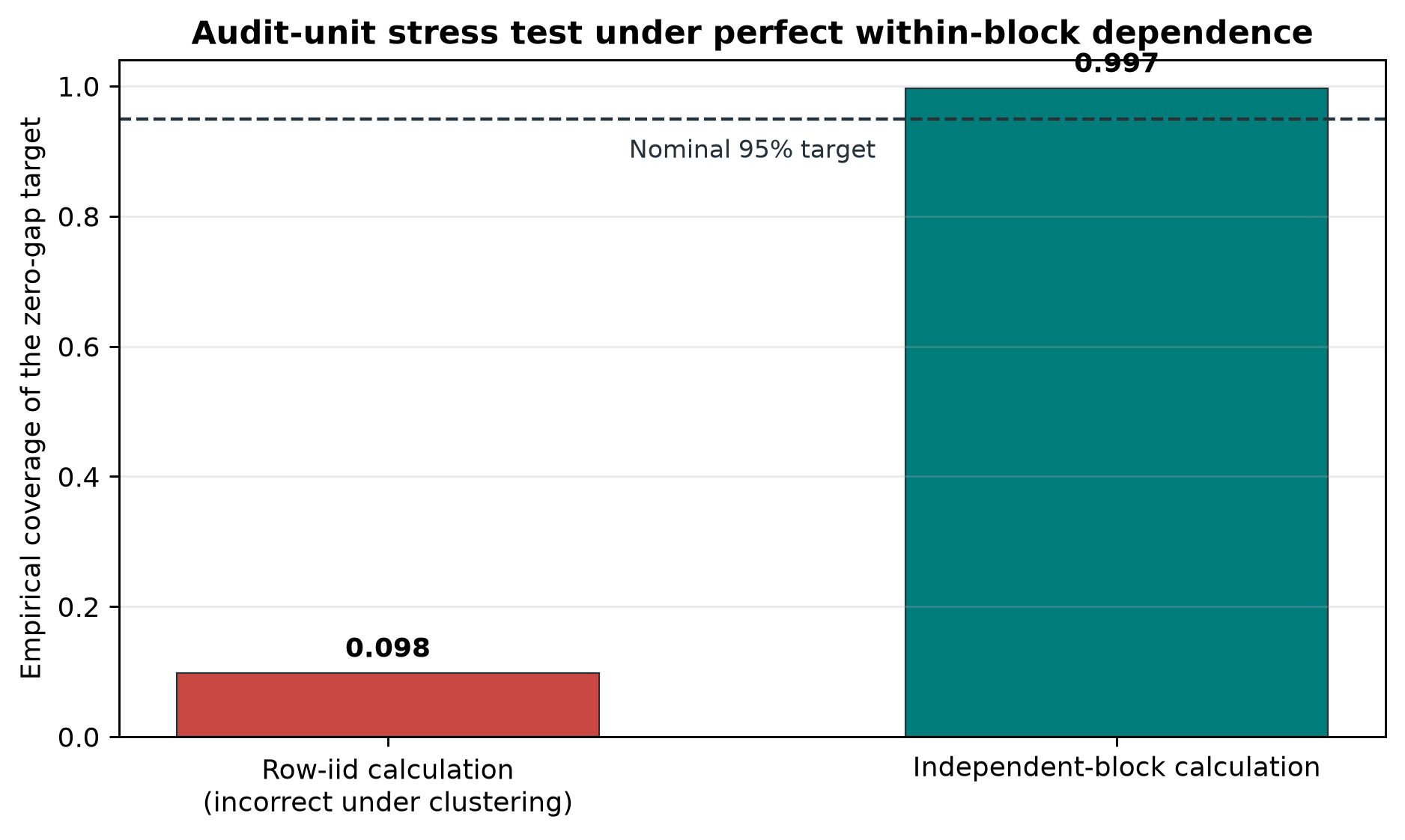}
\caption{Audit-unit stress test under perfect within-block dependence. The row-iid calculation is intentionally invalid in this setup and severely undercovers. The independent-block calculation uses 50 block means and is conservative here. This is a reproducible diagnostic for the protocol, not a substantive data result.}
\label{fig:blockstress}
\end{figure}

\subsection{Setting a tolerance, not merely locking one}
\label{sec:toleranceprotocol}

Locking a tolerance prevents it from being tuned after the audit is opened. It does not make the number defensible, and every conclusion in this paper depends on that number. We therefore add a pre-audit protocol for choosing it, using development data only, so that running the protocol consumes no protected audit.

The protocol rests on a \emph{real-versus-real floor}. Fit the declared workflow twice on two disjoint real samples of the size the synthetic candidate will receive, and score both against the same held-out development slice. The resulting absolute paired gap is irreducible: it is the disagreement the workflow already shows between two honest real datasets, with no synthesis involved. A tolerance below that floor asks a synthetic candidate to agree with the reference more closely than a second real sample of the same size does, which mostly measures sampling noise. The recommended tolerance is then the larger of this floor and a declared decision-cost anchor, the smallest loss difference the named use would act on, so that the tolerance is never tighter than the noise and never looser than the decision requires.

On the Adult case, with 60 replicates at the fitted size of 1,000 rows, the 95th-percentile real-versus-real gap is $0.0052$. The declared decision-cost anchor for the prototyping use is $0.02$, which therefore binds, and the protocol recommends $0.02$. The locked tolerance was $0.060$: about $11.6$ times the noise floor and three times the recommended value. We report it instead of quietly restating the locked panel, because it means the Adult tolerance was more permissive than a principled procedure would have chosen, and the reader should discount the corresponding passes accordingly.

The consequence is sharper than a caution, and it connects directly to \cref{sec:varianceadaptive}. At the recommended tolerance of $0.02$ the range-based Hoeffding half-width of $0.0316$ is by itself larger than the whole tolerance, so under that instrument \emph{no} candidate can be certified at this audit size whatever the observed gaps. Empirical Bernstein narrows the widths to $0.0067$--$0.0138$ but still certifies nothing, the selected copula missing at $0.0208$. The anytime confidence sequence certifies exactly one candidate, the Gaussian copula at $0.0173$, while the negative control, both learned models, and even the target-conditional marginal at $0.0201$ remain uncertified. A defensible tolerance at this sample size is thus reachable only with a variance-adaptive instrument, and when it is reached it admits one candidate rather than four. This is the clearest case for the sharper bounds. without them, an auditor must choose between an indefensibly loose tolerance and certifying nothing at all.

\Cref{fig:tolerance} reports the protocol and the resulting sensitivity. Panel B makes the fragility of a single locked number visible directly: each candidate's decision flips exactly at its own reported upper gap, so a reader can see how far each pass sat from the threshold, not only that it passed. The selected copula flips at $0.0457$ and the target-conditional marginal at $0.0482$, both comfortably below the locked $0.060$ but far above the recommended $0.02$; the two learned models flip near $0.15$ and the control at $0.0959$, so those failures are robust to any plausible choice.

Two limits apply. The floor is estimated for one workflow, one loss, and one fitted size, and must be recomputed for any other; and the decision-cost anchor is a judgement about the declared use that no data can supply. The protocol states both openly instead of leaving the tolerance unexplained.

The first limit is an instance of a general problem: a quantitative audit is only meaningful over the range of conditions its evidence actually covers, and a threshold calibrated at one scale silently stops applying at another. The same difficulty arises outside synthetic data, for example in prospectivity modelling, where predictions must be audited against the range of evidence that supports them rather than extrapolated across scales \cite{opoku2026gold}. Our floor inherits that constraint directly: it is calibrated at 1,000 fitted rows, and \cref{sec:seedstability} shows the learned candidates behaving quite differently at 4,000 and 16,000, so a tolerance imported across sizes without recomputation would be unsupported by its own evidence.

\begin{figure}[t]
\centering
\includegraphics[width=0.98\linewidth]{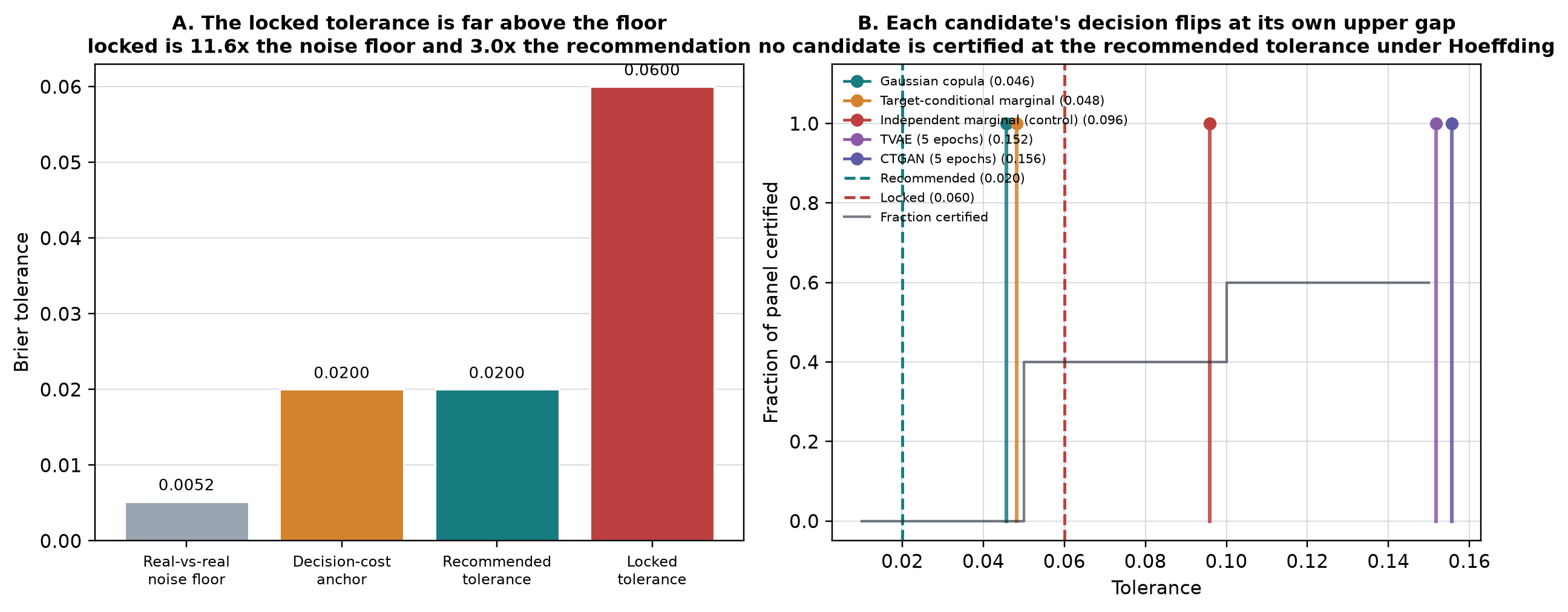}
\caption{Pre-audit tolerance protocol on the Adult case. Panel A places the locked tolerance against the development-only real-versus-real noise floor and the declared decision-cost anchor; the locked value is about eleven times the floor. Panel B sweeps the tolerance and marks the point at which each candidate's decision flips, which is its reported upper gap. Failures of the negative control and both learned models are robust across the whole range; the two passes are not.}
\label{fig:tolerance}
\end{figure}
\FloatBarrier

\section{Release Evidence Is More Than a Certificate}

\sg{} has four evidence layers. First, the utility layer records every locked loss gap, tolerance, sample size, and simultaneous bound. Second, the privacy layer records a named attacker, its information and access, calibration data, and held-out evaluation result. Third, the mechanism layer records any formal privacy statement, including its adjacency relation and composition accounting. Fourth, the governance layer records whether the audit was independent of fitting and whether the panel was locked before evaluation.

This structure prevents a common but harmful shortcut: treating a favorable utility plot, a low empirical attack score, or an $\eps$ value as a universal release authorization. The output is an evidence packet, not a single scalar rank. A release gate may remain blocked even when all declared technical utility checks pass.

\section{Benchmark Design}
\label{sec:design}

\subsection{Public and sealed tracks}

The public measurement track is intentionally reproducible. Its data source, splits, code, seeds, and output artifacts are available for scrutiny. This makes it useful for method comparison and regression testing, but it also means that it cannot stand in for a confidential audit of a future release. The sealed-audit track addresses that gap: an organization runs the same evaluator locally; it keeps audit rows inside the organization; and it exports only aggregate receipts, a technical certificate, and a blocked-or-not-blocked governance result.

\subsection{Candidate families and controls}

The completed studies use transparent baselines instead of claiming to establish a current best generator contest. The negative control independently resamples each feature and deliberately destroys dependence. The non-private utility comparator samples features conditionally on the target but makes no privacy claim. The DP candidates use a target-conditional, independent-histogram mechanism with Laplace noise. Under add/remove-one-record adjacency and fixed public domains, each mechanism releases one target histogram and ten target-by-feature contingency tables; basic sequential composition gives pure $\eps$-DP with $\eps/11$ allocated to each histogram release. Clipping, normalization, and sampling are post-processing.

This deliberately narrow generator set makes the protocol auditable and exposes a basic failure mode. The record-disjoint California extension adds a regularized Gaussian-copula comparator. A fresh UCI Online Shoppers stress test further adds one compact, non-private CTGAN under a fixed CPU budget; it is a deliberately limited learned baseline, not a neural-model contest. A newly locked UCI Adult case then runs the full challenge panel, adding a matched small TVAE next to CTGAN under the identical budget \cite{xu2019ctgan}. A post-audit Adult sensitivity arm repeats only the learned slots at their library-default 300-epoch budgets; it addresses compute sensitivity but cannot be added to the locked certificate panel after audit access. A fresh patient-unique UCI Diabetes 130-US Hospitals panel applies the same matched learned slots; because its negative control passes the final grid, it is retained as excluded calibration evidence rather than added as a certificate case \cite{clore2014diabetes}. A prior exploratory CTGAN experiment is retained in the repository only as an invalidated record because its purported fresh replication was not record-disjoint. It is not used as primary evidence in this paper. This correction is documented, not silently removed.

\subsection{Completed studies}

\paragraph{California DP frontier.} From the 2024 ACS California public microdata, we formed a fixed 80,000-row cohort: 48,000 development rows, 16,000 selection rows, and 16,000 audit rows. The candidates were the independent-marginal negative control, the target-conditional non-private comparator, and three pure-DP conditional histograms with $\eps\in\{0.5,1,4\}$. Five analyses were locked: predictive Brier loss, clipped log loss, normalized income mean-squared error, and Brier loss in two pre-specified sex groups. The candidate selected on selection Brier loss was the $\eps=0.5$ DP histogram.

\paragraph{California generator-breadth extension.} We constructed a second 80,000-row California cohort from source rows disjoint from the frontier cohort: 48,000 development rows, 16,000 selection rows, and 16,000 protected audit rows. The fixed six-candidate panel added a target-conditional regularized Gaussian copula to the two non-private marginal candidates and the three DP histograms. The extension used the same five locked analyses, tolerances, and scheduled audit sizes. The Gaussian-copula comparator was selected on the separate selection Brier score; it is a non-private statistical comparator, not a privacy mechanism.

\paragraph{New York robustness grid.} A fresh 100,000-row New York ACS cohort used 50,000 development, 20,000 selection, and 30,000 audit rows. We ran a fully pre-specified 33-candidate grid: three training sizes (2,500, 10,000, and 50,000), the two non-private controls, and the three DP budgets at three fixed seeds. The rare-group diagnostic was fixed before the run. No candidate was selected after audit inspection.

\paragraph{Florida RiskLab.} A 50,000-row Florida ACS cohort was divided into a 20,000-row generator-training partition, 5,000-member and 5,000-nonmember calibration partitions, 15,000-member and 10,000-nonmember evaluation partitions, and a 15,000-row unused reserve. The named membership threat model uses a full-record nearest-neighbor score under a documented public-domain encoding. The attack threshold is calibrated without the held-out evaluation records. We also report an attribute-prediction generalization-gap diagnostic. A row-resampling positive control is deliberately included so that a weak attack battery cannot look reassuring merely because it cannot detect an obvious leak.

\paragraph{Non-ACS control calibration.} The UCI Bank Marketing pilot used 27,000 development rows, 9,000 selection rows, and 9,000 audit rows, with pre-contact predictors only. It is retained as a threshold-calibration failure because its independent-marginal negative control passed the final technical grid. A separately locked UCI Default of Credit Card Clients study used 15,000 development rows, 6,000 selection rows, and 9,000 audit rows. It excluded the source sex field, used one primary overall-Brier analysis, and required rejection of the independent-marginal control. These public historical records are used only to test the audit protocol; they are not used to make or validate credit decisions, eligibility decisions, fairness claims, or deployment decisions \cite{moro2014bank,yeh2009default}.

\paragraph{Fresh fixed-budget learned baseline.} The UCI Online Shoppers study used 5,000 development sessions, 1,330 selection sessions, and 6,000 audit sessions from a fixed public historical table. Four non-private candidates were fixed: independent marginal, target-conditional marginal, a regularized Gaussian copula, and CTGAN. Each used the same 1,000-session random development subsample. CTGAN was fixed at five CPU epochs, batch size 500, PAC 10, and 64-dimensional embedding and hidden widths. The runner acquired an exclusive process lock, wrote candidate artifacts and hashes, and wrote its locked manifest before reading selection or audit data. This technical stress test neither supports advertising or consumer targeting nor supports privacy, legal, or deployment claims \cite{sakar2018online}.

\paragraph{Fresh five-role challenge panel.} The UCI Adult study is the first case run under the multi-domain challenge panel rule. The locked public cohort has 48,000 sampled records: 24,000 for development, 8,000 for selection, and 16,000 protected audit rows. All five required candidates were fixed before the audit: the independent-marginal negative control, the target-conditional marginal, the regularized Gaussian copula, and small CTGAN and TVAE models under the shared CPU-only budget of five epochs, batch size 500, 64-dimensional embeddings, and two 64-unit hidden layers \cite{becker1996adult,xu2019ctgan}. Every candidate was fit on the same random 1,000-record development subsample. Two earlier UCI Adult attempts are kept in the repository as invalidated records: one stopped before the audit, and one was ruined when two processes wrote into the same folder. This study therefore uses its own unique result folder, refuses to start while any earlier runner lock exists, writes its compressed artifacts with a fixed timestamp so their hashes can be reproduced byte for byte, keeps wall-clock timings out of the locked evidence files, and publishes an automatic consistency check across its configuration, metadata, artifact hashes, and summary.

\paragraph{Post-audit compute sensitivity.} After inspecting the locked Adult audit, we ran a separate sensitivity arm on the identical 1,000-record development subsample. CTGAN and TVAE each used 300 CPU epochs and batch size 500. CTGAN used a 128-dimensional embedding, two 256-unit generator and discriminator layers, and PAC 10; TVAE used a 128-dimensional embedding and two 128-unit compression and decompression layers. The selection partition, audit rows, schedule, and tolerance were reused only to measure how the learned-model gaps changed with a larger budget. Because the protected audit had already been inspected, this arm was not pre-registered, its bounds are labeled nominal-only, and it makes no certificate or candidate-selection claim.

\paragraph{Fresh patient-unique health calibration.} The UCI Diabetes 130-US Hospitals study retains one source record per patient and uses 8,000 development, 2,000 selection, and 8,000 protected-audit records. It excludes direct identifiers, race, gender, weight, and diagnosis-code fields. Its five candidates, the independent marginal, target-conditional marginal, regularized Gaussian copula, compact CTGAN, and compact TVAE, all share a 1,000-record development subsample and the same CPU-only five-epoch, batch-500 learned-model budget. Candidate artifacts and hashes were written before the selection and audit files were read. This is a research-only public historical health-table audit; it is not clinical prediction, medical advice, treatment recommendation, fairness, privacy-compliance, legal, or deployment evidence \cite{clore2014diabetes}.

\paragraph{Outcome-free external industrial challenge.} On 2026-07-22 we froze a separate protocol around the UCI Steel Industry Energy Consumption table, a 35,040-row record from a South Korean steel plant \cite{uci2023steel}. The first 60\% of the chronologically sorted source is reserved for development, the next 20\% for selection, and the last 20\% for an external protected audit. The fixed panel adds TabDDPM and TabSyn to CTGAN, TVAE, the Gaussian copula, and two transparent controls \cite{kotelnikov2023tabddpm,zhang2024tabsyn}. Learned families receive the same one-hour accelerator budget at training sizes 2,000, 8,000, and 20,000 for three fixed seeds. Timeouts and incompatibilities must be reported. Because the source is time ordered, a row-iid certificate is prohibited. At manuscript freeze, no audit outcome had been returned; this is a pre-specified external challenge, not a completed result.

\section{Results}
\label{sec:results}

The studies are ordered by what each one demonstrates, not by data source: a locked utility frontier, then generator breadth on a record-disjoint cohort, then external-domain control calibration and the fixed-budget learned-model cases, then scale and seed robustness, then empirical privacy risk, and finally the governance gate. The ACS and UCI families are therefore interleaved by purpose, not grouped by origin.

\paragraph{Reading the decisions.} Every reported pass is an equivalence certificate for the named loss, tolerance, population, and frozen workflow under the stated audit assumption. Every reported failure or hold means that this certificate was not obtained; it is not, by itself, a test establishing that the true gap exceeds tolerance. Negative controls are expected to be held, and a negative control that certifies equivalence invalidates the case's calibration role.

\subsection{California: a locked utility frontier}

The selection Brier losses were 0.2502 for the independent-marginal control, 0.1703 for the conditional non-private comparator, and 0.1701, 0.1708, and 0.1702 for DP budgets 0.5, 1, and 4, respectively; the real-development reference was 0.1562. At the final 16,000-row scheduled audit look, the independent-marginal negative control failed the five-analysis grid. The conditional non-private comparator and all three DP candidates passed. Table~\ref{tab:ca} gives the complete final-look summary, Figure~\ref{fig:caoverview} separates selection from audit outcomes, and Figure~\ref{fig:caepsilon} shows the fixed-budget selection comparison. This is the intended behavior of the audit: it rejects a dependence-destroying baseline and reports a bounded comparison for a declared use. It is not evidence that the passing candidates are safe to release.

\begin{table}[t]
\centering
\small
\caption{California final-look summary. ``Worst upper gap'' is the largest $|\widehat{\Delta}|+h$ among the five declared checks; direct comparisons across analyses should still use each analysis's own tolerance.}
\label{tab:ca}
\begin{tabular}{lccc}
\toprule
Candidate & Selection Brier & All five checks pass? & Worst upper gap \\
\midrule
Independent marginal, non-private & 0.2502 & No & 0.1471 \\
Target conditional, non-private & 0.1703 & Yes & 0.0621 \\
DP conditional histogram, $\eps=0.5$ & 0.1701 & Yes & 0.0622 \\
DP conditional histogram, $\eps=1$ & 0.1708 & Yes & 0.0624 \\
DP conditional histogram, $\eps=4$ & 0.1702 & Yes & 0.0621 \\
\bottomrule
\end{tabular}
\end{table}

\begin{figure}[t]
\centering
\includegraphics[width=0.97\linewidth]{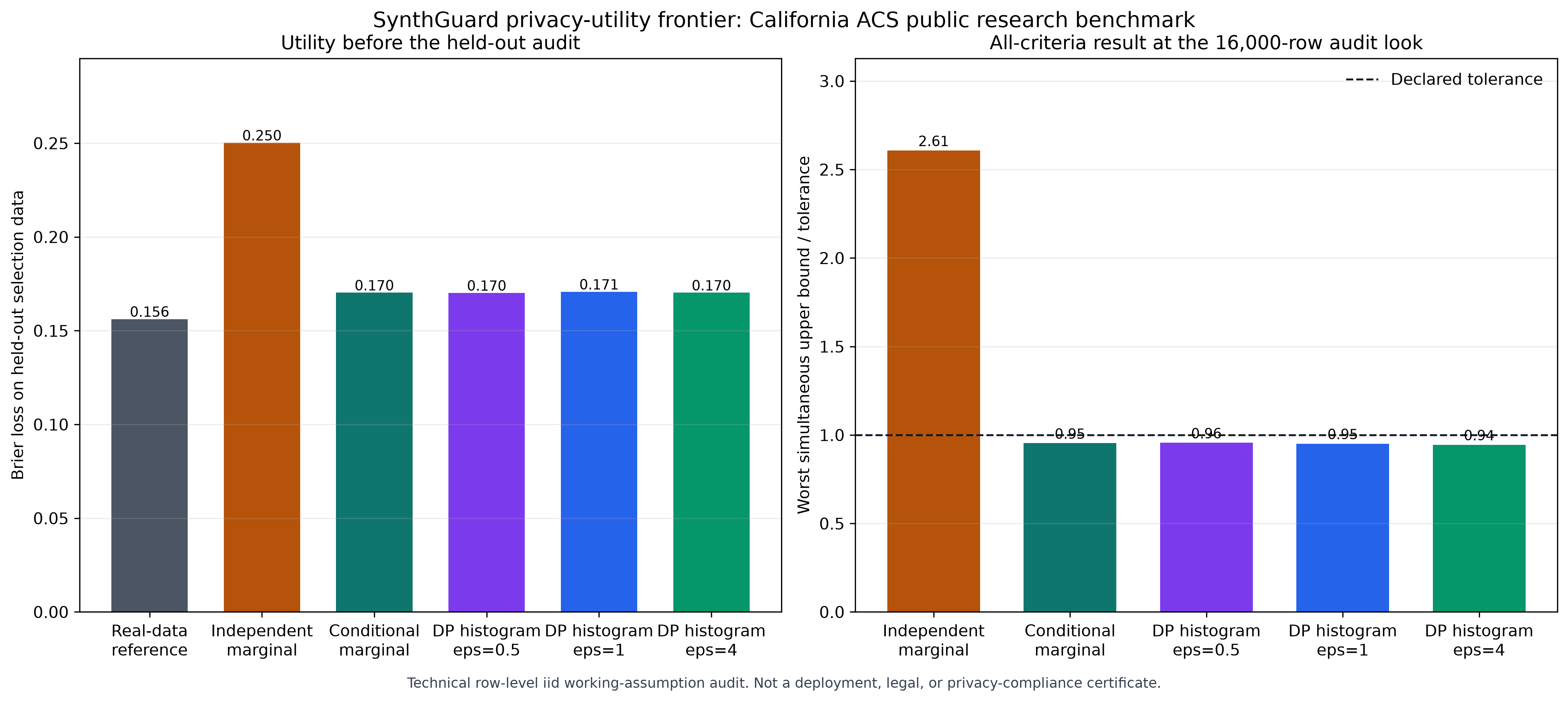}
\caption{California locked-audit overview. The negative control fails while the target-conditional and DP candidates pass the declared public technical checks at the final scheduled look. This figure is a use-specific diagnostic under the study assumptions, not a release authorization.}
\label{fig:caoverview}
\end{figure}

\begin{figure}[t]
\centering
\includegraphics[width=0.93\linewidth]{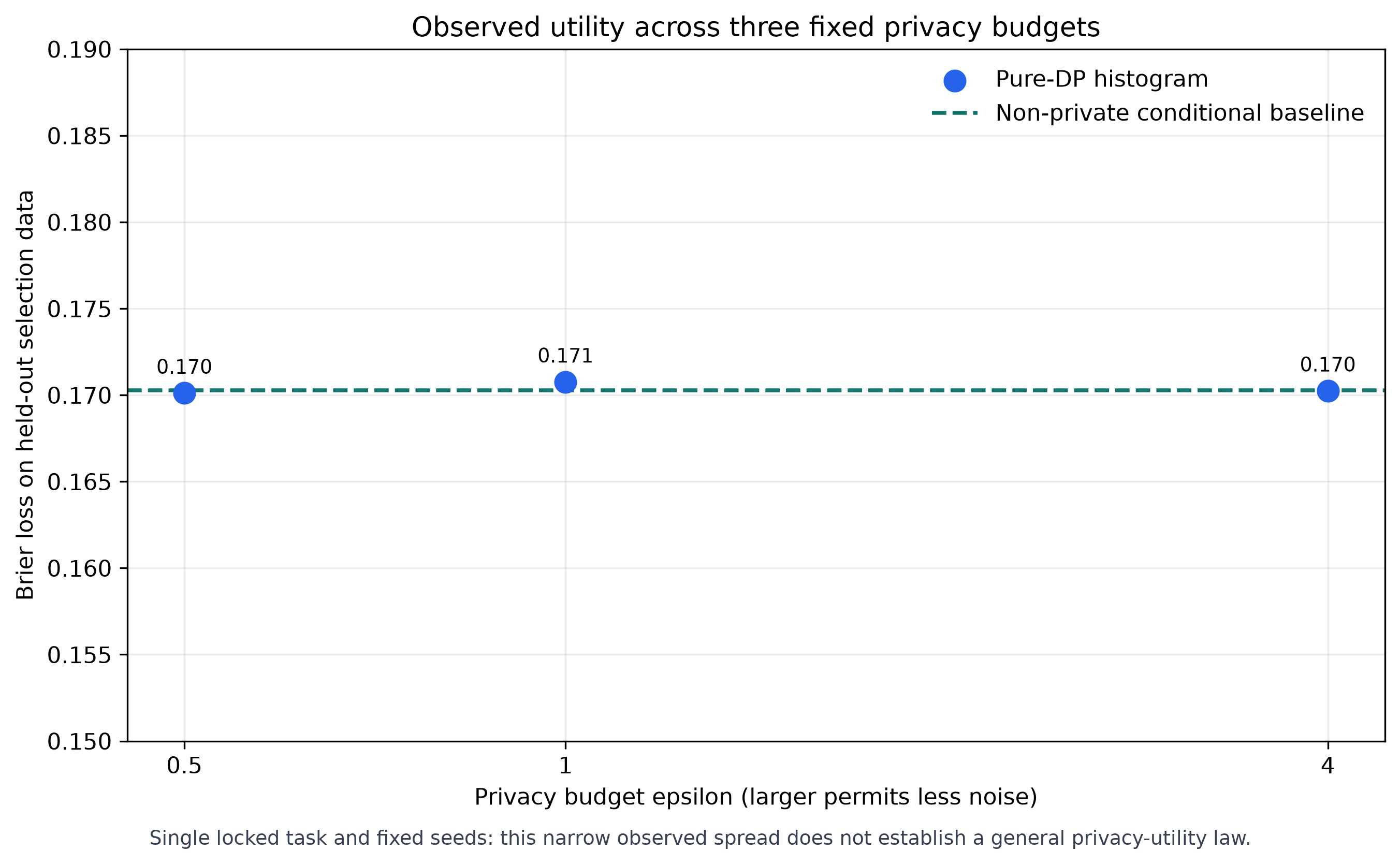}
\caption{California selection utility across the fixed DP budgets. The plot is descriptive for this study and is not a general law relating $\eps$ to utility.}
\label{fig:caepsilon}
\end{figure}
\FloatBarrier

\subsection{California record-disjoint extension: adding a dependency-aware comparator}

The extension tests a limited but important alternative to featurewise marginal synthesis. The source-row-overlap check between the extension and the original California frontier returned zero. The Gaussian-copula comparator had the lowest pre-audit selection Brier loss (0.1637), compared with 0.1726--0.1734 for the three DP histograms, 0.1728 for target-conditional marginals, and 0.2514 for the independence-breaking control. At the final 16,000-row scheduled look, the copula and every conditional candidate passed all five technical checks; the negative control failed. The selected copula's worst upper gap was 0.0544. Table~\ref{tab:cacopula} reports the full candidate grid, and Figure~\ref{fig:cacopula} shows the separate selection and audit stages. This is evidence that the declared audit can evaluate an additional transparent dependency model in this public workflow. It is not evidence that a Gaussian copula is generally preferable, and it makes no privacy claim.

\begin{table}[t]
\centering
\footnotesize
\caption{California record-disjoint generator-breadth extension. Selection occurs before audit. The final technical decision is separate from the blocked release-governance decision.}
\label{tab:cacopula}
\setlength{\tabcolsep}{3pt}
\begin{tabular}{p{0.45\linewidth}ccc}
\toprule
Candidate & Selection Brier & All pass? & Worst upper gap \\
\midrule
Independent marginal, non-private & 0.2514 & No & 0.1436 \\
Target conditional marginal, non-private & 0.1728 & Yes & 0.0627 \\
Target conditional Gaussian copula, non-private & 0.1637 & Yes & 0.0544 \\
DP conditional histogram, $\eps=0.5$ & 0.1734 & Yes & 0.0638 \\
DP conditional histogram, $\eps=1$ & 0.1726 & Yes & 0.0624 \\
DP conditional histogram, $\eps=4$ & 0.1728 & Yes & 0.0629 \\
\bottomrule
\end{tabular}
\end{table}

\begin{figure}[t]
\centering
\includegraphics[width=0.98\linewidth]{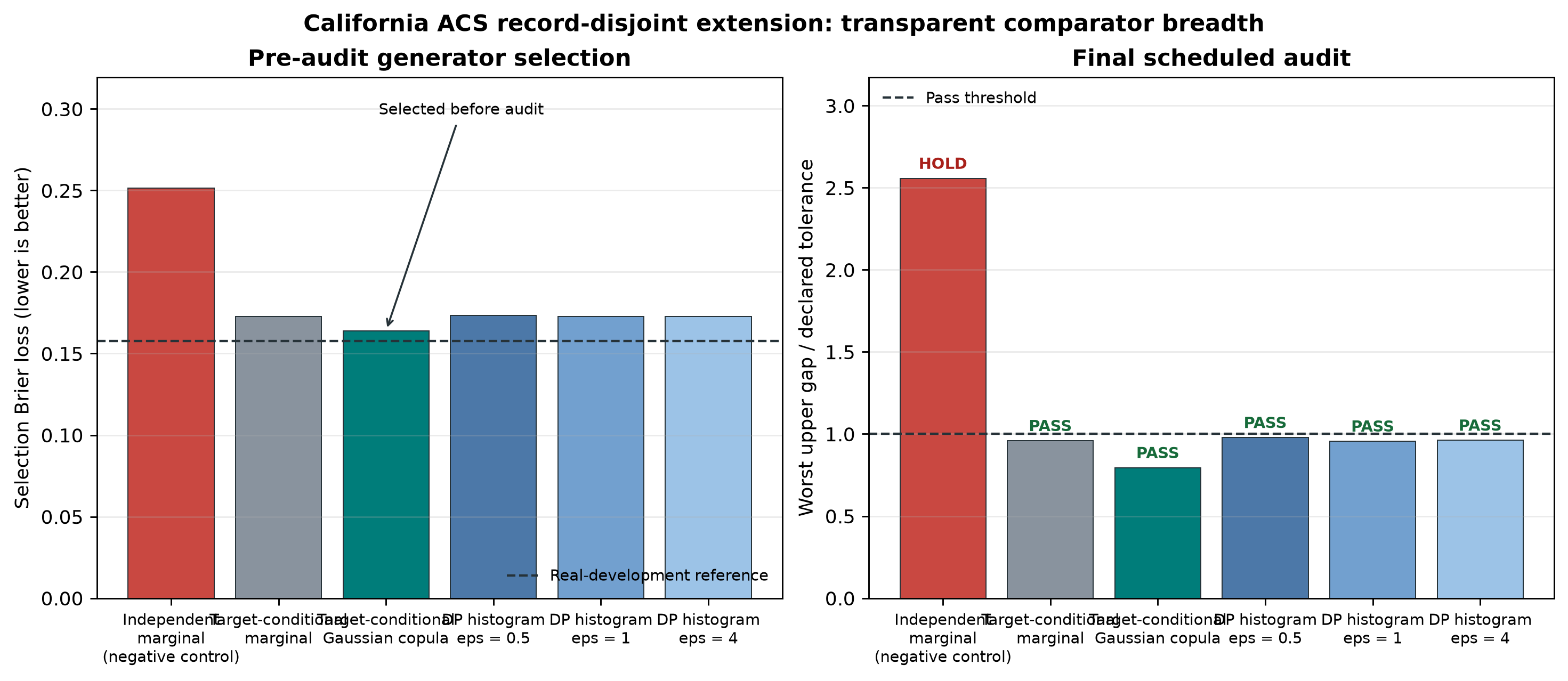}
\caption{Record-disjoint California extension. The left panel fixes the selected candidate using selection data only; the right panel reports the final simultaneous audit grid. A bar below the threshold passed the declared technical checks. The plot does not turn the non-private copula comparator into a privacy result or a release approval.}
\label{fig:cacopula}
\end{figure}
\FloatBarrier

\subsection{Non-ACS external-domain control calibration}

The external-domain record is intentionally mixed, not uniformly favorable. The UCI Bank Marketing pilot used a fixed three-candidate, five-analysis panel. At the final 9,000-record look, its independent-marginal negative control passed every declared check, including an overall-Brier upper gap of 0.0635 against a tolerance of 0.075. That outcome means the pilot's thresholds were not sufficiently discriminating for the intended control role. We retain the complete record as \code{EXCLUDED_NEGATIVE_CONTROL_PASSED}; we do not count it as favorable external-domain evidence or revise its tolerances after seeing the audit.

The UCI Default of Credit Card Clients case is a separate protocol fixed before its protected audit. It uses the same transparent three-candidate family, one primary Brier-loss check with tolerance 0.055, a 15,000 / 6,000 / 9,000 development-selection-audit split, and a control-rejection rule. The source sex field is excluded. The copula won selection Brier (0.1485; target-conditional marginal 0.1653; independent marginal 0.1739; real-development reference 0.1387). At the final scheduled look, Brier upper gaps were 0.0762, 0.0678, and 0.0528, respectively, so only the preselected copula passed. Figure~\ref{fig:externalcontrol} places this control-sensitive outcome beside the excluded Bank Marketing calibration. This is a technical result, not a credit-risk model or a release approval.

\begin{figure}[t]
\centering
\includegraphics[width=0.98\linewidth]{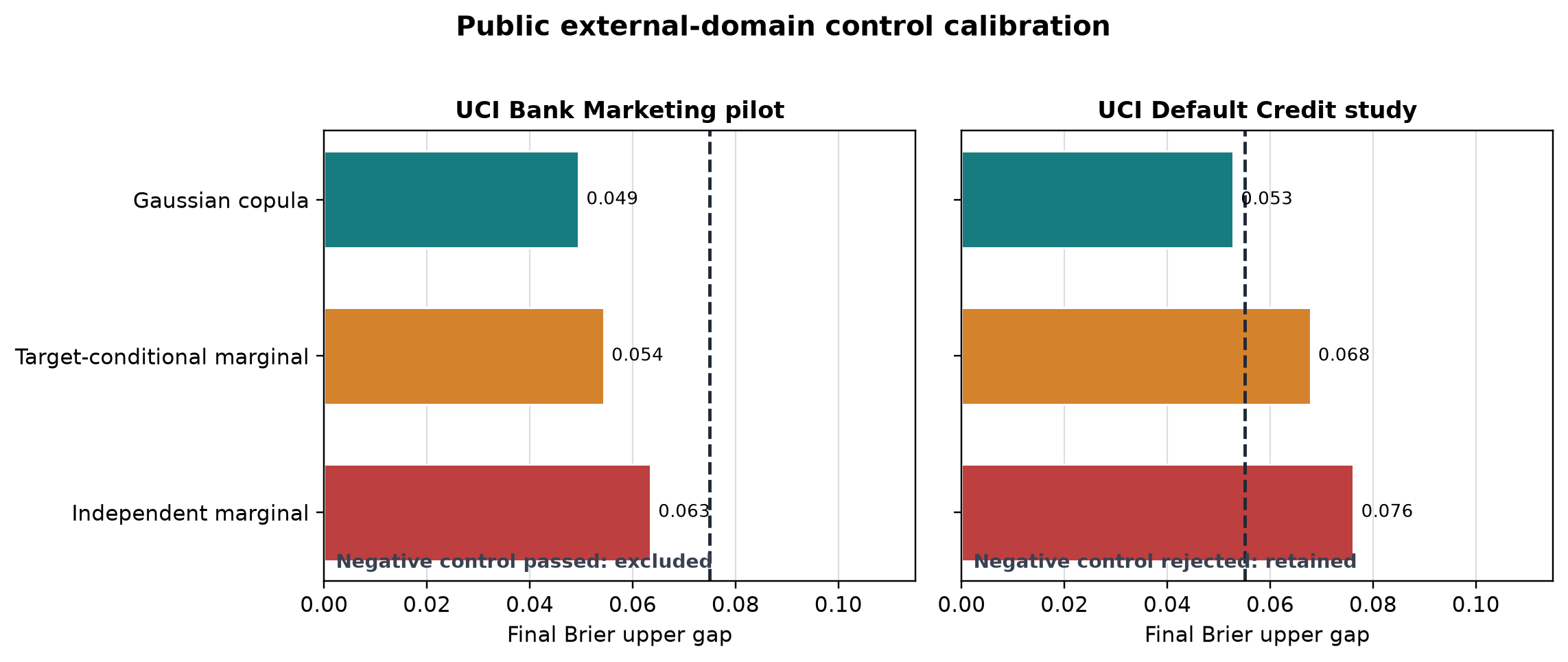}
\caption{Public external-domain control calibration. Each bar is the final scheduled-look upper bound for the overall Brier gap; the dashed line is that case's locked tolerance. The Bank Marketing pilot is retained but excluded because its negative control passed. In the separately locked Default Credit study, both marginal comparators fail and the preselected non-private Gaussian copula passes. Neither case supports a credit, privacy, fairness, legal, or deployment claim.}
\label{fig:externalcontrol}
\end{figure}
\FloatBarrier

\subsection{Fresh UCI Online Shoppers fixed-budget learned-generator stress test}

The Online Shoppers case tests the learned-baseline slot in a way that does not hide its compute budget. All four candidates were fit from the same fixed 1,000-session development subsample. Candidate artifacts and hashes were recorded before the selection and audit files were read. On the 1,330-session selection partition, Brier losses were 0.1413 for independent marginals, 0.1413 for target-conditional marginals, 0.1133 for the Gaussian copula, and 0.3184 for compact CTGAN, compared with 0.0947 for the real-development reference. The Gaussian copula was therefore selected before the final audit.

At the final 6,000-session scheduled look, Brier upper gaps were 0.1001, 0.0892, 0.0676, and 0.2842 for the same four candidates, respectively, against the locked tolerance 0.075. The independent negative control, target-conditional marginal, and compact CTGAN failed; only the Gaussian copula passed. CTGAN fit and sampled on CPU in 31.6 seconds. Figure~\ref{fig:onlineshoppersctgan} shows the pre-audit selection and final audit comparison. This is useful negative evidence: under this declared compact budget and this public e-commerce table, the learned generator did not improve the audit result. It is not a claim that CTGAN fails under every training budget, data set, or implementation. The public run's release packet remains blocked because the human integrity attestation is not asserted.

\begin{figure}[t]
\centering
\includegraphics[width=0.98\linewidth]{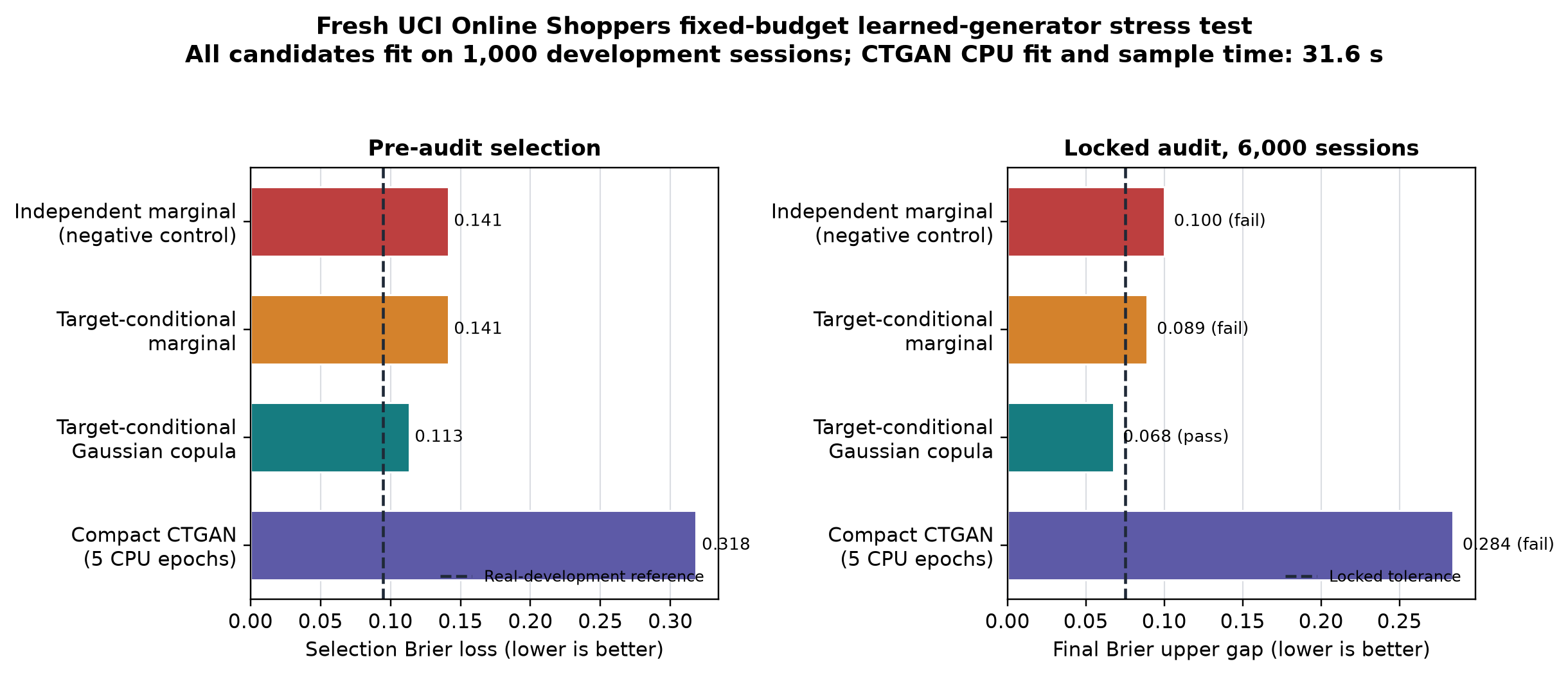}
\caption{Fresh UCI Online Shoppers fixed-budget learned-generator stress test. All candidates fit on the same 1,000-session development subsample. The left panel is the pre-audit selection comparison; the right panel is the final simultaneous Brier upper bound at 6,000 audit sessions. The compact CTGAN and the independence-breaking control fail the locked technical check. This is a compute-limited research measurement, not a general neural-model ranking or a release authorization.}
\label{fig:onlineshoppersctgan}
\end{figure}
\FloatBarrier

\subsection{UCI Adult: the pre-registered five-role challenge panel}
\label{sec:adultpanel}

The Adult case is the first study to run the challenge protocol's complete candidate panel, with CTGAN and TVAE trained under the identical CPU budget. On the 8,000-record selection partition, the Brier losses were 0.1750 for the independent marginal, 0.1254 for the target-conditional marginal, 0.1206 for the Gaussian copula, 0.2323 for the small CTGAN, and 0.2333 for the small TVAE, compared with 0.1084 for the real-development reference. The Gaussian copula had the best selection score, so it was chosen before the final audit.

At the final 16,000-record audit look, the simultaneous Brier upper gaps were 0.0959, 0.0482, 0.0457, 0.1556, and 0.1518 for the same five candidates, against the locked tolerance of 0.060 (Figure~\ref{fig:adultpanel}). The negative control failed, which the certificate requires. The target-conditional marginal and the selected Gaussian copula passed. Both small learned models failed, with estimated gaps of $+0.124$ for CTGAN and $+0.120$ for TVAE. The two learned models land close to each other and clearly apart from the simple baselines, which is exactly the comparison the matched-slot rule is meant to show. On CPU, CTGAN took 58.5 seconds to fit and sample; TVAE took 4.4 seconds. As with the Online Shoppers test, this is negative evidence under one small declared budget. It does not say that either model fails with more training, other data, or other implementations. The case is control-sensitive and its release packet stays blocked until a person makes the required integrity attestations.

\begin{figure}[t]
\centering
\includegraphics[width=0.98\linewidth]{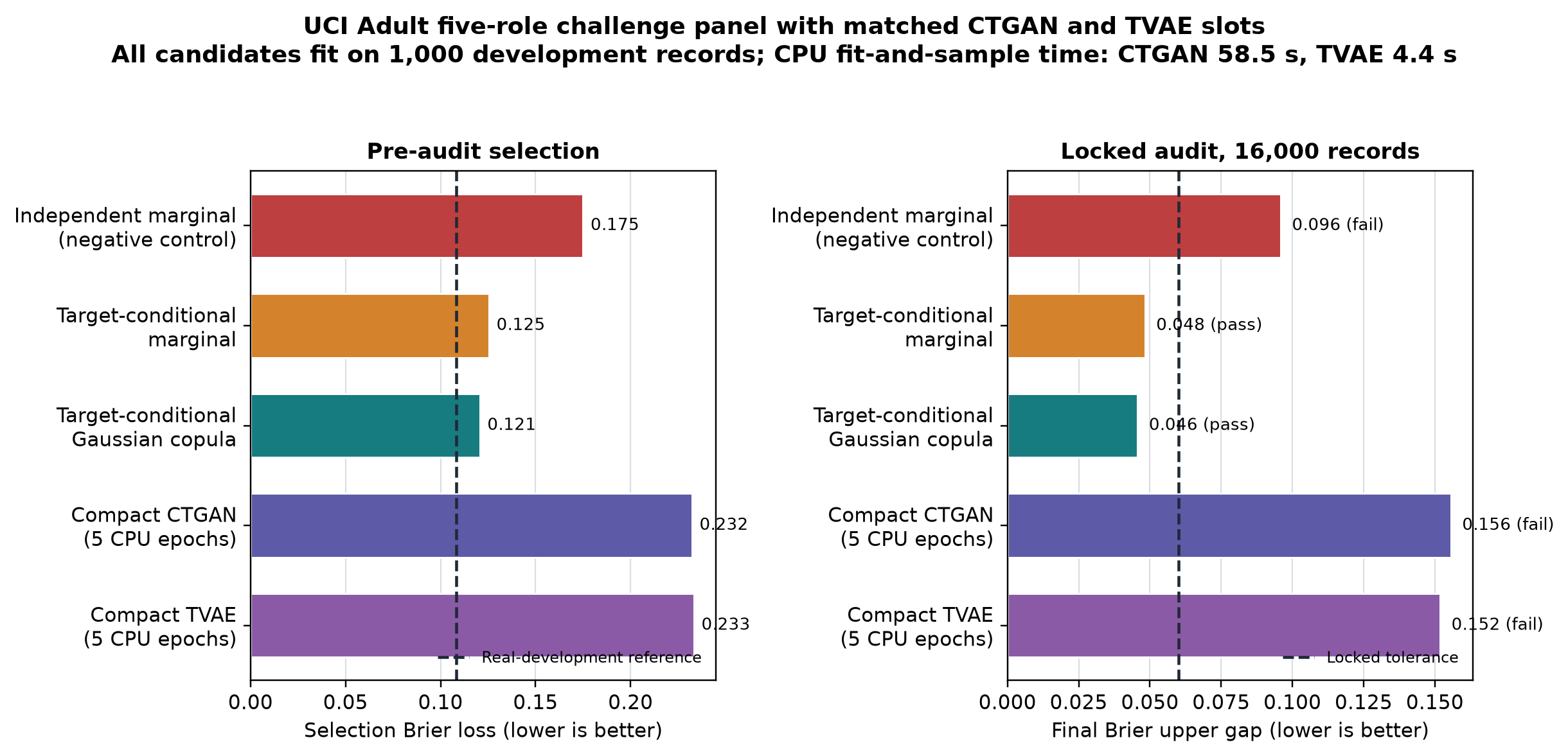}
\caption{UCI Adult five-role challenge panel with matched CTGAN and TVAE slots. Every candidate was fit on the same 1,000-record development subsample under the shared five-epoch CPU budget. Left: the selection comparison before the audit. Right: the final simultaneous Brier upper bound at 16,000 audit records. The negative control and both small learned models fail the locked check; the simple conditional baselines pass. This is a small-budget research measurement, not a model ranking or permission to release data.}
\label{fig:adultpanel}
\end{figure}
\FloatBarrier

\paragraph{Post-audit budget sensitivity.} The larger training budget improves both learned-model measurements without reversing the main comparison. CTGAN's selection Brier loss falls from 0.2323 to 0.1846 and its final nominal upper gap falls from 0.1556 to 0.1068. TVAE's selection Brier loss falls from 0.2333 to 0.1669 and its final nominal upper gap falls from 0.1518 to 0.0834. Both final values remain above 0.060. Table~\ref{tab:adultbudget} summarizes the compact and 300-epoch measurements. The comparison answers a limited concern about the five-epoch stress test: the compact budget explains part, but not all, of the observed gap on this 1,000-record fit sample. It does not repair the already-opened audit or support a general CTGAN-versus-TVAE ranking.

\begin{table}[t]
\centering
\caption{UCI Adult learned-model compute sensitivity. The 300-epoch arm is post-audit and nominal-only.}
\label{tab:adultbudget}
\begin{tabular}{llcc}
\toprule
Budget & Candidate & Selection Brier & Final upper gap \\
\midrule
Locked, 5 epochs & CTGAN & 0.2323 & 0.1556 \\
Locked, 5 epochs & TVAE & 0.2333 & 0.1518 \\
Post-audit, 300 epochs & CTGAN & 0.1846 & 0.1068 \\
Post-audit, 300 epochs & TVAE & 0.1669 & 0.0834 \\
\bottomrule
\end{tabular}
\end{table}

\paragraph{Post-audit data-scaling arm.} Budget alone does not settle the question, because every measurement so far used a 1,000-record fit sample. We therefore held the learned budget at its library default and refit the whole five-role panel at three development sizes, 1,000, 4,000, and 16,000 records, scoring each with the same fixed workflow, audit rows, schedule, and tolerance. The result is a clear crossover (Figure~\ref{fig:adultscaling} and Table~\ref{tab:adultscaling}). At 1,000 records the Gaussian copula has the smallest final nominal upper gap, 0.0394, while CTGAN reaches 0.0833 and TVAE 0.1093, so both learned models fail the 0.060 tolerance. At 4,000 records CTGAN falls to 0.0556 and passes, while TVAE at 0.0932 still fails. At 16,000 records both learned models pass, at 0.0476 for CTGAN and 0.0461 for TVAE against 0.0519 for the copula. The independent-marginal negative control fails at every size, with gaps of 0.1137, 0.1091, and 0.1129, so the audit remains correctly calibrated across the whole sweep. \Cref{sec:seedstability} repeats the sweep at three generation seeds and reports which parts of this pattern survive; the apparent ordering of the learned models against the copula at 16,000 records does not.

Two things follow, and we state both. First, the compact-budget failure reported above is mainly a small-sample effect rather than a general verdict on either architecture: give these models enough data and the same locked criterion admits them. Second, and more important for the framework, this is the behavior a useful audit should have. A criterion that could never pass a learned generator would be a broken instrument. The same fixed rule that rejects an intentionally dependence-destroying control at every sample size, and that rejects small-sample neural fits, also accepts those fits once the evidence supports them. This arm is post-audit and its bounds are nominal-only; it cannot enter the locked certificate panel, and it is a descriptive scaling measurement on one public table, not a general sample-size threshold for synthesis.

\begin{figure}[t]
\centering
\includegraphics[width=0.92\linewidth]{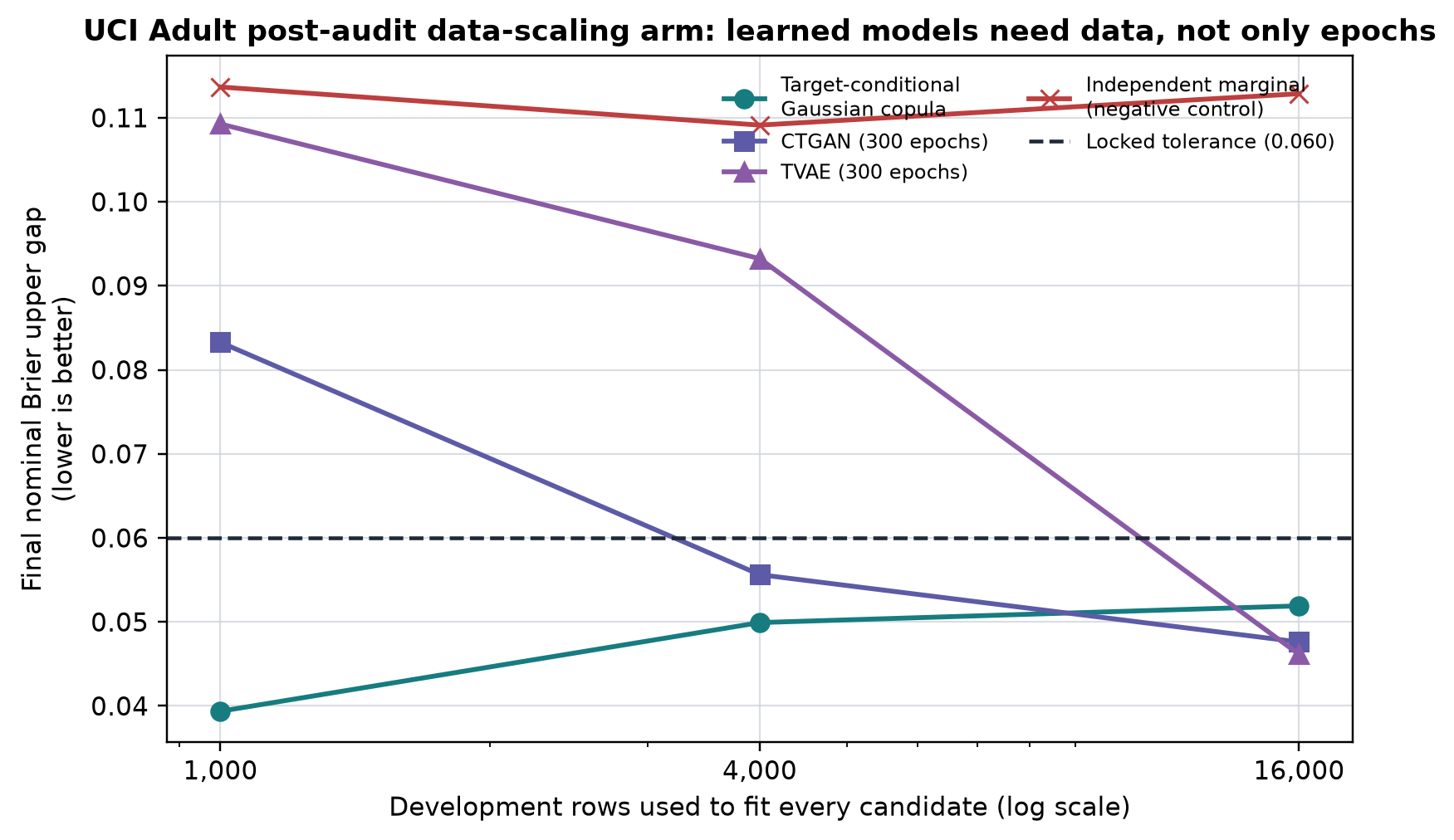}
\caption{UCI Adult post-audit data-scaling arm. Every candidate is refit at each development size with the learned budget held at its library default; bounds are nominal-only because the audit had already been inspected. The transparent copula wins at 1,000 records, CTGAN crosses the locked tolerance by 4,000, and both learned models pass and overtake the copula by 16,000 records. The negative control fails at every size, so the criterion stays calibrated throughout.}
\label{fig:adultscaling}
\end{figure}
\FloatBarrier

\begin{table}[t]
\centering
\caption{UCI Adult post-audit data-scaling arm: final nominal Brier upper gaps against the locked 0.060 tolerance. Pass or fail is nominal only; this arm makes no certificate claim.}
\label{tab:adultscaling}
\begin{tabular}{lccc}
\toprule
Candidate & 1,000 rows & 4,000 rows & 16,000 rows \\
\midrule
Independent marginal (control) & 0.1137 fail & 0.1091 fail & 0.1129 fail \\
Target-conditional Gaussian copula & 0.0394 pass & 0.0499 pass & 0.0519 pass \\
CTGAN, 300 epochs & 0.0833 fail & 0.0556 pass & 0.0476 pass \\
TVAE, 300 epochs & 0.1093 fail & 0.0932 fail & 0.0461 pass \\
\bottomrule
\end{tabular}
\end{table}
\FloatBarrier

\subsection{Which parts of the crossover survive independent seeds}
\label{sec:seedstability}

A crossover read from one generation seed per cell cannot be distinguished from seed noise. We therefore repeated the whole sweep at three independent generation seeds for every training size, fixing the question in advance: do both learned models fail unanimously at 1,000 records, pass unanimously at 16,000, and beat the copula at 16,000 under every seed?

The first two hold and the third does not. At 1,000 records both learned models fail under all three seeds, with mean upper gaps of 0.0902 for CTGAN and 0.1158 for TVAE. At 16,000 records both pass under all three seeds, at means of 0.0491 and 0.0430. The negative control fails under every seed at every size, means 0.1192, 0.1109, and 0.1111, so the calibration property is robust. But the apparent ordering against the copula at 16,000 records is not: TVAE beats the copula in all three seeds, while CTGAN does so in only two, reaching 0.0556 against the copula's 0.0521 under one seed. The single-seed reading that both learned models overtake the copula was therefore an artifact, and we withdraw it.

Dispersion is itself informative and varies by cell (\cref{tab:seedstability}). The copula is highly stable, with a standard deviation of 0.0008 to 0.0022 across sizes, because its fitting procedure is nearly deterministic given the sample. The learned models are far noisier, and most so where the evidence is thinnest: TVAE has a standard deviation of 0.0202 at 1,000 records and 0.0184 at 4,000, against 0.0027 at 16,000. That intermediate cell is where a single seed misleads most, and indeed TVAE passes in only one of three seeds at 4,000 records, where the single-seed run reported a clean failure.

The point carries beyond this table. A locked panel fixes the candidate list, the analyses, the tolerances, and the audit schedule, but a stochastic generator still has a seed, and the certificate as stated conditions on the candidate artifact, not on the training procedure that produced it. Reporting one seed per learned candidate therefore understates uncertainty about the procedure, even though it correctly describes the artifact that was locked. We recommend that future locked panels declare a seed budget per learned candidate and report the resulting dispersion alongside the certificate, and we treat the single-seed learned results elsewhere in this paper as evidence about specific artifacts, not about the generators that produced them.

\begin{table}[t]
\centering
\caption{Seed stability of the post-audit scaling arm: mean final nominal upper gap across three generation seeds, with the number of seeds passing the locked 0.060 tolerance. Bounds are nominal and no certificate is claimed.}
\label{tab:seedstability}
\begin{tabular}{lccc}
\toprule
Candidate & 1,000 rows & 4,000 rows & 16,000 rows \\
\midrule
Independent marginal (control) & 0.1192 (0/3) & 0.1109 (0/3) & 0.1111 (0/3) \\
Target-conditional Gaussian copula & 0.0391 (3/3) & 0.0503 (3/3) & 0.0522 (3/3) \\
CTGAN, 300 epochs & 0.0902 (0/3) & 0.0534 (3/3) & 0.0491 (3/3) \\
TVAE, 300 epochs & 0.1158 (0/3) & 0.0718 (1/3) & 0.0430 (3/3) \\
\bottomrule
\end{tabular}
\end{table}
\FloatBarrier

\subsection{UCI Diabetes 130-US Hospitals: an excluded health calibration}
\label{sec:diabetescalibration}

The Diabetes 130-US Hospitals study deliberately records a calibration failure rather than trying to rescue it after the audit. On the 2,000-record selection partition, Brier losses were 0.0788 for the independent marginal, 0.0795 for the target-conditional marginal, 0.0764 for the Gaussian copula, 0.2478 for compact CTGAN, and 0.2127 for compact TVAE, compared with 0.0746 for the real-development reference. The Gaussian copula was selected before audit. At the final 8,000-record look, simultaneous Brier upper gaps were 0.0475, 0.0517, 0.0453, 0.2093, and 0.1782 for the same candidates, against the locked tolerance of 0.080 (Figure~\ref{fig:diabetescalibration}). CTGAN and TVAE both fail under this matched compact budget. For the benchmark, what matters more is that the independent-marginal negative control also passes. The threshold therefore does not discriminate the declared dependence-destroying control in this case. We retain the complete locked record as \code{EXCLUDED\_NEGATIVE\_CONTROL\_PASSED}, do not retune the tolerance, and do not count the case as favorable health-domain or clinical evidence.

\begin{figure}[t]
\centering
\includegraphics[width=0.98\linewidth]{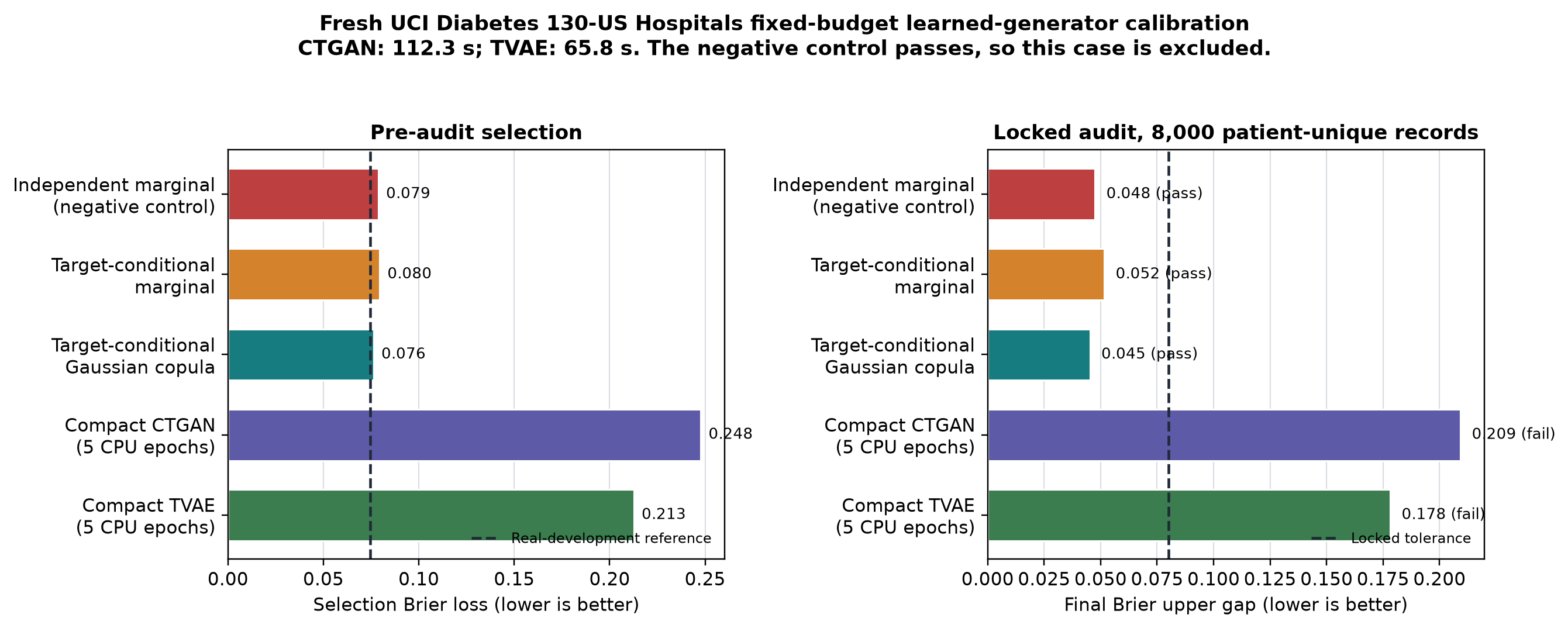}
\caption{UCI Diabetes 130-US Hospitals patient-unique health calibration. Every candidate was fit on the same 1,000-record development subsample under the locked five-epoch CPU budget. CTGAN and TVAE fail the final Brier check, but so does the case itself as a discriminator: the independent-marginal negative control remains below the locked tolerance. The entire record is consequently retained as an excluded threshold-calibration failure, not as a health, clinical, privacy, or deployment result.}
\label{fig:diabetescalibration}
\end{figure}
\FloatBarrier

\subsection{New York: external state, scale, and seed sensitivity}

The New York study is where the full-grid design earns its keep. Across the 33 locked candidates, 26 passed and 7 failed. All three independent-marginal controls failed. At the smallest 2,500-row training size, only one of three $\eps=0.5$ runs and one of three $\eps=1$ runs passed; all three $\eps=4$ runs passed. At 10,000 and 50,000 training rows, all DP replicates at all three budgets passed the declared grid. The pre-specified rare-group analysis had 3,771 audit rows. Figure~\ref{fig:nyrobustness} shows the scale and rare-group results, while Figure~\ref{fig:nyheatmap} displays every pass and hold in the 33-candidate grid. These results show that the observed behavior depends on data scale and seed in this public configuration. They do not establish a universal sample-size threshold for DP synthesis.

\begin{figure}[t]
\centering
\includegraphics[width=0.97\linewidth]{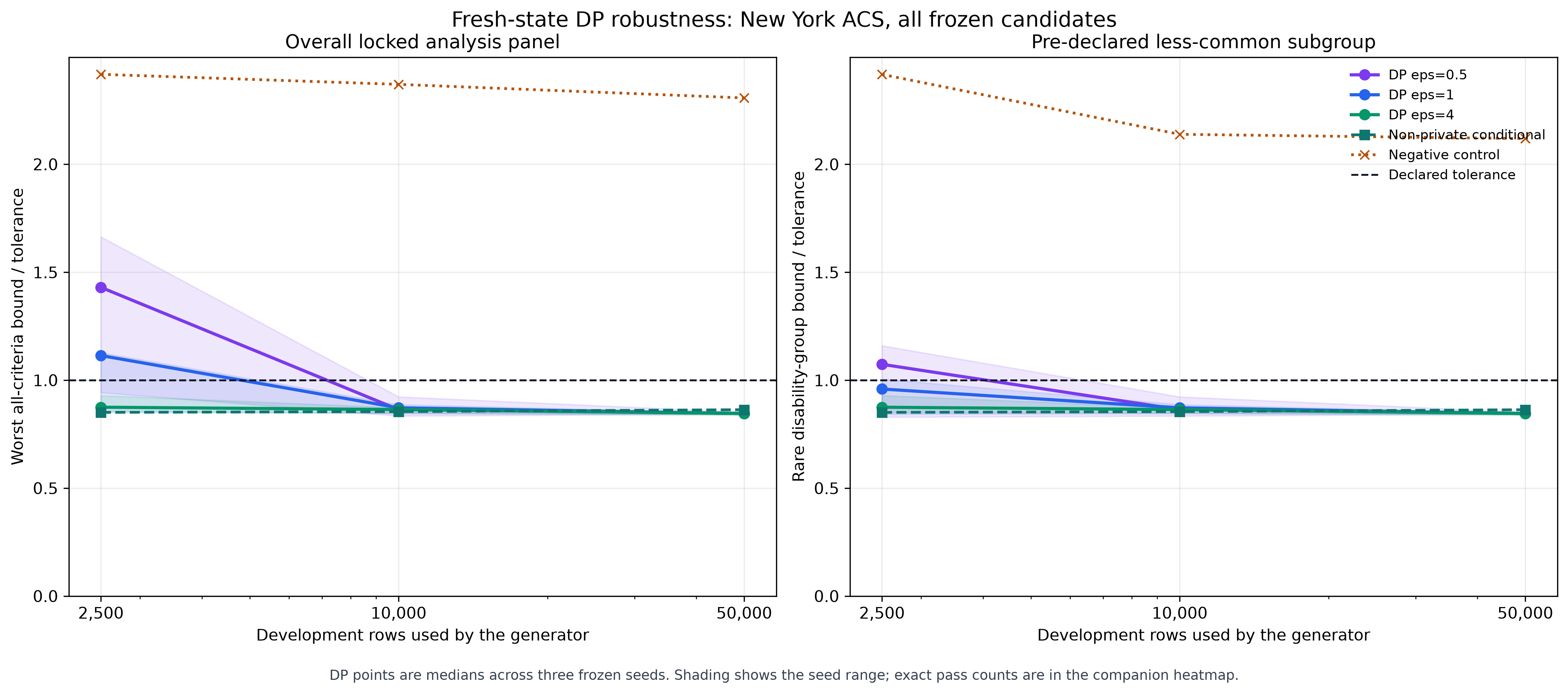}
\caption{New York robustness results across fixed training scales, privacy budgets, and seeds. The low-scale failures are reported as part of the main result, not excluded after model selection.}
\label{fig:nyrobustness}
\end{figure}

\begin{figure}[t]
\centering
\includegraphics[width=0.87\linewidth]{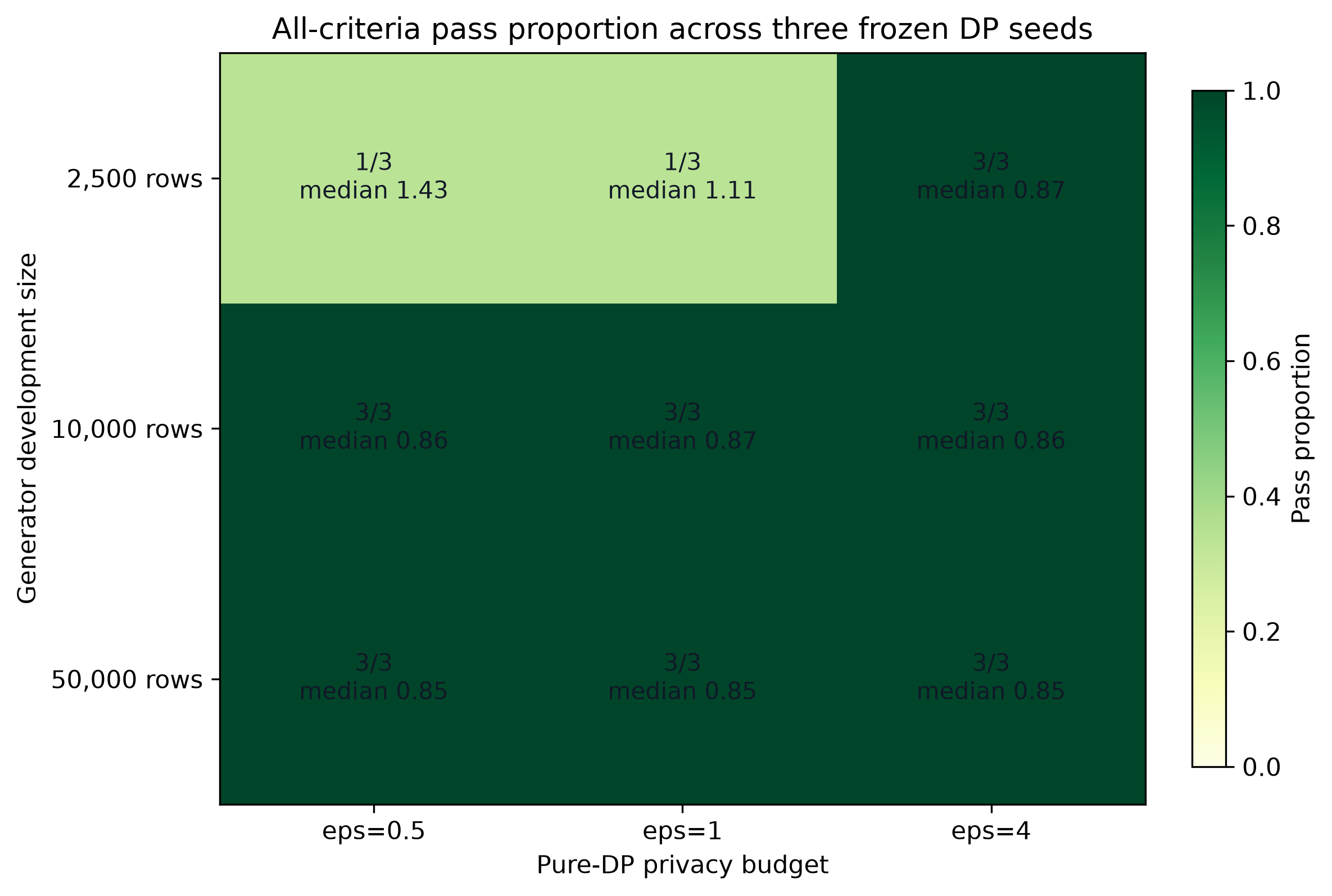}
\caption{Pass/hold heatmap for the complete 33-candidate New York grid. A pass means all declared technical checks passed in this study; a hold is not a claim that the method cannot be useful elsewhere.}
\label{fig:nyheatmap}
\end{figure}

\subsection{Florida: a risk diagnostic must detect a known leak}

The Florida RiskLab separates two questions: whether an attack battery reacts to an intentional leak, and what its result says about the remaining candidates under that named attack. The row-resampling positive control reached membership AUC 0.782 (95\% CI 0.777--0.787) and evaluation advantage 0.564. The other five candidates had AUCs from 0.502 to 0.504, with intervals that crossed 0.5. The observed attribute-prediction accuracy gaps ranged from $-0.0087$ to $-0.0005$. Table~\ref{tab:florida} reports the numerical diagnostics; Figures~\ref{fig:flmembership} and~\ref{fig:flattribute} show the membership and attribute results. These findings validate the sensitivity of this particular attack battery to the positive control. They do not demonstrate attack completeness, protection against a stronger attacker, or a general privacy guarantee.

\begin{table}[t]
\centering
\footnotesize
\caption{Florida held-out risk diagnostics. The attack is a specific full-record nearest-neighbor membership test with held-out calibration and evaluation.}
\label{tab:florida}
\setlength{\tabcolsep}{3pt}
\begin{tabular}{p{0.23\linewidth}p{0.24\linewidth}p{0.22\linewidth}p{0.22\linewidth}}
\toprule
Candidate & Membership AUC (95\% CI) & Evaluation advantage & Attribute accuracy gap \\
\midrule
Row-resample positive control & 0.782 (0.777, 0.787) & 0.5640 & $-0.0087$ \\
Independent marginal & 0.504 (0.497, 0.512) & 0.0057 & $-0.0005$ \\
Target conditional & 0.502 (0.494, 0.510) & $-0.0009$ & $-0.0039$ \\
DP conditional, $\eps=0.5$ & 0.504 (0.497, 0.512) & 0.0020 & $-0.0053$ \\
DP conditional, $\eps=1$ & 0.503 (0.496, 0.510) & 0.0009 & $-0.0077$ \\
DP conditional, $\eps=4$ & 0.502 (0.495, 0.508) & 0.0023 & $-0.0037$ \\
\bottomrule
\end{tabular}
\end{table}

\begin{figure}[t]
\centering
\includegraphics[width=0.95\linewidth]{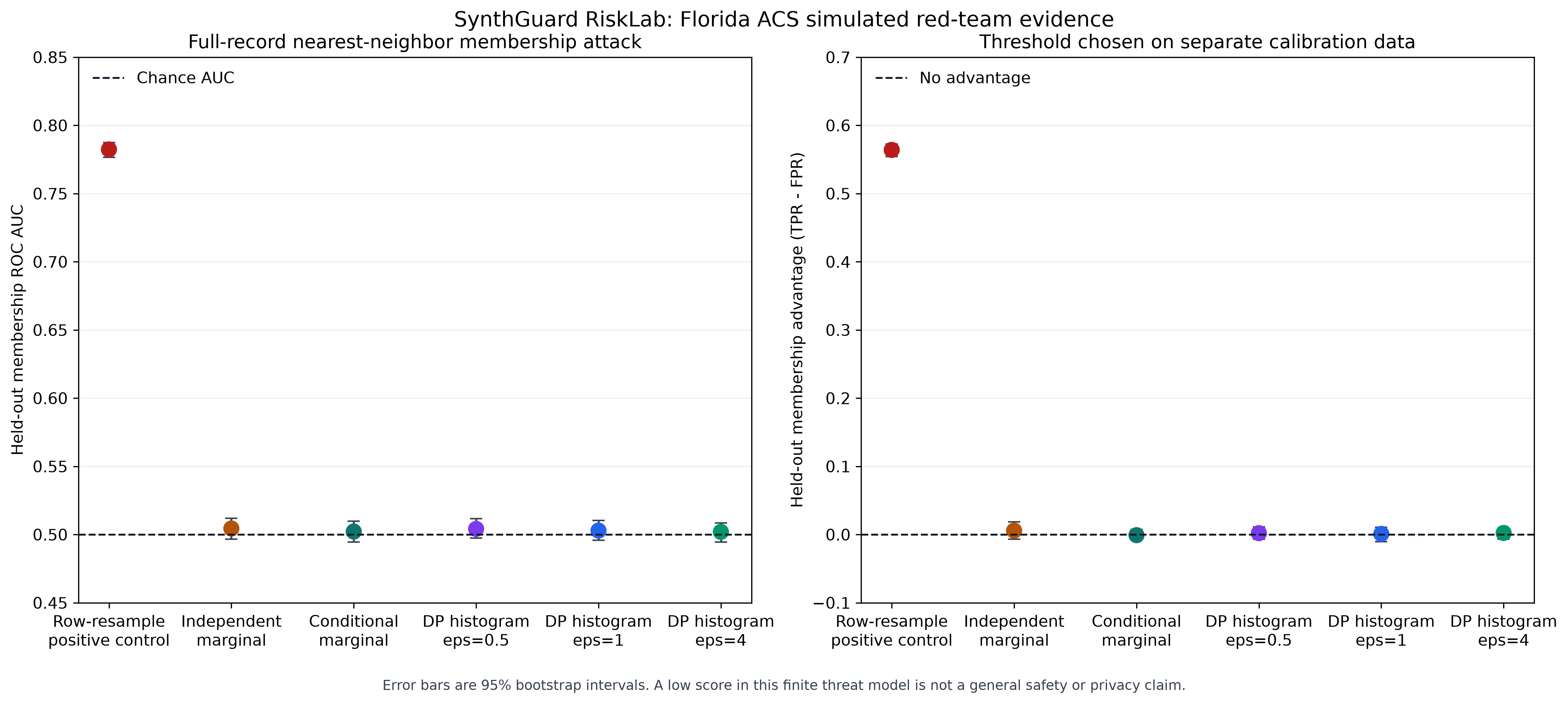}
\caption{Florida RiskLab membership diagnostic. The intentional row-resampling control is visibly separated from the remaining candidates under this exact attacker and encoding. This is a calibration check for the attack battery, not a general privacy result.}
\label{fig:flmembership}
\end{figure}

\begin{figure}[t]
\centering
\includegraphics[width=0.94\linewidth]{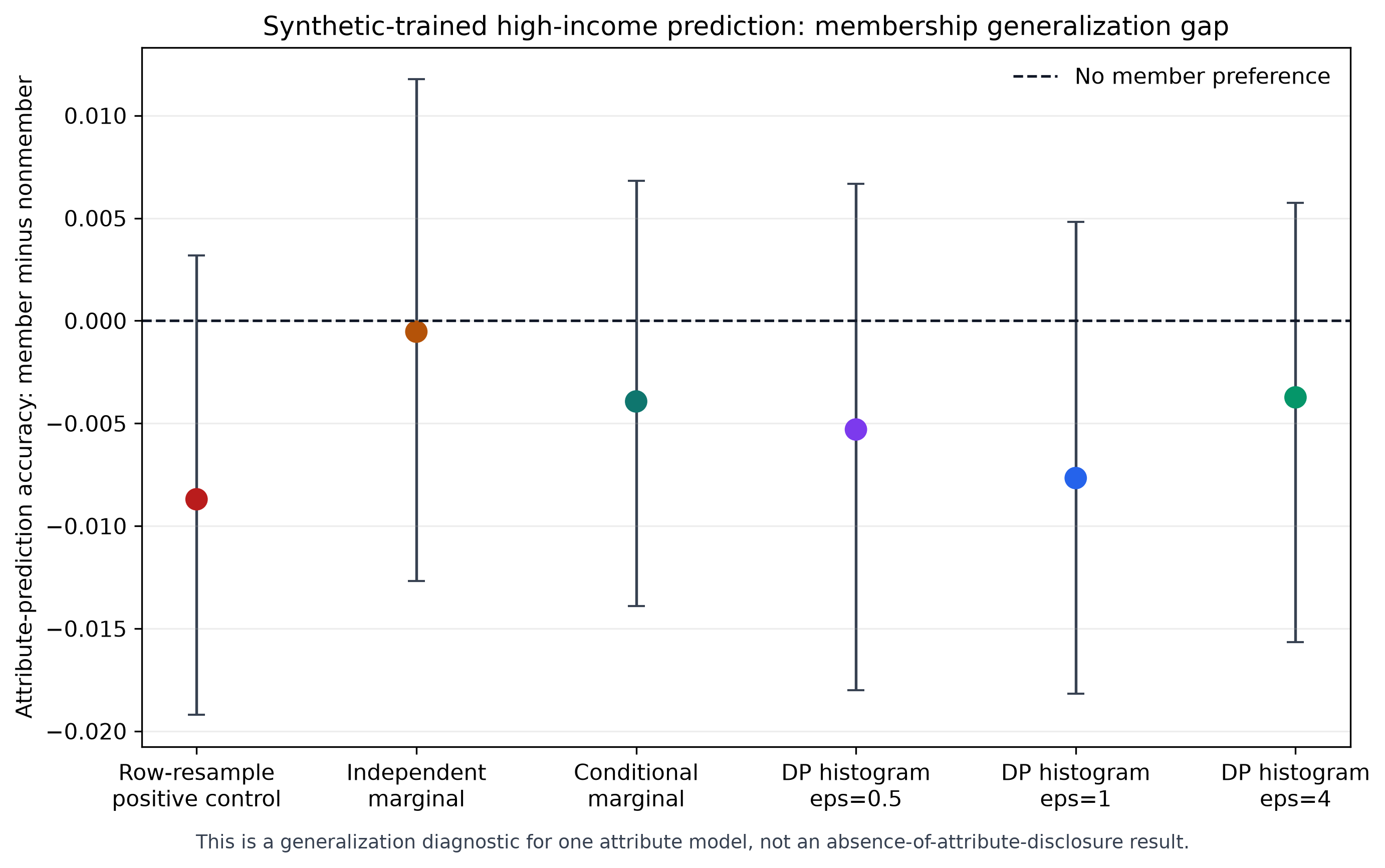}
\caption{Florida attribute-prediction generalization-gap diagnostic. The plot is reported alongside membership evidence; neither metric alone is treated as a universal privacy score.}
\label{fig:flattribute}
\end{figure}

\subsection{One attack family is a weak instrument}
\label{sec:attackfamilies}

A single attack cannot distinguish a release that leaks little from an attack that is badly matched to the release. We therefore re-scored the same locked Florida artifacts, verified by hash, under three structurally different instruments. Two are membership attacks proper, in that they predict whether a candidate real record was in the generator's training set: the locked nearest-neighbour distance, and a distance-ratio attack scoring isolation, that is how much closer the nearest synthetic record is than the next, which is invariant to the global rescaling that would flatter a raw-distance attack. The third is deliberately not a membership attack. It is a similarity discriminator trained to separate synthetic records from held-out reference records, and we label it honestly: distinguishing synthetic from real rows is principally a two-sample fidelity test, and it bears on membership only insofar as a generator memorises its training rows. All three use the locked calibration discipline, choosing thresholds on the calibration partitions and scoring the evaluation partitions once. This is a post-audit diagnostic: the locked membership numbers are unchanged and no release decision is revised. Reported AUCs are point estimates on the fixed evaluation partitions; the locked Florida record carries the bootstrap interval for the headline attack, and we do not attach intervals to the added families here.

The nearest-neighbour family reproduces the locked positive-control AUC of $0.7824$ exactly, which confirms the arm scores the same released artifacts. The distance-ratio family independently detects the intentional row-resampling control at $0.7710$, against at most $0.5023$ for every other candidate. The similarity discriminator does not: it reaches only $0.5381$ on the deliberately leaky control, against at most $0.5056$ elsewhere (\cref{fig:attackfamilies}). It orders the control first, but it is close enough to chance that we do not treat it as a detector for this release.

We report that failure rather than dropping the family, and it is the most useful observation in this subsection. The result is what the labelling predicts: a two-sample test asks whether synthetic rows look like real rows in aggregate, which is a different question from whether a particular record trained the generator, and it answers the membership question only weakly even against a memorising control. Had it been the only instrument in the battery, every candidate including that control would have scored near chance and the release would have looked uniformly safe. The positive control is what exposes the instrument as uninformative. A low score under a named attack is therefore evidence about that attack, not about the release, unless the same attack is shown to detect a leak that is known to exist.

\begin{figure}[t]
\centering
\includegraphics[width=0.96\linewidth]{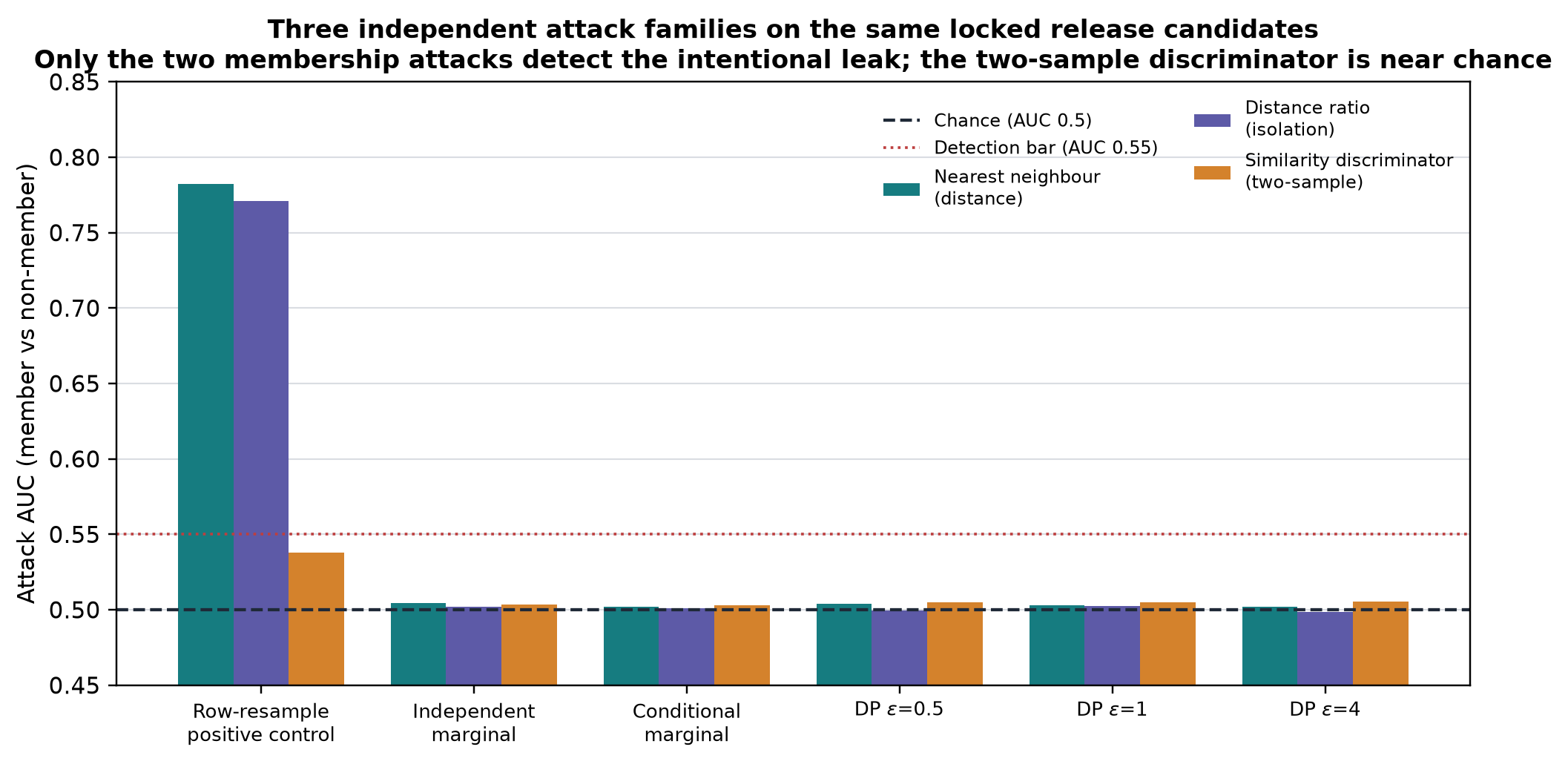}
\caption{Three independent membership-attack families applied to the same locked Florida candidates. The two distance-based families separate the intentional row-resampling control from every other candidate; the similarity discriminator, which is a two-sample fidelity test rather than a membership attack, stays near chance even on that control and is therefore reported as uninformative for this release rather than as evidence of safety.}
\label{fig:attackfamilies}
\end{figure}
\FloatBarrier

\subsection{A sealed evaluator can block a release despite passing utility checks}

The sealed-audit prototype accepts a local real training table, audit table, and synthetic table, and emits only aggregate JSON outputs. Its public demonstration uses the California data solely to make the implementation inspectable. All declared technical checks pass in that demo, but the release gate returns \code{BLOCKED}. The reasons are explicit: the required attestations for audit independence, candidate-panel locking, analysis-panel locking, and audit sampling documentation are false, and the manifest remains in draft status. No raw data, row identifiers, paths, model coefficients, or synthetic records are exported.

This is not an administrative afterthought. It is the mechanism by which a technical evaluator refuses to transform a reproducible public experiment into a fictional release approval. In a real deployment, the same code can produce a decision only after the organization supplies truthful evidence about the data lifecycle and audit separation.

\section{Supplementary Multi-Domain Temporal Robustness Tracks}
\label{sec:temporaltracks}

The primary certificate in \cref{prop:certificate} is not applied to the following public tracks. They were added to test whether the public use visibly distinguishes a structure-preserving baseline from a dependence-destroying negative control when the data have time, location, or operational dependence. Both tracks use strictly earlier development and selection windows and a later audit window. Table~\ref{tab:temporaltracks} reports both chronological comparisons. They are descriptive robustness evidence, not official v1 certificate cases.

\begin{table}[t]
\centering
\footnotesize
\caption{Supplementary public temporal robustness tracks. Losses are shown as reference / row bootstrap / independent marginal; lower is better. The row-bootstrap candidate reuses original records and makes no privacy claim. Neither track receives the row-independent certificate.}
\label{tab:temporaltracks}
\setlength{\tabcolsep}{3pt}
\begin{tabular}{p{0.28\linewidth}p{0.29\linewidth}p{0.35\linewidth}}
\toprule
Track and audit unit & Split sizes (development / selection / audit) & Audit losses: reference / bootstrap / independent marginal \\
\midrule
CDC BRFSS prevalence, calendar year & 20,222 / 5,134 / 4,641 & 0.03781 / 0.03788 / 0.18651 \\
Chicago taxi operations, calendar window & 19,284 / 15,399 / 14,641 & 0.10192 / 0.10266 / 0.23703 \\
\bottomrule
\end{tabular}
\end{table}

The BRFSS track uses public state-year prevalence records from 2019--2024 \cite{cdc2024brfss}. It predicts a reported prevalence value using year, reporting area, question, and breakout fields. The selected row-bootstrap candidate was close to the real-development reference on the held-out 2024 window, while the independent-marginal negative control had much higher normalized absolute error. The rows share state, question, subgroup, and time structure. Therefore, the record is marked \code{NOT_ELIGIBLE_FOR_V1_IID_CERTIFICATE}.

The Chicago track uses public 2024 taxi-trip records \cite{chicago2024taxi}. It predicts whether a recorded tip is at least 20 percent of fare from post-trip operational fields, after excluding trip identifiers, taxi identifiers, and coordinates. The bootstrap baseline was again close to the real-development reference on the March audit window, while the independent-marginal control failed visibly. The records retain time, company, spatial, and unobserved vehicle dependence, so this result is also marked \code{NOT_ELIGIBLE_FOR_V1_IID_CERTIFICATE}.

These tracks broaden the public engineering testbed to public health and operations, but they do not broaden the primary scientific claim. They compare only an intentionally unsafe row bootstrap with a negative control, not a panel of model-based synthesizers. Their purpose is to keep a temporal or clustered case visible in the benchmark while the required cluster-aware or time-series certificate is developed and independently checked.

\paragraph{Multi-domain challenge status.} The public challenge registry now names seven non-ACS records across financial services, consumer finance, e-commerce, population health, transport operations, and census income microdata. Their evidence tiers are deliberately unequal: the Default Credit, Online Shoppers, and Adult records are reproducible L1 row-level technical audits; the Bank Marketing and Diabetes 130-US Hospitals records are kept visible as excluded calibration failures; and the BRFSS and Chicago records are time-based diagnostics, not row-IID certificate cases. The pre-specified learned-model panel rule asks for CTGAN and TVAE under the same CPU-only five-epoch, batch-500 budget, next to the negative control, the conditional marginal baseline, and the statistical baseline. It has now been run once, on the newly locked Adult case (Section~\ref{sec:adultpanel}). The separate Diabetes run uses the same matched learned slots but is excluded because its negative control passed. The 300-epoch Adult sensitivity is a post-audit robustness record and does not change the locked panel or its evidence tier. These are the authors' own L1 records and do not count toward the external milestone. No after-the-fact TVAE comparison is added to the already-locked Online Shoppers study.

\section{What the Benchmark Does Not Show}
\label{sec:limits}

The paper's strongest claims are deliberately narrow. First, the completed row-level technical studies cover one public survey family and five public UCI records; the retained Bank Marketing and Diabetes 130-US Hospitals cases remain excluded calibration failures. The supplementary health and transport records make the challenge registry genuinely multi-domain, but they are not pooled with the row-IID certificate. Second, the public ACS and UCI demonstrations use a row- or session-level iid technical diagnostic. They do not validate survey-weighted inference, household-aware splitting, temporal transfer, a confidential release setting, or real-world credit, advertising, clinical, or consumer-targeting decisions. Third, the reported generators are simple marginals, a regularized Gaussian copula, exact-DP histograms, and fixed-budget learned-model tests (CTGAN on Online Shoppers; CTGAN and TVAE together on Adult and Diabetes). This is not a claim to cover all modern synthesizers. The 300-epoch Adult sensitivity narrows the learned-model gaps, and the post-audit data-scaling arm shows both learned models passing the same locked tolerance by 16,000 fit records. Both arms reuse a previously inspected audit, carry nominal-only bounds, and make no certificate claim. A three-seed repetition (\cref{sec:seedstability}) confirms the pass-fail crossover and the control's failure at every size, but withdraws the single-seed reading that both learned models overtake the copula: CTGAN does so in only two of three seeds. They establish that the compact-budget failures are largely a small-sample effect on this table; they do not rank either architecture in general, and the crossover is one measurement on one public data set, not a transferable sample-size threshold for synthesis. More broadly, learned candidates elsewhere in this paper are reported from a single generation seed, so they are evidence about the specific locked artifact, not about how the generator behaves on average. No after-the-fact TVAE comparison is added to any earlier locked study. Fourth, the Florida attacks are named empirical diagnostics. A low score under those diagnostics is not proof of non-identifiability or protection against an unmodeled attacker. The battery now spans three instruments instead of one, but \cref{sec:attackfamilies} shows the limit of that defence directly: the third is a two-sample fidelity test rather than a membership attack, and it sits near chance even on a deliberately leaky control, so breadth helps only when each instrument is separately shown to detect a known leak. The two genuine membership attacks both work by proximity in a fixed encoding, and none of the three addresses linkage against external auxiliary data. Fifth, the finite-sample certificate requires a locked panel and a valid audit split; it does not repair model selection or p-hacking performed after audit access. Sixth, an independent-block certificate is available only when the sampling design justifies independent blocks. The temporal certificate of \cref{prop:temporal} extends the framework to time-ordered audits, but it inherits a declared dependence horizon that no algorithm can verify from data, it changes the estimand to an equally weighted block mean while discarding buffered rows, and it is validated here only on a simulated regime process. It therefore does not retroactively certify the descriptive BRFSS and Chicago tracks, which were locked without a declared horizon. Seventh, the variance-adaptive and anytime instruments of \cref{sec:varianceadaptive} are reported as a measurement on already-locked evidence; they revise no recorded outcome, and selecting a bound family after seeing which one passes is exactly the adaptation \cref{prop:reuseimpossibility} forbids. Finally, \cref{tab:priorart} is a documented-scope comparison and the executable scenarios test workflow coverage; they are not an empirical ranking of the numerical accuracy, speed, or completeness of SDMetrics, SynthEval, or Anonymeter.

The repository therefore retains failures and invalidations. The New York low-scale failures are included in the main results. The prior CTGAN exploratory run is marked invalid because its claimed replication was not record-disjoint. The sealed demonstration is blocked despite its passing technical grid. These records are part of the contribution: an evaluation benchmark should make it harder, not easier, to hide unfavorable or ineligible evidence.

\section{Reproducibility and Adoption}

The public package contains locked configurations, source provenance, candidate metadata, synthetic artifacts, checksums, figures, implementation code, unit tests, and run guides for the two California studies, New York, Florida, five UCI technical records, the Adult compute sensitivity, and the two supplementary chronological tracks. The software is archived under CC BY 4.0 at Zenodo \cite{opoku2026synthguardsoftware}. Version 0.1.13 carries DOI \href{https://doi.org/10.5281/zenodo.21503397}{10.5281/zenodo.21503397}; version 0.1.14, which accompanies this manuscript and adds the variance-adaptive, temporal, reconstruction, and multi-family attack components, is archived as a subsequent version under the same Zenodo concept record, which always resolves to the latest version. It installs a \code{synthguard} command with audit planning, bounded-gap auditing, and fail-closed registry validation. A container, example configuration, Python 3.11/3.13 continuous-integration workflow, generator-adapter contract, and machine-readable public release registry are included. The test suite checks split integrity, scheduled row- and block-bound calculations, the variance-adaptive and anytime-valid bounds together with Monte Carlo coverage of each, the buffered-block temporal certificate, the reconstruction attack and a regression guard against the seeding collision that would trivially defeat it, the certification-margin planner, the audit-reuse boundary, generator manifests, the external challenge freeze, DP histogram composition, learned-baseline artifacts, attack calibration, release-gate behavior, and independent-rerun labels. The Online Shoppers example remains blocked because reproducibility does not replace accountable human release approval.

Two non-author rerun records for the immutable v0.1.11 archive have been received. They differ in how independently they can be verified, and we describe that openly instead of averaging over it. One record comes from a reviewer at a separate institution. It was produced on a different machine and a different Python version (3.12.13 rather than the locked 3.13); it selected the same Gaussian-copula candidate, its release packet stayed blocked, and it matched the locked aggregate result. It also honestly discloses one failed automated test caused by a numerical difference of about one part in a billion (0.284226616625272 against the locked 0.28422661477241873), which is the kind of tiny, environment-dependent difference a genuine independent run produces, not a sign of a problem. We treat this as one completed independent rerun; supplying its original machine-readable report and logs would strengthen it from a signed-form record to a fully machine-verified one. The second record comes from a reviewer at the lead author's own institution, which we disclose rather than obscure. It supplies a complete machine report produced on a verifiably different computer: a different processor model, a different host name, half as many cores as the authors' machine, and a third distinct interpreter, Python 3.14.2. All thirteen packet hashes matched, the same candidate was selected, and the release packet stayed blocked. It too discloses deviations under the newer interpreter, namely one failed automated test and two locked result files that differ numerically. Its signed form, however, left the environment and outcome fields blank, so the form is not yet self-contained on its own.

An earlier submission from a third person was withdrawn from the evidence base entirely. Its machine report could not be distinguished from a run on the authors' own computer, showing the same operating-system build, the same Python version, and no separate hardware signature. Rather than count it weakly, we park it: the files are retained unaltered for provenance and support no claim. The pre-registered milestone in the frozen challenge specification asks for two independent reruns of the Online Shoppers case from two distinct institutions. On its numeric terms that bar is now met: two non-author reruns have been received, from two distinct institutions, each produced on a verifiably distinct machine and each on the required case. We state that plainly because it is what the specification asked for and it was fixed before any record arrived.

We nevertheless decline to declare the milestone complete, for two reasons, both of which we would rather state than bury. First, each record still carries a completeness gap: one lacks its original machine-readable report, and the other has a signed form whose environment and outcome fields were left blank. Second, only one of the two institutions is external to the authors; the other is the lead author's own, which we disclose rather than let a reader infer independence we do not have. A fail-closed registry validator in the release enforces this refusal automatically, so the public status cannot drift upward without the missing evidence arriving. Reviewer names, signatures, and forms are kept in a private evidence archive and are not part of the public release.

\Cref{tab:evidencestatus} makes the external boundary explicit. A signed narrative record and a machine-readable conformance record are separate evidence objects; neither is silently inferred from the other.

\begin{table}[t]
\centering
\small
\caption{External evidence status at manuscript freeze. ``Complete'' refers only to the named artifact set, not to release approval.}
\label{tab:evidencestatus}
\begin{tabularx}{\linewidth}{p{0.25\linewidth}p{0.22\linewidth}Y}
\toprule
Record & Status & Claim boundary \\
\midrule
Non-author rerun A & Complete private attestation plus matched machine report & Same-institution relationship disclosed; release packet remained blocked. \\
Non-author rerun B & Qualified private attestation; machine JSON pending & Different institution; version deviation and failed numerical test retained; not exact conformance. \\
Steel-energy challenge & Frozen, unassigned, outcome-free & No external-validation result and no iid certificate claim. \\
Software DOI & Published & Version 0.1.13 archived at \href{https://doi.org/10.5281/zenodo.21503397}{10.5281/zenodo.21503397}. \\
\bottomrule
\end{tabularx}
\end{table}

For an institution, adoption should follow a simple sequence: define the permitted use and tolerances; establish a training/selection/audit separation at the appropriate person, household, site, or time unit; lock candidate and analysis panels; run the audit within the organization's controlled environment; review the complete evidence packet; and only then consider a release decision under the organization's legal, privacy, and governance requirements. \sg{} supplies technical evidence; it does not replace the organization responsible for the release.

\paragraph{Intake is not universal model selection.} To make this workflow easier to start, the repository includes \textsc{SynthGuard Scout}, a local schema-and-constraint intake tool. Given a declared use card, privacy requirement, audit unit, target, identifier columns, and aggregate table profile, it returns a shortlist of candidate \emph{evaluations}: an independent-marginal negative control, transparent conditional baselines, a regularized Gaussian copula where applicable, a finite-domain DP option where a public binning rule is viable, and a clearly marked slot for a fixed-budget learned baseline. It does not assert that a model is best for arbitrary data. The fresh Online Shoppers stress test illustrates why: its compact CTGAN is a valid candidate to evaluate but does not pass this fixed audit. One result is not a neural-generator ranking. For time- or cluster-dependent records, Scout returns \code{ROUTING_REQUIRED} and refuses to route the case to the row-iid certificate. Its output is a planning artifact; the subsequent candidate panel, workflow, tolerance panel, and audit schedule still have to be locked before protected-audit outcomes are observed.

The remaining external steps are specific: obtain the second rerunner's original JSON and logs or repeat that rerun in the locked environment; have an unrelated third institution run the frozen steel-energy challenge without author audit access; and publish the returned favorable or unfavorable aggregate packet unchanged. The software archive is public, but the strict external milestone remains pending and the steel-energy challenge carries no result claim.

\section*{Data and Code Availability}

The public software, benchmark configurations, aggregate result records, tests, container files, and reproduction instructions are archived as \sg-ReleaseBench version 0.1.13 at \href{https://doi.org/10.5281/zenodo.21503397}{https://doi.org/10.5281/zenodo.21503397}. The public archive does not contain reviewer forms, signatures, email addresses, local file paths, credentials, confidential audit rows, or restricted row-level data. Public source data are obtained from the cited providers using the documented retrieval and integrity-check procedures.

\section{Conclusion}

Synthetic-data evaluation needs a defensible middle ground between ``looks realistic'' and ``safe to release.'' \sg-ReleaseBench provides one: declare the intended use and tolerance, lock the candidate and analysis panels, evaluate them on a protected audit with simultaneous uncertainty, calibrate the workflow with controls, report privacy evidence for a named threat model, and preserve a separate human release decision.

The empirical record shows why each part matters. Some transparent candidates certify for declared uses; compact learned models do not under the tested budgets; one entire case is excluded when its negative control certifies; a known leaky control is detected by the held-out attack; and the sealed prototype blocks an otherwise passing utility grid when lifecycle evidence is missing. These are use-specific outcomes, not a universal generator ranking.

The contribution is therefore a reproducible method for making release evidence specific, falsifiable, and honest about uncertainty and governance. Its strongest evidence includes the failures it refuses to hide. A benchmark earns trust not by making every candidate pass, but by making unsupported release claims difficult to produce.

\appendix
\section{Protocol Checklist}
\label{app:checklist}

Table~\ref{tab:protocolchecklist} lists the minimum fields that make a \sg{} evidence record auditable.

\begin{table}[h]
\centering
\small
\caption{Minimum evidence fields in a \sg{} record.}
\label{tab:protocolchecklist}
\begin{tabularx}{\linewidth}{p{0.27\linewidth}Y}
\toprule
Field & Required record \\
\midrule
Declared use & Named analysis, target population, real reference workflow, bounded loss, and tolerance. \\
Candidate integrity & Generator family, version, data access, fixed seed, selection rule, and candidate hash. \\
Audit integrity & Unit of splitting, audit size, relationship to fitting, scheduled looks, and access controls. \\
Uncertainty & Per-analysis gap, bound, alpha allocation, sample size, decision, and all holds. \\
Risk evidence & Threat model, attacker information and access, calibration split, evaluation split, metric, and uncertainty interval. \\
Mechanism claim & Exact privacy definition, adjacency, public-domain policy, composition, and any exclusions. \\
Governance & Attestations, lifecycle status, release-gate status, and human decision record. \\
\bottomrule
\end{tabularx}
\end{table}
\FloatBarrier

\section{The Anytime Confidence Sequence in Full}
\label{app:confidencesequence}

\Cref{sec:varianceadaptive} reports interval widths from a predictable plug-in empirical Bernstein confidence sequence. We give the exact construction so the numbers can be reproduced without reading the implementation.

Fix a cell $(g,a)$ and let $X_1,X_2,\ldots$ be its paired gaps, bounded in $[-B_a,B_a]$. Rescale to the unit interval by $U_t=(X_t+B_a)/(2B_a)$, so a bound on the mean of $U$ maps back by $\mu_X=2B_a\mu_U-B_a$. All quantities below are computed on the $U$ scale at the cell level $\delta=\alpha/(|\mathcal G||\mathcal A|)$; no spending across looks is required, because the sequence is valid at every $t$ simultaneously.

Define the predictable estimates, each using only the first $t-1$ observations,
\[
 \widehat\mu_t=\frac{\tfrac12+\sum_{s<t}U_s}{t},\qquad
 \widehat\sigma^2_t=\frac{\tfrac14+\sum_{s<t}(U_s-\widehat\mu_s)^2}{t},
\]
with $\widehat\mu_1=1/2$ and $\widehat\sigma^2_1=1/4$; the constants $1/2$ and $1/4$ are a prior observation at the midpoint of the unit interval. Set the predictable betting fraction
\[
 \lambda_t=\min\left\{\ \sqrt{\frac{2\log(2/\delta)}{\widehat\sigma^2_t\,t\,\log(1+t)}}\ ,\ \tfrac12\right\},
\]
where the cap $\lambda_t\le1/2$ keeps the process well defined. With $\psi_E(\lambda)=\{-\log(1-\lambda)-\lambda\}/4$, the interval after $t$ observations is centred and half-widened by
\[
 \widehat c_t=\frac{\sum_{s\leq t}\lambda_sU_s}{\sum_{s\leq t}\lambda_s},
 \qquad
 w_t=\frac{\log(2/\delta)+\sum_{s\leq t}(U_s-\widehat\mu_s)^2\psi_E(\lambda_s)}{\sum_{s\leq t}\lambda_s},
\]
and the reported statement is $\mu_U\in[\widehat c_t-w_t,\ \widehat c_t+w_t]$ for all $t$ simultaneously with probability at least $1-\delta$. Mapping back to the original scale multiplies both $\widehat c_t$ and $w_t$ by $2B_a$ and shifts the centre by $-B_a$.

Validity is the predictable plug-in empirical Bernstein construction of Waudby-Smith and Ramdas \cite{waudbysmith2024betting}: the process $\prod_{s\leq t}\exp\{\lambda_s(U_s-\mu_U)-(U_s-\widehat\mu_s)^2\psi_E(\lambda_s)\}$ is a non-negative supermartingale with initial value one under the true mean, so Ville's inequality bounds the probability that it ever exceeds $1/\delta$. Because $\lambda_t$ and $\widehat\mu_t$ depend only on the past, no adaptivity correction is needed, and the sequence may be stopped at a data-dependent time.

Two properties explain the widths in \cref{fig:boundfamily}. The half-width adapts to the realised variance through $\widehat\sigma^2_t$, so concentrated gap vectors give narrow intervals; and the $\log(1+t)$ factor in $\lambda_t$ is what buys time-uniformity, which is why the sequence can be tighter than a scheduled bound that must spend $\alpha$ across looks in advance.

\section{Certificate Assumptions and Failure Modes}
\label{app:assumptions}

The proposition requires more than a held-out CSV file. The loss, tolerance, candidate set, scheduled looks, and audit unit must be fixed before audit evaluation. If a protected audit is repeatedly used to alter preprocessing, tune a generator, select a subgroup, or search for a favorable metric, the stated union-bound guarantee no longer accounts for that adaptation. A fresh audit or a more appropriate adaptive-inference design is then needed.

Bounded losses are a practical choice. Brier loss already lies in $[0,1]$; clipped log loss and normalized descriptive error are bounded by declaration. Bounding makes the uncertainty rule transparent but can be conservative. It is therefore appropriate to report the actual gap, the half-width, and the tolerance together, not only a binary label.

The iid assumption is conditional on the frozen pre-audit workflow. It must be replaced or augmented when the target sampling unit is a household, patient, site, time block, or spatial cluster. When blocks can be justified as independent before audit, \cref{prop:blockcertificate} operates on equally weighted block means and reports the effective number of blocks. It does not turn dependent time windows into independent observations. The public ACS examples use row-level diagnostics only. A sealed deployment should split at the unit that matches the risk and inference target.

\section{Completed Study Inventory}
\label{app:inventory}

The complete study-by-study inventory, including every evidence boundary and the outcome-free steel-energy challenge, is supplied as Supplementary Tables S1--S2. Moving the inventory out of the main article keeps the result narrative focused without removing unfavorable or excluded records from the release.

\section{Exploratory CTGAN Correction}
\label{app:ctgan}

The repository preserves an earlier exploratory ACS CTGAN artifact together with a correction record. A subsequent overlap audit found that the alleged 60,000-row fresh replication was not record-disjoint from the original 100,000-row sample: 28,098 rows overlapped. That work is marked \code{INVALIDATED_NOT_RECORD_DISJOINT} and is excluded from the results and conclusions of this manuscript. Retaining the correction is intentional. It records an integrity failure that future benchmark submissions must be able to detect.

\end{document}